\newtheorem{theorem}{Theorem}[section]
\newtheorem{thm}{Theorem}
\newtheorem{lemma}[theorem]{Lemma}
\newtheorem{proposition}{Proposition}[section]
\theoremstyle{definition}
\newtheorem{example}[theorem]{Example}
\theoremstyle{remark}
\newtheorem{remark}[theorem]{Remark}
\numberwithin{equation}{section}
\newcommand{\abs}[1]{\lvert#1\rvert}
\begin{document}

\title[Quantum ergodicity on regular graphs]{Quantum ergodicity on regular graphs}

\author{Nalini Anantharaman}
\address{IRMA, Universit\'{e} de Strasbourg, 7 rue Ren\'e-Descartes, 67084 STRASBOURG
CEDEX, FRANCE}
\email{anantharaman@math.unistra.fr}
\thanks{}
\ 

\newcommand{\nwc}{\newcommand}
\nwc{\nwt}{\newtheorem}
\nwt{coro}{Corollary}
\nwt{ex}{Example}
\nwt{prop}{Proposition}
\nwt{defin}{Definition}


\nwc{\mf}{\mathbf} 
\nwc{\blds}{\boldsymbol} 
\nwc{\ml}{\mathcal} 


\nwc{\lam}{\lambda}
\nwc{\del}{\delta}
\nwc{\Del}{\Delta}
\nwc{\Lam}{\Lambda}
\nwc{\elll}{\ell}

\nwc{\IA}{\mathbb{A}} 
\nwc{\IB}{\mathbb{B}} 
\nwc{\IC}{\mathbb{C}} 
\nwc{\ID}{\mathbb{D}} 
\nwc{\IE}{\mathbb{E}} 
\nwc{\IF}{\mathbb{F}} 
\nwc{\IG}{\mathbb{G}} 
\nwc{\IH}{\mathbb{H}} 
\nwc{\IN}{\mathbb{N}} 
\nwc{\IP}{\mathbb{P}} 
\nwc{\IQ}{\mathbb{Q}} 
\nwc{\IR}{\mathbb{R}} 
\nwc{\IS}{\mathbb{S}} 
\nwc{\IT}{\mathbb{T}} 
\nwc{\IZ}{\mathbb{Z}} 
\def\bbbone{{\mathchoice {1\mskip-4mu {\rm{l}}} {1\mskip-4mu {\rm{l}}}
{ 1\mskip-4.5mu {\rm{l}}} { 1\mskip-5mu {\rm{l}}}}}
\def\bbleft{{\mathchoice {[\mskip-3mu {[}} {[\mskip-3mu {[}}{[\mskip-4mu {[}}{[\mskip-5mu {[}}}}
\def\bbright{{\mathchoice {]\mskip-3mu {]}} {]\mskip-3mu {]}}{]\mskip-4mu {]}}{]\mskip-5mu {]}}}}
\nwc{\setK}{\bbleft 1,K \bbright}
\nwc{\setN}{\bbleft 1,\cN \bbright}
 \newcommand{\Lim}{\mathop{\longrightarrow}\limits}


\nwc{\va}{{\bf a}}
\nwc{\vb}{{\bf b}}
\nwc{\vc}{{\bf c}}
\nwc{\vd}{{\bf d}}
\nwc{\ve}{{\bf e}}
\nwc{\vf}{{\bf f}}
\nwc{\vg}{{\bf g}}
\nwc{\vh}{{\bf h}}
\nwc{\vi}{{\bf i}}
\nwc{\vI}{{\bf I}}
\nwc{\vj}{{\bf j}}
\nwc{\vk}{{\bf k}}
\nwc{\vl}{{\bf l}}
\nwc{\vm}{{\bf m}}
\nwc{\vM}{{\bf M}}
\nwc{\vn}{{\bf n}}
\nwc{\vo}{{\it o}}
\nwc{\vp}{{\bf p}}
\nwc{\vq}{{\bf q}}
\nwc{\vr}{{\bf r}}
\nwc{\vs}{{\bf s}}
\nwc{\vt}{{\bf t}}
\nwc{\vu}{{\bf u}}
\nwc{\vv}{{\bf v}}
\nwc{\vw}{{\bf w}}
\nwc{\vx}{{\bf x}}
\nwc{\vy}{{\bf y}}
\nwc{\vz}{{\bf z}}
\nwc{\bal}{\blds{\alpha}}
\nwc{\bep}{\blds{\epsilon}}
\nwc{\barbep}{\overline{\blds{\epsilon}}}
\nwc{\bnu}{\blds{\nu}}
\nwc{\bmu}{\blds{\mu}}
\nwc{\bet}{\blds{\eta}}



\nwc{\bk}{\blds{k}}
\nwc{\bm}{\blds{m}}
\nwc{\bM}{\blds{M}}
\nwc{\bp}{\blds{p}}
\nwc{\bq}{\blds{q}}
\nwc{\bn}{\blds{n}}
\nwc{\bv}{\blds{v}}
\nwc{\bw}{\blds{w}}
\nwc{\bx}{\blds{x}}
\nwc{\bxi}{\blds{\xi}}
\nwc{\by}{\blds{y}}
\nwc{\bz}{\blds{z}}


\nwc{\cA}{\ml{A}}
\nwc{\cB}{\ml{B}}
\nwc{\cC}{\ml{C}}
\nwc{\cD}{\ml{D}}
\nwc{\cE}{\ml{E}}
\nwc{\cF}{\ml{F}}
\nwc{\cG}{\ml{G}}
\nwc{\cH}{\ml{H}}
\nwc{\cI}{\ml{I}}
\nwc{\cJ}{\ml{J}}
\nwc{\cK}{\ml{K}}
\nwc{\cL}{\ml{L}}
\nwc{\cM}{\ml{M}}
\nwc{\cN}{\ml{N}}
\nwc{\cO}{\ml{O}}
\nwc{\cP}{\ml{P}}
\nwc{\cQ}{\ml{Q}}
\nwc{\cR}{\ml{R}}
\nwc{\cS}{\ml{S}}
\nwc{\cT}{\ml{T}}
\nwc{\cU}{\ml{U}}
\nwc{\cV}{\ml{V}}
\nwc{\cW}{\ml{W}}
\nwc{\cX}{\ml{X}}
\nwc{\cY}{\ml{Y}}
\nwc{\cZ}{\ml{Z}}

\nwc{\fA}{\mathfrak{a}}
\nwc{\fB}{\mathfrak{b}}
\nwc{\fC}{\mathfrak{c}}
\nwc{\fD}{\mathfrak{d}}
\nwc{\fE}{\mathfrak{e}}
\nwc{\fF}{\mathfrak{f}}
\nwc{\fG}{\mathfrak{g}}
\nwc{\fH}{\mathfrak{h}}
\nwc{\fI}{\mathfrak{i}}
\nwc{\fJ}{\mathfrak{j}}
\nwc{\fK}{\mathfrak{k}}
\nwc{\fL}{\mathfrak{l}}
\nwc{\fM}{\mathfrak{m}}
\nwc{\fN}{\mathfrak{n}}
\nwc{\fO}{\mathfrak{o}}
\nwc{\fP}{\mathfrak{p}}
\nwc{\fQ}{\mathfrak{q}}
\nwc{\fR}{\mathfrak{r}}
\nwc{\fS}{\mathfrak{s}}
\nwc{\fT}{\mathfrak{t}}
\nwc{\fU}{\mathfrak{u}}
\nwc{\fV}{\mathfrak{v}}
\nwc{\fW}{\mathfrak{w}}
\nwc{\fX}{\mathfrak{x}}
\nwc{\fY}{\mathfrak{y}}
\nwc{\fZ}{\mathfrak{z}}


\nwc{\tA}{\widetilde{A}}
\nwc{\tB}{\widetilde{B}}
\nwc{\tE}{E^{\vareps}}
\nwc{\tk}{\tilde k}
\nwc{\tN}{\tilde N}
\nwc{\tP}{\widetilde{P}}
\nwc{\tQ}{\widetilde{Q}}
\nwc{\tR}{\widetilde{R}}
\nwc{\tV}{\widetilde{V}}
\nwc{\tW}{\widetilde{W}}
\nwc{\ty}{\tilde y}
\nwc{\teta}{\tilde \eta}
\nwc{\tdelta}{\tilde \delta}
\nwc{\tlambda}{\tilde \lambda}
\nwc{\ttheta}{\tilde \theta}
\nwc{\tvartheta}{\tilde \vartheta}
\nwc{\tPhi}{\widetilde \Phi}
\nwc{\tpsi}{\tilde \psi}
\nwc{\tmu}{\tilde \mu}

\nwc{\To}{\longrightarrow} 

\nwc{\ad}{\rm ad}
\nwc{\eps}{\epsilon}
\nwc{\ep}{\epsilon}
\nwc{\vareps}{\varepsilon}

\def\bom{\mathbf{\omega}}
\def\om{{\omega}}
\def\ep{\epsilon}
\def\tr{{\rm tr}}
\def\diag{{\rm diag}}
\def\Tr{{\rm Tr}}
\def\i{{\rm i}}
\def\mi{{\rm i}}
\def\e{{\rm e}}
\def\sq2{\sqrt{2}}
\def\sqn{\sqrt{N}}
\def\vol{\mathrm{vol}}
\def\defi{\stackrel{\rm def}{=}}
\def\t2{{\mathbb T}^2}
\def\s2{{\mathbb S}^2}
\def\hn{\mathcal{H}_{N}}
\def\shbar{\sqrt{\hbar}}
\def\A{\mathcal{A}}
\def\N{\mathbb{N}}
\def\T{\mathbb{T}}
\def\R{\mathbb{R}}
\def\RR{\mathbb{R}}
\def\Z{\mathbb{Z}}
\def\C{\mathbb{C}}
\def\O{\mathcal{O}}
\def\Sp{\mathcal{S}_+}
\def\Lap{\triangle}
\nwc{\lap}{\bigtriangleup}
\nwc{\rest}{\restriction}
\nwc{\Diff}{\operatorname{Diff}}
\nwc{\diam}{\operatorname{diam}}
\nwc{\Res}{\operatorname{Res}}
\nwc{\Spec}{\operatorname{Spec}}
\nwc{\Vol}{\operatorname{Vol}}
\nwc{\Op}{\operatorname{Op}}
\nwc{\supp}{\operatorname{supp}}
\nwc{\Span}{\operatorname{span}}

\nwc{\dia}{\varepsilon}
\nwc{\cut}{f}
\nwc{\qm}{u_\hbar}

\def\hto0{\xrightarrow{\hbar\to 0}}
\def\htoo{\stackrel{h\to 0}{\longrightarrow}}
\def\rto0{\xrightarrow{r\to 0}}
\def\rtoo{\stackrel{r\to 0}{\longrightarrow}}
\def\ntoinf{\xrightarrow{n\to +\infty}}

\providecommand{\abs}[1]{\lvert#1\rvert}
\providecommand{\norm}[1]{\lVert#1\rVert}
\providecommand{\set}[1]{\left\{#1\right\}}

\nwc{\la}{\langle}
\nwc{\ra}{\rangle}
\nwc{\lp}{\left(}
\nwc{\rp}{\right)}

\nwc{\bequ}{\begin{equation}}
\nwc{\be}{\begin{equation}}
\nwc{\ben}{\begin{equation*}}
\nwc{\bea}{\begin{eqnarray}}
\nwc{\bean}{\begin{eqnarray*}}
\nwc{\bit}{\begin{itemize}}
\nwc{\bver}{\begin{verbatim}}

%\nwc{\eal}{\end{align}}
\nwc{\eequ}{\end{equation}}
\nwc{\ee}{\end{equation}}
\nwc{\een}{\end{equation*}}
\nwc{\eea}{\end{eqnarray}}
\nwc{\eean}{\end{eqnarray*}}
\nwc{\eit}{\end{itemize}}
\nwc{\ever}{\end{verbatim}}

\newcommand{\defeq}{\stackrel{\rm{def}}{=}}

\subjclass{58J51, 60B20}
\date{}

\keywords{large random graphs, laplacian eigenfunctions, quantum ergodicity, semiclassical measures}

\begin{abstract} We give three different proofs of the main result of \cite{ALM}, establishing quantum ergodicity -- a form of delocalization --
for eigenfunctions of the laplacian on large regular graphs of fixed degree. These three proofs are much shorter than the original one, quite different from one another, and we feel
that each of the four proofs sheds a different light on the problem. The goal of this exploration is to find a proof that could be adapted for other
models of interest in mathematical physics, such as the Anderson model on large regular graphs, regular graphs with weighted edges, or possibly certain models of non-regular graphs.
A source of optimism in this direction is that we are able to extend the last proof to the case of {\em{anisotropic}} random walks on large regular graphs.

\end{abstract}

\maketitle
\section{Introduction} 
\subsection{} ``Quantum ergodicity'' usually refers to the study of the delocalization of eigenfunctions of the laplacian, on a compact Riemannian manifold, in the high-energy limit. What is usually called the Quantum Ergodicity Theorem (or Shnirelman theorem) is the following:

\bigskip
{\bf Quantum Ergodicity Theorem (Shnirelman theorem, \cite{Sni, CdV85, Zel87}).} Let $(M, g)$ be a compact Riemannian manifold, with the metric normalized so that $\Vol(M) = 1$. Call $\Delta$ the Laplace-Beltrami operator on $M $. Assume that the geodesic flow of $M $ is {\em ergodic} with respect to the Liouville measure. Let $(\psi_n)_{n\in \IN}$ be an orthonormal basis of $L^2(M, g)$ made of eigenfunctions of the laplacian~:
$$\Delta\psi_n=-\lambda_n \psi_n,\qquad \lambda_n\leq \lambda_{n+1}\To +\infty.$$
Let $a$ be a continuous function on $M $ such that $\int_M a(x)d\Vol(x)=0$. Then
\begin{equation}\label{e:variance}\frac{1}{N(\lambda)}\sum_{n, \lambda_n\leq \lambda}|\la \psi_n, a\psi_n\ra_{L^2(M)}|^2 \Lim_{\lambda\To +\infty}0\end{equation}
where the normalizing factor is the eigenvalue counting function $$N(\lambda)=|\{n, \lambda_n\leq \lambda\}| .$$

Note that $\la \psi_n, a\psi_n\ra_{L^2(M)}=\int_M a(x)|\psi_n(x)|^2d\Vol(x).$
Equation \eqref{e:variance} implies that there exists a subset $S\subset \IN$ of density $1$ such that
the sequence of measures $(|\psi_n(x)|^2d\Vol(x))_{n\in S}$ converges weakly to the uniform measure $d\Vol(x)$.

Actually, the full statement of the theorem says that
there exists a subset $S\subset \IN$ of density $1$ such that
\begin{equation}\label{e:subsequenceOp}\la \psi_n, A\psi_n\ra \Lim_{n\To +\infty, n\in S} \int_{SM} \sigma^0(A) dL\end{equation} 
for every pseudodifferential operator $A$ of order $0$ on $M $. On the right-hand side, $\sigma^0(A)$ is the principal symbol of $A$, that is a function on the sphere bundle $SM$, and $L$ is the normalized Liouville measure, that is, the ``uniform measure'' on $SM$.
 
\bigskip

Graphs are convenient toy-models to study, rigorously or numerically, the localization/delocalization properties of eigenfunctions of Schr\"odinger operators. However, for the so-called ``quantum graphs'', that is, $1$-dimensional CW-complexes with $\Delta=\frac{d^2}{dx^2}$ on the edges and suitable ``Kirchhoff'' matching conditions on the vertices, ``quantum ergodicity'' never holds (at least for rationally independent lengths of the edges \cite{CdV13}). See also \cite{BKW04, KMW03, Gnu10, BKS07} for other results pertaining to eigenvalue or eigenfunction statistics on (families of) compact quantum graphs.

In \cite{ALM}, a result of ``quantum ergodicity'' was obtained for the first time in a completely discrete setting, namely for the discrete laplacian on a family of discrete graphs. The results are recalled in Section \ref{s:results}; the method employed in \cite{ALM} relies heavily on the Fourier transformation on infinite regular trees, giving the impression that the proof is very specific to {\em{regular}} graphs (for which each vertex has the same number of neighbours).
However, the statement of the theorem
makes sense for arbitrary sequences of graphs, so that it is natural to try to find a proof that does not use the Fourier transform and thus could be adapted to other models
of interest in mathematical physics (non-regular graphs, regular graphs with weighted adjacency matrices, Anderson model...).
The purpose of this note is to provide three different proofs of the main result of \cite{ALM}, recalled below as Theorems \ref{t:main} and \ref{t:gen}. These proofs are considerably shorter than the one already published, and in principle they look more susceptible of an adaptation to the non-regular models mentioned above. This optimistic view is supported by the results of the last section, where we are able to extend the last proof to the case of {\em{anisotropic}} (but homogeneous) random walks on large regular graphs (see Theorem \ref{t:anis}). This proof uses some spectral analysis
on trees, but actually enlightens the assumptions under which one can hope to prove quantum ergodicity with our method~: it has a chance to work if one studies eigenfunctions of a {\em{nearest neighbour}} self-adjoint operator
on a graph {\em{looking locally like a tree}}, in a region of the spectrum where the spectral measure for the operator on the limiting infinite tree is {\em{absolutely continuous}}. Adaptation to other models satisfying these properties seems at hand, but would require substantial additional work, and is deferred to separate publications. 

It is important to note that we deal with {\em{deterministic}} sequences of (regular) graphs, satisfying some geometric conditions
(expansion and convergence to a tree in the Benjamini-Schramm topology). Our results apply in particular to random regular graphs,
but also to to explicit deterministic families of graphs, for instance the Ramanujan graphs constructed in \cite{LPS88}, or the Cayley graphs of
${\rm SL}_2(\Bbb F_p)$ considered in \cite{BG}.

\subsection{\label{s:results}}Consider a sequence of $(q+1)$-regular connected graphs $(G_N)_{N\in\IN}$.
More precisely, we assume that $G_N$ is a quotient of the $(q+1)$-regular tree ${\mathfrak{X}}$ by a group of automorphisms $\Gamma_N$, that is, $G_N= \Gamma_N\backslash {\mathfrak{X}}$, where the elements of $\Gamma_N$ act without fixed points\footnote{This assumption does not seem crucial, but is used in one calculation in the paper.} on the vertices of ${\mathfrak{X}}$.
We write $V({\mathfrak{X}})$ and $E({\mathfrak{X}})$ respectively for the vertices and the (non-oriented) edges of ${\mathfrak{X}}$;
and we denote by $V_N =\Gamma_N\backslash V({\mathfrak{X}})$ the vertices of $G_N$, $E_N= \Gamma_N\backslash E({\mathfrak{X}})$
the (non-oriented) edges of $G_N$. We will work in the limit $|V_N|\To +\infty$, and for notational simplicity we will assume that $|V_N|=N$.

When needed we will denote by $B({\mathfrak{X}})$ the set of oriented edges of $ {\mathfrak{X}}$ (each element of $E({\mathfrak{X}})$ corresponds to two elements of $B({\mathfrak{X}})$). We denote by $B_N=\Gamma_N\backslash B( {\mathfrak{X}})$ the set of oriented edges in $G_N$. There is a map $B({\mathfrak{X}})\To V({\mathfrak{X}})\times V({\mathfrak{X}})$, $e\mapsto (o(e), t(e))$ (origin and terminus of $e$), and an involution $e\mapsto \hat e$ of  $B({\mathfrak{X}})$ (reversal of edges) such that $o(\hat e)=t(e)$ for all $e$. These maps are also well-defined on the quotient $G_N$.

We will use the notation $d_{\mathfrak{X}}(x, y)$ for the distance between two vertices $x, y$ of the tree ${\mathfrak{X}}$, and $d_{G_N}(x, y)$ for the distance between two vertices $x, y$ of the graph $G_N$; most of the time, when it does not cause any confusion, we will simply use the same notation $d(x, y)$ for both. The notation $B(x, r)$ will stand for the ball of center $x$ and radius $r$, either for $x$ in the tree ${\mathfrak{X}}$ or in the graph $G_N$. We will denote by $\tau(r)=(q+1)q^{r-1}$ the cardinality of the sphere of center $x$ and radius $r$ in the tree ${\mathfrak{X}}$, and $\tilde \tau (r)$ the cardinality of the ball of center $x$ and radius $r$, that is, $\tilde\tau(r)=1+(q+1)\sum_{1\leq k\leq r}q^{k-1}$.
  
  Consider the adjacency operator defined on functions on $V({\mathfrak{X}})$
by
\begin{equation}\label{e:laplacian}
 \cA f(x)= \sum_{{x\sim y} } f(y).
\end{equation}
It is related to the discrete laplacian by 
\begin{equation}\cA -(q+1)I=\Delta.\label{e:qI}
\end{equation}
The operator $\cA$ preserves the space of $\Gamma_N$-invariant functions (that may be identified to functions on $V_N$).
Since we assumed that $\Gamma_N$ acts without fixed vertices, $\cA$ is self-adjoint on $\ell^2(V_N)$ with the uniform measure on the finite set $V_N$.

We will assume the following conditions on our sequence of graphs~:

\bigskip

(EXP) The sequence of graphs is a family of expanders. More precisely, there exists $\beta>0$ such that the spectrum of $(q+1)^{-1}\cA$ on $\ell^2(V_N)$ is contained in $\{1\}\cup [-1+\beta, 1-\beta]$ for all $N$.

\bigskip

(BST) For all $R$, $$\frac{|\{x\in V_N, \rho(x)<R\}|}{N}\Lim_{N\To \infty} 0$$
where $\rho(x)$ is the injectivity radius at $x$ (meaning the largest $\rho$ such that the ball $B(x, \rho)$ in $G_N$ is a tree).

\bigskip

(BST) means that our sequence of graphs converges to a tree in the sense of Benjamini-Schramm \cite{BS}.
 In particular, this condition is satisfied if the girth goes to infinity. Under the condition (BST), one has convergence of the spectral measure according to the Kesten-McKay law \cite{Kes59, McK81}.
 Call $(\lambda^{(N)}_1,\ldots, \lambda^{(N)}_N)$ the eigenvalues of $\cA$ on $G_N$;
 for any interval $I\subset \R$,
 $$\frac1N|\{ j, \lambda^{(N)}_j\in I\}| \Lim_{N\To +\infty} \int_I {\mathrm{m}}(\lambda)d\lambda$$
 where ${\mathrm{m}}(\lambda)$ is a completely explicit probability density, namely the spectral measure of a Dirac mass $\delta_o$ 
 for the operator $\cA$, defined by its moments
 \begin{equation}\label{e:planch}\int \lambda^k {\mathrm{m}}(\lambda) d\lambda=\la \delta_o, \cA^k\delta_o\ra_{\ell^2(\mathfrak X)}\end{equation}
 (this is the number of paths in the tree starting at $o$ and returning to $o$ after $k$ steps).
 We won't need the expression of ${\mathrm{m}}$ here, but let us recall that it is smooth and positive on $(-2\sqrt{q}, 2\sqrt{q})$ and vanishes elsewhere.
 In particular, most of the $\lambda^{(N)}_j$ are in $(-2\sqrt{q}, 2\sqrt{q})$, an interval strictly smaller than $[-(q+1), q+1]$ (this part of the spectrum is called the {\em{tempered spectrum}}).

\bigskip

The main result of \cite{ALM} is recalled below as Theorem \ref{t:main}, or in its stronger form as Theorem \ref{t:gen}. 

\begin{thm}\cite{ALM} \label{t:main} Let $(G_N)=(V_N, E_N)$ be a sequence of $(q+1)$-regular graphs with $|V_N| =N$. Assume that $(G_N)$ satisfies (BST) and (EXP).

Let $(\psi^{(N)}_1,\ldots, \psi^{(N)}_N)$ be an orthonormal eigenbasis of eigenfunctions of $\cA$ in $\ell^2(V_N)$.


Let $a_N : V_N \To \C$ be a sequence of functions such that  $\sup_N \sup_{x\in V_N} |a_N(x)|\leq 1.$ Define $\la {a_N}\ra=\frac1N
\sum_{x\in V_N} a_N(x)$

Then 
\begin{equation}\label{e:qvar}\frac1{N } \sum_{j=1}^N \left| \la \psi^{(N)}_j, a_N\psi^{(N)}_j\ra_{\ell^2(V_N)} - \la {a_N}\ra\right|^2\Lim_{N\To +\infty} 0.\end{equation}
\end{thm}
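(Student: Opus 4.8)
The plan is to follow the classical Shnirelman strategy, transplanted to the graph setting. The left-hand side of \eqref{e:qvar} is the averaged quantum variance, and the key point is that if we replace the multiplication operator $a_N$ by a self-adjoint operator that commutes, or almost commutes, with $\cA$ — in particular by a polynomial in $\cA$, or by any fixed "radial" operator built from the distance function on the tree — then the variance can be controlled by a \emph{spatial} variance, which goes to zero because of the ergodicity of the simple random walk on the graph, a consequence of (EXP). So first I would introduce, for each fixed radius $r$, the operator $\cA_r$ acting on $\ell^2(V_N)$ by $\cA_r f(x)=\sum_{d(x,y)=r} f(y)$ (the $r$-sphere average), which makes sense because of (BST): for $N$ large, on the bulk of vertices $B(x,r)$ is a tree, so $\cA_r$ is, up to the negligible contribution of short-injectivity-radius points, a fixed polynomial $P_r(\cA)$ in the adjacency operator. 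These operators are the graph analogue of pseudodifferential operators of order zero, and their "symbol" lives on the space of non-backtracking paths / the boundary of the tree.

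Next I would run the following three reductions. (i) \textbf{Reduction to mean zero:} subtract $\la a_N\ra$; harmless. (ii) \textbf{Invariance / averaging over the flow:} for a test operator $K_N$ built as a finite combination of the $\cA_r$, one has $\la\psi_j, K_N\psi_j\ra=\la\psi_j, K_N^{(T)}\psi_j\ra$ exactly, where $K_N^{(T)}$ is obtained by propagating $K_N$ with the non-backtracking walk (or, more simply in this regime, one uses that $\psi_j$ is an eigenfunction so that conjugation by $\cA$ does nothing), and this lets us replace $K_N$ by its "time average", which is closer to a multiple of the identity plus a small remainder. (iii) \textbf{Cauchy–Schwarz / trace bound:} bound $\frac1N\sum_j|\la\psi_j,K_N\psi_j\ra-\la k_N\ra|^2\le \frac1N\operatorname{Tr}\big((K_N-\la k_N\ra I)(K_N-\la k_N\ra I)^*\big)$ after the averaging, and compute this trace explicitly: it is a sum over closed paths in $G_N$, which by (BST) is asymptotically the same as the corresponding quantity on the tree, and by (EXP) the non-constant part decays. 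The arbitrary bounded function $a_N$ is then handled by approximating the multiplication operator $M_{a_N}$ in the relevant (Hilbert–Schmidt-type, normalized) norm by operators of the above "radial" type — this is where one needs that $M_{a_N}$, after subtracting its mean and after the flow-averaging, has small normalized Hilbert–Schmidt norm, using again expansion to see that the off-diagonal structure is killed, and the density of states to see that the spectral parameter stays in the tempered band $(-2\sqrt q, 2\sqrt q)$ where the limiting spectral measure is absolutely continuous.

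The main obstacle, and the place where the regular-graph structure really enters, is step (iii) together with the approximation of $M_{a_N}$: one must show that a general diagonal (multiplication) operator can be well approximated, in the appropriate averaged sense and after propagation, by the commutative algebra generated by $\cA$ — equivalently, that the "quantum variance" of $a_N$ is essentially the classical variance of $a_N$ along random-walk trajectories. Concretely, the difficulty is to establish an Egorov-type / hyperbolic-dispersion estimate on the tree: the propagated operator $U^{-t}M_{a_N}U^{t}$ (with $U$ the non-backtracking shift) has matrix coefficients supported at distance $\sim t$, and averaging over $t\le T$ with $T=T(N)\to\infty$ slowly produces cancellation at rate $q^{-T}\cdot(\text{number of paths})$, which is exactly where the exponential growth $\tau(r)=(q+1)q^{r-1}$ of the tree and the spectral gap $\beta$ must be balanced against each other. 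Making this balance quantitative — choosing $T$ as a function of $N$ and of the injectivity radii guaranteed by (BST) so that the error terms in (BST), (EXP) and the Egorov approximation all go to zero simultaneously — is the technical heart of the argument; everything else is bookkeeping with the Kesten–McKay density \eqref{e:planch} and Cauchy–Schwarz.
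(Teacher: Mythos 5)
Your outline does capture the general philosophy shared by the paper's arguments (invariance of the quantum variance under commutation with $\cA$, replacement of the test operator by a propagated average, a Hilbert--Schmidt/trace bound, with (BST) controlling the comparison with the tree), but the two steps that actually make this scheme close are missing or misidentified. First, the claimed exact identity $\la \psi_j, K\psi_j\ra=\la \psi_j, K^{(T)}\psi_j\ra$ for $K^{(T)}$ obtained by non-backtracking propagation is false: $\psi_j$ is an eigenfunction of $\cA$, not of the non-backtracking operator, and the only exact input available is $\la \psi_j,[\cA,K]\psi_j\ra=0$. Converting that commutator identity into a non-backtracking-type propagation of kernels is precisely the nontrivial algebraic ingredient of the paper: the factorization $\cL=(I-\cM)\nabla=\nabla+\nabla^*$ of \eqref{e:factL}, special to regular graphs, which yields $Var(\nabla^* K)=Var(\nabla\Sigma^n K)$ in Proof 2; alternatively, Proof 4 must pass to the edge functions $f_j$ (eigenfunctions of $\cA^{\sharp}$), work with a separate non-backtracking variance, and then transfer back via Lemma \ref{l:formulas} and an induction on the range. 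Your sketch offers no substitute for this step, and "conjugation by $\cA$ does nothing" does not by itself produce any propagation.

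Second, the mechanism you propose for the smallness of the averaged operator --- an Egorov-type dispersion estimate with "cancellation at rate $q^{-T}$ times the number of paths", to be balanced against the spectral gap --- is both left unproven (you call it the technical heart) and is not the mechanism that works. In the paper the gain from averaging is a pure orthogonality phenomenon: the propagated pieces are supported at pairwise distinct distances from the diagonal, hence lie in mutually orthogonal subspaces $\cH_k$, so the average of $n$ of them has $\cH$-norm $O(n^{-1/2})$ by \eqref{e:pyth}; no compensation between $q^{-T}$ and the $\sim q^{T}$ growth of path counts is needed, and the exponential factor only appears in the (BST)-controlled error $\tilde\tau(R)^2\,\frac1{|V|}\sharp\{x\in V,\ \rho(x)\le R\}$ coming from Proposition \ref{p:norms}. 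The hypothesis (EXP) enters elsewhere than you indicate: it is what allows one to write a mean-zero observable as $\nabla^*\nabla K'$ (or $(1-\cS)K'$) with $\norm{K'}\le C(m,\beta)\norm{K}$, i.e.\ Proposition \ref{p:Ck}, an inversion of the transfer operator on mean-zero functions, not a cancellation estimate. Finally, a mean-zero multiplication operator cannot be "approximated by radial operators": it is essentially orthogonal to the algebra generated by $\cA$, and what must be (and in the paper is) proved is that its variance itself is small. As it stands, your proposal is a reasonable program whose decisive steps are absent and, where specified, point in a direction that would not close.
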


Note that   $\la \psi^{(N)}_j, a_N\psi^{(N)}_j\ra_{\ell^2(V_N)} $ is the scalar product between $\psi^{(N)}_j$ and $a_N\psi^{(N)}_j$, its explicit expression is
$\sum_{x\in V_N}a_N(x)|\psi^{(N)}_j(x)|^2$. The interpretation of Theorem \ref{t:main} is that we are trying to measure the distance between the two probability measures on $V_N$,
$$\sum_{x\in V_N}|\psi^{(N)}_j(x)|^2 \delta_x \quad \mbox{ and }\quad \frac1N \sum_{x\in V_N} \delta_x \qquad \mbox{ (uniform measure)}$$
in a rather weak sense (just by testing the function $a_N$ against both). What \eqref{e:qvar} tells us is that for large $N$ and for most $j$, this distance is small.

We can generalize Theorem \ref{t:main} by replacing a function $a_N$ with any finite range operator~:
\begin{thm}\cite{ALM} \label{t:gen}Let $(G_N)=(V_N, E_N)$ be a sequence of $(q+1)$-regular graphs with $|V_N| =N$. Assume that $(G_N)$ satisfies (BST) and (EXP).

Call $(\lambda^{(N)}_1,\ldots, \lambda^{(N)}_N)$ the eigenvalues of $\cA$ on $G_N$, and let $(\psi^{(N)}_1,\ldots, \psi^{(N)}_N)$ be a corresponding orthonormal eigenbasis.


Fix $D\in \N$. Let ${\mathbf K}={\mathbf K}_N$ be a sequence of operators on $\ell^2(V_N)$ whose kernels\footnote{We caution that we do not use the word kernel in the sense of ``null space''. For us the kernel means
$K(x, y)=\la \delta_x, {\mathbf K}\delta_y\ra_{\ell^2(V_N)}$, that is to say the matrix of ${\mathbf K}$ in the basis of Dirac masses $(\delta_x)$.} $K : V_N \times V_N\To \C$ are such that $K(x, y)=0$ for $d(x, y)>D$ (we will say that $K$ is supported at distance $\leq D$ from the diagonal). 

Assume that
$   \sup_N \sup_{x, y\in V_N}|K(x, y)| \leq 1.$

Then there exists a number $ \la {\mathbf K}\ra_{\lambda^{(N)}_j}$ such that
$$\frac1{N} \sum_{j} \left| \la \psi^{(N)}_j, {\mathbf K}\psi^{(N)}_j\ra_{\ell^2(V_N)} - \la {\mathbf K}\ra_{\lambda^{(N)}_j}
\right|^2\Lim_{N\To +\infty} 0.$$

The expression of $ \la {\mathbf K}\ra_{\lambda^{(N)}_j}$ is explicit (given below) and depends only on $K$ and $\lambda^{(N)}_j$.

 \end{thm}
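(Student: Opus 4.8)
\emph{Overall strategy and first reduction.} The plan is to run the Shnirelman--Zelditch quantum ergodicity scheme, with the geodesic flow replaced by propagation along the $(q+1)$-regular tree ${\mathfrak X}$: first one produces the correct centering $\langle\mathbf{K}\rangle_\lambda$ from the spectral theory of $\cA$ on ${\mathfrak X}$; then one reduces the variance bound to a Hilbert--Schmidt estimate on a spectral window, using that $\psi\mapsto\langle\psi,\cdot\,\psi\rangle$ is unchanged under conjugation by functions of $\cA$; and then one performs an ergodic-averaging step whose convergence is powered by (EXP) and whose errors are controlled by (BST). To begin: a kernel supported at distance $\le D$ with entries bounded by $1$ defines an operator of norm $\le\tilde\tau(D)$, and $|\langle\mathbf{K}\rangle_\lambda|$ will obey the same bound, so every summand in the statement is $O_D(1)$. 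By (EXP) the spectrum of $\cA$ outside $[-(q+1)+\beta,(q+1)-\beta]$ consists of the single eigenvalue $q+1$ (the graphs being connected), contributing $O(1/N)$; and by the Kesten--McKay convergence recalled above, for each $\varepsilon>0$ the proportion of $j$ with $\lambda^{(N)}_j\notin[-2\sqrt q+\varepsilon,2\sqrt q-\varepsilon]$ is $o_N(1)+O(\varepsilon)$. Hence it suffices to prove the convergence with the sum restricted to $\lambda^{(N)}_j$ in a fixed compact interval $I\subset(-2\sqrt q,2\sqrt q)$ of the tempered spectrum, and then let $I$ exhaust $(-2\sqrt q,2\sqrt q)$.

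\emph{The centering, and reduction to a trace.} On ${\mathfrak X}$ the sphere operators $\cA_k$ (kernel $\mathbf 1_{d(x,y)=k}$) satisfy $\cA_1^2=\cA_2+(q+1)I$ and $\cA_1\cA_k=\cA_{k+1}+q\,\cA_{k-1}$ for $k\ge 2$, so $\cA_k=P_k(\cA)$ for an explicit polynomial $P_k$; the spherical function is $\varphi_\lambda(k):=P_k(\lambda)/\tau(k)$, which may equally be read off from the boundary values on the tempered spectrum of the Green function $G^{\mathfrak X}_\lambda(x,y)=\langle\delta_x,(\cA-\lambda)^{-1}\delta_y\rangle$. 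I then set
\[ \langle\mathbf{K}\rangle_\lambda:=\frac1N\sum_{x,y\in V_N}K(x,y)\,\varphi_\lambda\bigl(d(x,y)\bigr),\qquad g(\lambda):=\langle\mathbf{K}\rangle_\lambda,\ \ \mathbf{L}:=\mathbf{K}-g(\cA). \]
For $D=0$ this reduces to $\langle a_N\rangle$, consistent with Theorem~\ref{t:main}; and for the "radial" operator $\cA_k$ it gives $P_k(\lambda)$, as it must. Since $\langle\psi^{(N)}_j,g(\cA)\psi^{(N)}_j\rangle=\langle\mathbf{K}\rangle_{\lambda^{(N)}_j}$, the claim becomes $\frac1N\sum_{\lambda^{(N)}_j\in I}\bigl|\langle\psi^{(N)}_j,\mathbf{L}\psi^{(N)}_j\rangle\bigr|^2\to 0$. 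One also verifies the accompanying first-moment statement $\frac1N\sum_{\lambda^{(N)}_j\in I}\langle\psi^{(N)}_j,\mathbf{L}\psi^{(N)}_j\rangle\to 0$ by comparing $\frac1N\Tr(\chi(\cA)\mathbf{K})$ and $\frac1N\Tr(\chi(\cA)g(\cA))$ through the local Kesten--McKay limit and (BST); this short computation is what confirms that the formula above is the right center.

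\emph{Ergodic averaging.} For $\lambda^{(N)}_j\in I$ and any bounded Borel $f$ with $|f|\equiv1$ on $I$, one has $\langle\psi^{(N)}_j,\mathbf{L}\psi^{(N)}_j\rangle=\langle\psi^{(N)}_j,\langle\mathbf{L}\rangle_T\psi^{(N)}_j\rangle$, so by Cauchy--Schwarz in the eigenbasis and a smooth cutoff $\chi$ localizing near $I$,
\[ \frac1N\sum_{\lambda^{(N)}_j\in I}\bigl|\langle\psi^{(N)}_j,\mathbf{L}\psi^{(N)}_j\rangle\bigr|^2\ \le\ \frac1N\,\Tr\Bigl(\chi(\cA)^2\,\langle\mathbf{L}\rangle_T^{\,*}\langle\mathbf{L}\rangle_T\Bigr),\qquad \langle\mathbf{L}\rangle_T:=\frac1T\sum_{t=0}^{T-1}f(\cA)^{*t}\,\mathbf{L}\,f(\cA)^{t}. \]
Expanding the right-hand side, the diagonal terms contribute $O_D(1/T)$, and the off-diagonal terms are governed by the correlations $C_N(m):=\frac1N\Tr\bigl(\chi(\cA)^2\mathbf{L}^*f(\cA)^{m}\mathbf{L}f(\cA)^{-m}\bigr)$. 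The powers $f(\cA)^m$ on the tempered window are realized through the non-backtracking walk $\cB$ on $\ell^2(B_N)$ (locally on the tree part $S^*\cB^{m-1}S=\cA_m$, where $S:\ell^2(V_N)\to\ell^2(B_N)$, $(Sf)(e)=f(o(e))$), and the Ihara--Bass identity $\det(I-u\cB)=(1-u^2)^{|E_N|-|V_N|}\det(I-u\cA+qu^2I)$ shows, under (EXP), that the eigenvalues of $\cB$ are $\{\pm1\}$ together with the circle $|z|=\sqrt q$ and the isolated value $q$ (from $\lambda=q+1$, discarded above). Thus $q^{-1/2}\cB$ has its unit-modulus spectrum confined to the arcs $e^{\pm i\theta}$ with $\theta$ bounded away from $0$ once $I$ is fixed, giving the mixing $C_N(m)\to0$ and hence $\frac1N\|\langle\mathbf{L}\rangle_T\chi(\cA)\|_{\mathrm{HS}}^2\to0$, provided $T=T_N\to\infty$ slowly enough to stay below the typical injectivity radius; (BST) is what makes the off-tree corrections to $S^*\cB^{m-1}S=\cA_m$ negligible, uniformly for $m\le T_N$.

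\emph{Main obstacle.} The decisive point is the decay of correlations $C_N(m)\to0$ used in Step~3. It has two faces. On the genuinely tree-like part of $G_N$, $C_N(m)$ is a sum over closed non-backtracking trajectories of combinatorial length $\sim m$ --- of which a tree has none --- and the only surviving "diagonal" contribution is precisely the one cancelled by the centering $g(\cA)$; so it is essential (not cosmetic) that the closed form of $\langle\mathbf{K}\rangle_\lambda$ above be exactly right, and the non-tree remainder must then be shown $o(1)$ by (BST). The technical thorn is that $\cB$ is \emph{not normal}, so the Ihara--Bass spectral picture has to be upgraded to a genuine norm bound on $\frac1T\sum_{t<T}(q^{-1/2}\cB)^{*t}(\cdot)(q^{-1/2}\cB)^{t}$; this is where the quantitative gap in (EXP) --- the separation of the circle $|z|=\sqrt q$ from the eigenvalue $\pm1$ coming from the cycle space --- is needed, together with control of the conditioning of $\cB$. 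Everything else (the recursions for $\cA_k$, the compression formula $S^*\cB^{k-1}S=\cA_k$, Ihara--Bass, and the first-moment computation) is routine. An alternative to Step~3 is to split $\mathbf{K}=\sum_{k\le D}\mathbf{K}^{(k)}$ into pieces supported on $\{d(x,y)=k\}$, rewrite each (up to (BST)-errors) as multiplication operators interleaved with non-backtracking steps, and bootstrap from Theorem~\ref{t:main}; but this reshuffles rather than removes the same difficulty.
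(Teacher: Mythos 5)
Your set-up is sound as far as it goes: the centering $\la \mathbf K\ra_\lambda=\frac1N\sum K(x,y)\varphi_\lambda(d(x,y))$ agrees with the spherical-function formula of Remark \ref{r:sph}, the reduction to a tempered window via Kesten--McKay/(BST) is legitimate, and the Shnirelman-type step ``variance $\leq$ normalized Hilbert--Schmidt norm of the conjugation-average $\la\mathbf L\ra_T$'' is a correct inequality, close in spirit to Proofs 3 and 4 of the paper. The problem is that the entire content of the theorem sits in the step you label the ``main obstacle,'' and that step is not proved. Writing $\la\psi_k,\la\mathbf L\ra_T\psi_j\ra=D_T(\lambda_j,\lambda_k)\la\psi_k,\mathbf L\psi_j\ra$ with a Fej\'er-type factor $D_T$, your bound becomes the near-energy-diagonal HS mass of $\mathbf L$; since $D_T(\lambda_j,\lambda_j)=1$, the Ces\`aro average of your correlations $C_N(m)$ actually \emph{dominates} the restricted variance. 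So ``$C_N(m)\to 0$'' is not a reduction to something more tractable: it is a statement at least as strong as the theorem itself, and the mechanism that would force it — the heart of any proof — is missing. Conjugation by $f(\cA)$ acts as a scalar on each eigenspace, so no gain can come from the finite-$N$ spectral picture alone; the gain must come from computing the averaged HS norm asymptotically on the tree (via (BST)) and showing that averaging genuinely shrinks it there. That is exactly where the paper does its work: the orthogonality/Pythagoras argument for the kernels $\nabla\cM^{*k}K$ (Proofs 2--3), or, in Proof 4, the Poisson-kernel computation of Proposition \ref{p:crucialthing} together with the spectral analysis of the transfer operator $\cS$ with the oscillatory weights $q^{2is_0(k-k')}$, plus the (EXP)-powered inversion of $1-\cS$ on $\cH^0_m$ (Proposition \ref{p:Ck}, Lemma \ref{l:useful}) to pass from the invariance identity to general $K$. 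None of this, or a substitute for it, appears in your outline.

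Two specific assertions in your sketch are also incorrect or unusable as stated. First, (EXP) does \emph{not} confine the spectrum of the non-backtracking operator to $\{\pm1\}\cup\{|z|=\sqrt q\}\cup\{q\}$: by the correspondence recalled in \S\ref{s:spnbt}, untempered eigenvalues $\lambda$ with $2\sqrt q<|\lambda|\leq (q+1)(1-\beta)$ produce \emph{real} eigenvalues of $\cA^\sharp$ of modulus up to $q(1-\beta')$; only the Ramanujan case gives the circle-only picture, so the claimed ``mixing from the arcs bounded away from $0$'' does not follow from (EXP). Second, the description of $C_N(m)$ as ``a sum over closed non-backtracking trajectories of length $\sim m$'' presupposes that $f(\cA)^m$ has the local combinatorial structure of $\cB^m$; but the exact invariance $\la\psi_j,\mathbf L\psi_j\ra=\la\psi_j,\la\mathbf L\ra_T\psi_j\ra$ requires $f$ to be a unimodular \emph{function of} $\cA$, which has no finite propagation and no non-backtracking expansion, while the genuine non-backtracking operator is not a function of $\cA$ and does not give that invariance. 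Reconciling these two requirements is precisely what Proof 4 accomplishes by passing to the non-backtracking eigenfunctions $f_j$ and proving Proposition \ref{p:crucialthing} and Lemma \ref{l:formulas}; your proposal acknowledges the tension but leaves it unresolved, so the argument has a genuine gap.
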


\begin{remark}  \label{r:sph}The explicit formula for $ \la {\mathbf K}\ra_{\lambda}$ we obtain here is much simpler than the recipe given in the original paper \cite{ALM}.
Eigenvalues of the adjacency matrix $\cA$ of a $(q+1)$-regular graph may be parametrized by a ``spectral parameter''  $s$ thanks to the relation
\begin{equation}\label{e:parametrization}
 \lambda=\lambda(s)=  q^{1/2 +is}+q^{1/2-is}=2\sqrt{q}\cos(s\ln q)
\end{equation}
where $s\in \IR\cup i\IR$ since $\lambda\in\R$ ($s\in \R$ is equivalent to $|\lambda|\leq 2\sqrt{q}$, corresponding to the ``tempered spectrum'').  

Let us then denote $\Phi_\lambda(d)$ the spherical function of parameter $\lambda$:
$$\Phi_\lambda(d)= q^{-d /2}\left(\frac{2}{q+1}\cos(d s\ln q) +\frac{q-1}{q+1}\frac{\sin((d+1)s\ln q)}{\sin(s\ln q)}\right)$$
(for $d\in \IN$).
Then
$$ \la {\mathbf K}\ra_{\lambda}=\frac1N\sum_{x, y\in V_N}  K( x, y)\Phi_\lambda( d(x, y))$$ 
 
Note that the scalar product $ \la \psi^{(N)}_j, {\mathbf K}\psi^{(N)}_j\ra_{\ell^2(V_N)}$ is
$$\sum_{x, y\in V_N} K(x, y)\overline{\psi^{(N)}_j(x)}\psi^{(N)}_j(y)$$
so our result says that, for large $N$ and for most $j$, the function $\psi^{(N)}_j(x)\overline{\psi^{(N)}_j}(y)$ is ``close'' to the universal quantity $\frac{1}N\Phi_{\lambda^{(N)}_j}( d(x, y))$ (in a weak distance defined by
testing $K(x, y)$ against both).
 
 Keeping in mind our idea to find a statement which will hold for more general models than the regular graphs, we note that for $\lambda\in (-2\sqrt{q}, 2\sqrt{q})$ the spherical function is related to the Green function by the following identity~:  
 $$ \frac{\Im m \,  G_{\lambda+ i 0}(x, y)}{  \Im m \,  G_{\lambda+ i 0}(x, x)}=\Phi_\lambda(d(x, y))$$
 for any two points $x, y\in \mathfrak{X}$ at distance $d(x, y)$ in the tree. Here the Green function $G_z(x, y)=(z-\cA)^{-1}(x, y)$ is defined for $z\not\in \IR$, and $G_{\lambda+ i 0}(x, y)$
 is the limit of $G_{\lambda +i\eps}(x, y)$ as $\eps$ is a positive real number going to $0$.
\end{remark}

\begin{remark}\label{r:rate} What we actually prove (Proofs 2 and 3), for a fixed regular graph $G=\Gamma\backslash \mathfrak{X}=(V, E)$, is an upper bound of the form
\begin{equation}\frac1{|V|} \sum_{j=1}^{|V|} \left| \la \psi_j, {\mathbf K}\psi_j\ra - \la {\mathbf K}\ra_{\lambda_j}
\right|^2 \leq  \tilde C(D, \beta)\left(\frac{\norm{{\mathbf K}}^2_{HSN}}{R^{\alpha}} +q^{\tilde\alpha R} \frac{1}{|V|}\sharp\{x\in V, \rho(x)\leq R\}\sup_{x, y}|K(x, y)|^2\right),\label{e:rate}
\end{equation}
valid when $K$ is supported at distance $\leq D$ from the diagonal and where $R$ is arbitrary, $\alpha, \tilde \alpha$ are some fixed positive numbers (the different methods give different values of $\alpha, \tilde \alpha$). 
The norm $\norm{{\mathbf K}}_{HSN}$ is the normalized Hilbert-Schmidt norm of the operator (see \eqref{e:defHSN}), and $\sup_{x, y}|K(x, y)|$ is its $\ell^1\To \ell^\infty$ norm.
The constant $\tilde C(D, \beta)$ is an explicit function of $D$ and of the spectral gap $\beta$ (coming from the expanding condition).

For the sake of brevity, Proof 4 is written in a less explicit way, and gives a result in the form
\begin{equation}\label{e:weaker}Var(K)\leq \frac{\tilde C(D, \beta)}R \norm{{\mathbf K}}^2_{HSN} +\sup_{x, y}|K(x, y)|^2 \, \, \left(o_{ R, D}(1)_{G\To\mathfrak X } + O_D(R^{-1})\right)
\end{equation}
where $o_{ R, D}(1)_{G\To\mathfrak X }$ is a quantity going to $0$ under the condition (BST), if $D$ and $R$ are fixed.

Under assumptions (EXP) and (BST), the bound \eqref{e:rate} implies Theorem \ref{t:gen}.  In particular, if the girth goes to infinity, we can take $R$ to be the half-girth and we obtain in this case
\begin{equation}\label{e:girth}\frac1{|V|} \sum_{j=1}^{|V|} \left| \la \psi_j, {\mathbf K}\psi_j\ra - \la {\mathbf K}\ra_{\lambda_j}
\right|^2 \leq  \tilde C(D, \beta)\frac{\norm{{\mathbf K}}^2_{HSN}}{girth(G)^{\alpha}}   \end{equation}
since the other term in \eqref{e:rate} vanishes.

\end{remark}

\subsection{Anisotropic, homogeneous random walks}
Counting the original proof of \cite{ALM}, we now have four different proofs of Theorems \ref{t:main}, \ref{t:gen}.
In future work, we would like to be able to deal with regular graphs with random weights on the edges or random potentials (Anderson model), or with certain non-regular graphs. We tried to adapt our four proofs to these models but failed for the first three; however, there is hope to adapt Proof 4,
provided a certain number of assumptions are respected. To illustrate this, we treat in Section \ref{s:anis} a model that is still simple (because homogeneous), but where the general ideas start to emerge~: the anisotropic random walks on regular graphs.

Again we consider a family $G_N=(V_N, E_N)$ of $(q+1)$-regular graphs. We assume that there is a labelling of the edges
$${\mathrm c}: B_N \To \{1,\ldots, q+1\}$$
such that ${\mathrm c}(e)={\mathrm c}(\hat e)$ for all $e$ (the label does not depend on the orientation of the edge) and
such that for every $x\in V_N$, for every $j\in \{1, \cdots, q+1\}$,
$$\sharp\{ e\in B_N, x=o(e), {\mathrm c}(e)=j\}= 1.$$
In other words, among the $q+1$ edges emanating from any given vertex, exactly one has each label.
Assume we are given probability weights $p_1, \ldots, p_{q+1}>0 $ (independent of $N$) such that $\sum_{j=1}^{q+1} p_j=1$.
We consider the random walk on $V_N$ defined by the condition that the probability of jumping from $x$ to $y$ is $p_j$,
whenever $x$ and $y$ are joined by an edge labelled $j$. In other words, we consider the symmetric stochastic operator
\begin{equation}\label{e:Ap}\cA_p f(x)=\sum_{y\sim x} p({\mathrm c}(x, y)) f(y).\end{equation}
The random walk is {\em{ anisotropic}} because the transition probability is not the same in all directions, but it is still {\em{homogeneous}}; we mean by this that the local situation is the same around each point.

\begin{thm}\label{t:mainanis} The statement of Theorem \ref{t:main} remains true, without changing a single word, in the case of anisotropic homogeneous random walks.
\end{thm}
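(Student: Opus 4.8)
The plan is to adapt Proof~4, the only one of the four arguments that avoids the Fourier analysis of the tree and runs instead through the resolvent together with absolute continuity of the limiting spectral measure; the guiding principle stated in the introduction — a nearest-neighbour self-adjoint operator on a locally tree-like graph, in an energy window where the operator on the limiting tree has purely absolutely continuous spectrum — applies here with $\cA_p$ in the role of $\cA$. A few preliminary remarks cost nothing: $\cA_p$ is self-adjoint on $\ell^2(V_N)$ with the uniform measure because ${\mathrm c}(e)={\mathrm c}(\hat e)$; it satisfies $\norm{\cA_p}\le 1$ (being simultaneously sub-Markov on $\ell^\infty$ and, by symmetry, on $\ell^1$); and $\cA_p\mathbf 1=\mathbf 1$ since $\sum_j p_j=1$ and each label occurs exactly once at every vertex, so the uniform measure is still the reference measure. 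Finally, (EXP) for $\cA$ forces a lower bound on the edge-expansion of $G_N$, which by a weighted Cheeger inequality yields a spectral gap $\sup\big(\Spec(\cA_p)\cap\mathbf 1^\perp\big)\le 1-\beta_p$ with $\beta_p=\beta_p(\beta,\min_j p_j)>0$ independent of $N$; this gap takes over the role of (EXP) in the quantitative mixing estimates.

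\emph{Step 1: spectral theory of $\cA_p$ on $\mathfrak{X}$.} This is the one genuinely new ingredient. Cutting $\mathfrak{X}$ at a vertex into its $q+1$ rooted branches, one per label, the branch Green functions $\zeta_j(z)$ — the value at the root of the resolvent of the restriction of $\cA_p$ to the label-$j$ branch — satisfy the closed system
\[ \zeta_j(z)=\Big(z-\sum_{k\neq j}p_k^2\,\zeta_k(z)\Big)^{-1},\quad j=1,\dots,q+1,\qquad G_z(o,o)=\Big(z-\sum_{k=1}^{q+1}p_k^2\,\zeta_k(z)\Big)^{-1}. \]
I would show that the branch Green functions, which by construction form a Herglotz solution of this system on $\{\Im z>0\}$, extend continuously to $\lambda+i0$ with $\Im\zeta_j(\lambda+i0)>0$ on an interval $I_p$ — the anisotropic analogue of the tempered band $(-2\sqrt{q},2\sqrt{q})$ — with locally uniform bounds there. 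This yields purely absolutely continuous spectrum of $\cA_p$ on the interior of $I_p$, existence of the boundary Green function $G_{\lambda+i0}(x,y)$, and, by homogeneity, $m_p(\lambda):=\Im G_{\lambda+i0}(o,o)>0$ independent of the vertex. Setting $\Psi^p_\lambda(x,y):=\Im G_{\lambda+i0}(x,y)/m_p(\lambda)$, the substitute for the spherical function (with $\Psi^p_\lambda(x,x)=1$), the reference value of Theorem~\ref{t:mainanis} is recovered as $\la a_N\ra=\tfrac1N\sum_x a_N(x)\Psi^p_{\lambda_j}(x,x)$, independent of $j$, exactly as in the diagonal case of Remark~\ref{r:sph}; the finite-range generalization, which the method also delivers, uses the full $\Psi^p_\lambda$.

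\emph{Step 2: variance bound and conclusion.} Fix $G=\Gamma\backslash\mathfrak{X}=(V,E)$ and a smooth cut-off $\chi$ supported in the interior of $I_p$; by the Kesten--McKay law for $\cA_p$ (a consequence of (BST), its moments being weighted closed-path counts on $\mathfrak{X}$) it suffices to control the $\chi$-localized variance. Choosing a real phase $\theta$ smooth on $I_p$ so that the unitary $W:=e^{i\theta(\cA_p)}$ propagates at unit speed ($W^k\delta_x$ essentially supported in $B(x,Ck)$), and using $\cA_p\psi_j=\lambda_j\psi_j$, one has $\la\psi_j,a\psi_j\ra=\la\psi_j,(W^{-k}aW^k)\psi_j\ra$ for all $k$, hence $\la\psi_j,a\psi_j\ra=\la\psi_j,\la a\ra_R\psi_j\ra$ with $\la a\ra_R:=\tfrac1R\sum_{k=1}^R W^{-k}aW^k$. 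Cauchy--Schwarz together with $\tfrac1{|V|}\sum_j|\la\psi_j,T\psi_j\ra|^2\le\tfrac1{|V|}\Tr(T^*T)$ reduces the problem to bounding $\norm{\la a\ra_R}_{HSN}^2$: after a cyclic permutation the diagonal $k=l$ contributes $\tfrac1R\norm{a}_{HSN}^2$, while in each off-diagonal term the operator $W^{k-l}aW^{l-k}$ agrees, at every vertex of injectivity radius $>C|k-l|$ — all but $o_R(1)_{G\To\mathfrak{X}}|V|$ of them, by (BST) — with the corresponding operator for $\cA_p$ on the tree, whose time-averages converge (this is where absolute continuity from Step~1 is used) and produce exactly the cancellation with $\la a\ra$ up to $O_D(R^{-1})$, the remaining vertices contributing a term governed by the mixing rate $\beta_p$. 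This gives a bound of the form \eqref{e:weaker}, with $\cA_p$ and $\la a_N\ra$ in place of $\cA$ and $\la\mathbf K\ra_{\lambda_j}$, the constants now also depending on $\min_j p_j$. The ``Shnirelman step~$0$'', $\tfrac1{|V|}\sum_j A(\lambda_j)\big(\la\psi_j,a\psi_j\ra-\la a\ra\big)\To 0$, is immediate from (BST): rewrite the left side as $\tfrac1{|V|}\sum_x a(x)\la\delta_x,A(\cA_p)\delta_x\ra$, note that $\la\delta_x,A(\cA_p)\delta_x\ra$ depends only on $B(x,\deg A)$, and invoke Kesten--McKay once more; the usual positivity/Chebyshev argument then promotes the variance bound to \eqref{e:qvar} for $\cA_p$ along a density-one set. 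Plugging the sequence $G_N$ into the fixed-graph bound and letting its right-hand side vanish via (BST) and the uniform gap $\beta_p$ completes the proof.

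\emph{Main obstacle.} The hard part is Step~1 together with the choice of propagator in Step~2. Showing that the coupled system for the $\zeta_j$ admits continuous boundary values with strictly positive imaginary part on a genuine interval — that is, that $\cA_p$ on the tree really has a non-empty absolutely continuous band — and then exhibiting a phase $\theta$ on that band for which the off-diagonal ``Egorov'' terms cancel precisely in the tree model, both have to be carried out by hand: the transparent Chebyshev / finite-speed-of-propagation structure of the isotropic adjacency operator is gone and must be reconstructed from the absolutely continuous spectral representation of $\cA_p$ on $\mathfrak{X}$. By contrast, the expander input is routine once phrased through the weighted Cheeger inequality, and Steps~$0$ and~$2$ are formally identical to the isotropic Proof~4.
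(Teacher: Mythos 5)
Your overall orientation is the right one (adapt Proof 4; the input you need on the tree is the Aomoto/Fig\`a-Talamanca--Steger description of the branch Green functions, your recursion for $\zeta_j$ is equivalent to the system (a)--(b) of Proposition \ref{p:green}, and the limiting profile is indeed $\Im m\, G_{\lambda+i0}(x,y)/\Im m\, G_{\lambda+i0}(x,x)$). But Step 2, which is where the theorem is actually proved, has a genuine gap, and it is precisely the part you defer to ``by hand''. First, the propagator you postulate does not exist: no unitary $e^{i\theta(\cA_p)}$ has kernel supported (or controllably supported) in $B(x,Ck)$, and the whole point of Section \ref{s:anis} is that the correct substitute for ``unit-speed propagation'' is not a function of $\cA_p$ at all but the weighted non-backtracking operator $\cB_p$ on oriented edges, together with the transformed eigenfunctions $f_j,g_j$ built from $\psi_j$ and the boundary values $\zeta_{\lambda_j}$; because $\zeta_\lambda$ varies with $\lambda$, this forces an energy-localized variance $Var^I_{nb}$ on small windows, and the nontrivial Hilbert--Schmidt domination of that local variance (Proposition \ref{p:HS2}, proved through the Poisson-kernel spectral decomposition of $\cA_p$ on $\mathfrak X$) has no counterpart in your scheme. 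Second, and more fundamentally, your claimed cancellation is wrong as stated: propagating a mean-zero observable for times $k\le R\ll \mathrm{diam}(G)$ on a locally tree-like graph only replaces $a$ by weighted spherical averages of radius $\le R$, and these are \emph{not} close to $\la a\ra$ ``from the tree model and absolute continuity''; the decay must come from a contraction/oscillation mechanism for a transfer operator, which in the paper is Lemma \ref{l:decay} (decay of correlations for $\cS^u_{E_0}$, driven by the non-constancy of the phase $u=\zeta_{E_0}/\bar\zeta_{E_0}$, cf.\ Lemma \ref{l:fnu}) and, in the degenerate case $u\equiv-1$, $E_0=0$, from (EXP) via the weighted Cheeger comparison of Lemma \ref{l:cheeger}. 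Your sketch assigns (EXP)/mixing only to error terms coming from BST-bad vertices, so the main term is left uncontrolled.

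Finally, even granting a bound on a non-backtracking (or propagated) variance such as \eqref{e:estvarnbanisglob}, Theorem \ref{t:mainanis} still requires converting it into a statement about $\sum_x a(x)|\psi_j(x)|^2$. In the paper this is the content of Lemma \ref{l:m=0}: one tests the edge observable $K_\lambda(x,y)=a(x)|\zeta_\lambda(x,y)|^2/p(x,y)$, takes the difference of the two boundary-value choices $\lambda\pm i0$ to produce $2\sum_x|\psi_j(x)|^2[(\cA_{\lambda_j}-W_{\lambda_j})a](x)$, and then one must show that the operators $\cA_\lambda-W_\lambda$ have image $\cH^0_0$ and a uniform spectral gap (again via (EXP)). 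Nothing in your proposal plays this role. So the proposal is a reasonable plan of attack, but the three ingredients that constitute the proof --- the non-backtracking substitute for propagation with its local variance and HS bound, the transfer-operator decay (plus the (EXP)-treated degenerate case), and the back-conversion identity with its spectral gap --- are exactly what is missing.
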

At this stage we do not see any difference between the isotropic and the anisotropic random walks~: the probability measures $|\psi^{(N)}_j|^2$ approach the uniform measure on $V_N$. However the calculations involved in the proof will be quite different, and
this difference is more visible in the generalization of Theorem \ref{t:gen}, that is, when we use general operators as ``test functions''.

\begin{thm}  \label{t:anis}  The statement of Theorem \ref{t:gen} remains true, under the same assumptions, in the case of anisotropic homogeneous random walks -- if we replace the quantity $\la {\mathbf K}\ra_\lambda$ by $\la {\mathbf K}\ra_{\lambda, p}$ defined below, which depends on the probability transitions $(p_j)$.









 \end{thm}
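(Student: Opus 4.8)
The plan is to adapt the structure of Proof 4 of Theorem \ref{t:gen} to the anisotropic setting, carrying through the spectral analysis on the tree for the operator $\cA_p$ in place of $\cA$. First I would set up the local machinery: since the walk is homogeneous, around any vertex $x$ of $G_N$ (outside the bad set where $\rho(x)$ is small) the ball $B(x,R)$ is isometric, as a labelled graph, to the corresponding ball in the labelled tree $\mathfrak X$, and the operator $\cA_p$ restricted to this ball agrees with the operator $\cA_p$ on the tree. This lets me transfer questions about $\la \psi_j, {\mathbf K}\psi_j\ra$ into computations on $\mathfrak X$. The quantity $\la {\mathbf K}\ra_{\lambda,p}$ should be defined, by analogy with Remark \ref{r:sph}, via the spectral (``spherical'') functions $\Phi^p_\lambda$ for $\cA_p$ on the tree, or equivalently via the imaginary part of the Green function $G^p_{\lambda+i0}(x,y)$ of $\cA_p$ on $\mathfrak X$; concretely $\la{\mathbf K}\ra_{\lambda,p}=\frac1N\sum_{x,y}K(x,y)\,\Phi^p_\lambda(x,y)$ where $\Phi^p_\lambda(x,y)=\Im G^p_{\lambda+i0}(x,y)/\Im G^p_{\lambda+i0}(x,x)$.

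Next I would reproduce the variance estimate. The key algebraic identity behind all the proofs is that, for a fixed graph, $\frac1{|V|}\sum_j F(\lambda_j)|\la\psi_j,{\mathbf K}\psi_j\ra|^2$ can be rewritten, via the spectral theorem, as a trace involving ${\mathbf K}$, ${\mathbf K}^*$ and functions of $\cA_p$; choosing $F$ to be an approximate spectral projector onto a small window and expanding $\la\psi_j,({\mathbf K}-\la{\mathbf K}\ra_{\lambda_j,p}){\mathbf K}\psi_j\ra$ produces, after using the expander hypothesis (EXP) to kill the contribution of the non-tempered part of the spectrum, a main term that is controlled by the off-diagonal decay of $\Im G^p_{z}$ on the tree together with the normalized Hilbert--Schmidt norm $\norm{{\mathbf K}}_{HSN}$, plus an error supported on the bad set $\{x:\rho(x)\le R\}$ which is handled by (BST) exactly as in \eqref{e:weaker}. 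The role of the spherical functions in the isotropic case is played here by the propagation kernel of $\cA_p$ on the tree, and the miracle that makes the method work — as flagged in the introduction — is that the spectral measure of $\cA_p$ at a point of the tree is absolutely continuous on the relevant part of the spectrum, which is what gives the quantitative decay of $\Im G^p_{\lambda+i0}(x,y)$ in $d(x,y)$ and hence the $R^{-1}$-type gain.

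I would organize the argument as: (1) recall/develop the resolvent identities and the formula for $G^p_z(x,y)$ on the $(q+1)$-regular labelled tree, in particular a recursion for the diagonal entries and a product formula for the off-diagonal ones in terms of the ``directional'' Green functions along the unique geodesic, and establish absolute continuity plus the needed bounds on $(-2\sqrt q,2\sqrt q)$ or the appropriate tempered interval; (2) define $\la{\mathbf K}\ra_{\lambda,p}$ and check it reduces to $\la{\mathbf K}\ra_\lambda$ when all $p_j=\frac1{q+1}$; (3) run the variance computation, using a smooth window function and the ergodic-averaging trick, to reduce to a tree quantity plus expander and Benjamini--Schramm error terms; (4) conclude by letting $N\to\infty$ and then optimizing in $R$. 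The main obstacle I anticipate is step (1): in the isotropic case the spherical functions are explicit and their orthogonality/decay properties are classical, whereas for $\cA_p$ one must work with the $q+1$ coupled Riccati-type equations for the branch Green functions $\zeta_j(z)$ and show they admit boundary values with the right regularity and the right (strict, uniform on compact subintervals) positivity of the imaginary part; the non-reversibility of the labelling and the fact that $\cA_p$ is self-adjoint only for the natural measure, not the uniform one, require some care. Once the tree-level spectral analysis is in place, steps (2)--(4) are essentially the same bookkeeping as in Proof 4, with $\Phi_\lambda$ replaced by $\Phi^p_\lambda$ and $\cA$ by $\cA_p$ throughout.
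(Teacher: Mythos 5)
You correctly pin down the definition \eqref{e:klambdap} of $\la \mathbf K\ra_{\lambda,p}$ via $\Im m\, G_{\lambda+i0}$ and the necessary tree-level input (the coupled Riccati system for the branch Green functions $\zeta_j$, their boundary values and the absolute continuity of the spectral measure), which matches \S\ref{s:green}--\S\ref{s:density}. But the mechanism you propose for making the variance small is not the one that works, and as described it would fail. The quantity $\frac1{|V|}\sum_j F(\lambda_j)\,|\la\psi_j,\mathbf K\psi_j\ra|^2$ cannot be ``rewritten via the spectral theorem as a trace involving $\mathbf K$, $\mathbf K^*$ and functions of $\cA_p$'': the \emph{square} of a diagonal matrix element is not linear in the spectral projector onto $\psi_j$, and the only trace-type bound available is the Hilbert--Schmidt domination $Var(K)\leq\norm{\widehat K_G}^2_{HSN}$, which carries no smallness. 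Likewise, off-diagonal decay of $\Im m\, G^p_{\lambda+i0}(x,y)$ in $d(x,y)$ does not by itself produce any $R^{-1}$ gain. In the actual proof the gain has a different source: the eigenfunction equation is turned into an (approximate, on small spectral windows $I$) invariance of a \emph{non-backtracking} quantum variance, built from the edge functions $g_j(e)=\frac{p(e)}{\zeta_{\lambda_j}(e)}\bigl(\psi_j(t(e))-\zeta_{\lambda_j}(e)\psi_j(o(e))\bigr)$, under the weighted shifts $\sigma_{E_0},\rho_{E_0}$; one averages over $T$ shifts, dominates the local variance by a Hilbert--Schmidt norm uniformly in $I$ (Proposition \ref{p:HS2}, which is precisely where the Poisson kernel and absolute continuity are used), and extracts the $1/T$ factor from decay of correlations of the weighted transfer operator $\cS^u_{E_0}$.

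That last step, and the return to the original variance, are exactly where ``the same bookkeeping as Proof 4'' is not enough, and your outline does not address them. The operator $\cS^u_{E_0}$ is stochastic only after modifying the Hilbert structure (the spaces $\cH_{E_0}$, $\widetilde\cH_{E_0}$), and its contraction rests on a dichotomy (Lemmas \ref{l:decay}, \ref{l:fnu}, \ref{l:cheeger}): for generic $E_0$ the phase $u=\zeta_{E_0}/\bar\zeta_{E_0}$ is non-constant and the contraction comes from the oscillation alone, with no use of (EXP); in the exceptional case $E_0=0$, $u\equiv-1$, one must compare with the isotropic non-backtracking walk via a Cheeger-type argument and invoke (EXP). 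Moreover, transferring the non-backtracking estimate back to $Var(K)$ is structurally harder than in the isotropic case: $(\iota\widetilde U_{\lambda})^*\widehat K_{B(\mathfrak X)}\widetilde U_{\lambda}$ yields $K+\cL^{\lambda}_{m-1}K+\cL^{\lambda}_{m-2}K$ with $\lambda$-dependent lower-order terms, and the very identification of $\la K\ra_{\lambda,p}$ comes from the orthogonality identity $\la P_{\lambda,\omega},(\iota\widetilde U_{\lambda})^*\bbbone_{\cD'}\widehat K_{B(\mathfrak X)}\widetilde U_{\lambda}P_{\lambda,\omega}\ra=0$, i.e.\ $\la K\ra_{\lambda,p}=-\la\cL^{\lambda}_{m-1}K+\cL^{\lambda}_{m-2}K\ra_{\lambda,p}$, which drives an induction on $m$ (Proposition \ref{p:recursion}) with a separate base case $m=0$ (Lemma \ref{l:m=0}) using both boundary values $\zeta_{\lambda\pm i0}$ and a uniform spectral gap for the operators $\cA_\lambda-W_\lambda$. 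Without these ingredients your plan does not reach the theorem; with them, it essentially becomes the paper's proof.
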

 
  To define $ \la {\mathbf K}\ra_{\lambda, p}$, we write $(G_N, {\mathrm c})$ (the labelled graph) as a quotient $\Gamma_N\backslash (\mathfrak X, {\mathrm c})$ of the labelled tree $(\mathfrak X, {\mathrm c})$ by a group of automorphisms $\Gamma_N$ preserving the labelling. The operator $\cA_p$ can be defined on the tree as in \eqref{e:Ap}.

Consider for $\gamma\not\in \IR$, and $x, y\in \mathfrak X$, the Green function
$$G_\gamma(x, y)=(z-\cA_p)^{-1}(x, y)$$
of the operator $\cA_p$ on the labelled tree. Then, if $K(x, y)$ is supported at distance $\leq D$ from the diagonal,
\begin{equation}\label{e:klambdap} \la {\mathbf K}\ra_{\lambda, p}=\frac{1}{N}\sum_{x, y\in V_N}K(x, y) \frac{\Im m \,  G_{\lambda+ i 0}(x, y)}{  \Im m \,  G_{\lambda+ i 0}(x, x)}.
\end{equation}
To define $\Im m \,  G_{\lambda+ i 0}(x, y)$ for $x, y\in V_N$, take its value at any lifts $\tilde x, \tilde y$ of $x, y$ in $\mathfrak X$ such that
$d_{\mathfrak X}(\tilde x, \tilde y)\leq D$. This gives an unambiguous definition except for the points $x$ where the injectivity radius $\rho(x)$ is $<D$,
but because of (BST), this ambiguity in the definition is only of order $o(N^{-1})$. When $D=0$ we obtain
$ \la {\mathbf K}\ra_{\lambda, p}=\frac{1}{N}\sum_{x\in V_N}K(x, x)$ and Theorem \ref{t:anis}
implies Theorem \ref{t:mainanis}.

Another way to understand this formula is the following. Let $P_{(+\infty, \lambda]}$ be the spectral projector on $(+\infty, \lambda]$ for $\cA_p$
on $\ell^2(\mathfrak X)$. Seeing $P$ as an operator-valued measure, write
$$P_{(+\infty, \lambda]}=\int_{-\infty}^{\lambda} P_t \, {\mathrm m}(t) dt,$$
that is, $P_t$ is the density of $P$ with respect to the measure ${\mathrm m}(t) dt$, defined as in \eqref{e:planch} with $\cA$ replaces by $\cA_p$.
Since
$$P_t(x, y)=\frac{\Im m \,  G_{t+ i 0}(x, y)}{  \Im m \,  G_{t+ i 0}(x, x)},$$
we see that
$$ \la {\mathbf K}\ra_{\lambda, p}=\frac{1}{N}\sum_{x, y\in V_N}K(x, y)P_{\lambda}(x, y).$$

\bigskip

{\bf{Outline of the paper~:}} Sections \ref{s:notation} and \ref{s:prep} gather notation and facts that are needed in the three proofs. The proofs themselves can in principle be read independently of each other, and are given in Sections \ref{s:short} (shortest proof), \ref{s:ergflav} (ergodic-theoretic flavoured proof) and \ref{s:nbt} (proof using the non-backtracking random walk). The adaptation of the latter to the anisotropic case is given in Section \ref{s:anis}; notation and ideas from Section \ref{s:nbt} are re-used in Section \ref{s:anis}.
 
\bigskip

{\bf{Acknowledgements~:}} The author is supported by the Labex IRMIA and USIAS of Universit\'e de Strasbourg.
This material is based upon work supported by the Agence Nationale de la Recherche under grant No. ANR-13-BS01-0007-01,
and by the National Science Foundation under grant No. DMS-1440140 while the author was in residence at the MSRI in Berkeley, Ca., during the spring 2015 semester.

I am very thankful to Mostafa Sabri for his careful reading and numerous useful comments on the manuscript.

\section{Notation and basic identities \label{s:notation}}

\subsection{Operators on the tree and on quotients of the tree\label{s:ops}}
Let $\mathfrak{X}$ be the $(q+1)$-regular tree, let $\Gamma$ be a subgroup of the automorphism group of $\mathfrak{X}$ such that the quotient graph $G=\Gamma\backslash \mathfrak{X}=(V, E)$ is finite. We will always assume that $G$ is connected. The adjacency operator on the tree is
$\cA f(x)=\sum_{y\sim x} f(y)$. It preserves the $\Gamma$-invariant functions, which may be identified with functions on $V$. We assume for simplicity that $\Gamma$ acts without fixed vertices; then  $\cA$ is self-adjoint on $\ell^2(V)$ (with $V$ endowed with the uniform measure). 
In this section, $\Gamma$ is fixed, so we drop the index $N$ (in Theorems \ref{t:main} and \ref{t:gen}, $\Gamma=\Gamma_N$ is a sequence assumed to satisfy (EXP) and (BST)).

Call $\cD\subset V(\mathfrak{X})$ a fundamental domain for the action of $\Gamma$ on $V(\mathfrak{X})$. Its cardinality is $|\cD|=|V|$, the number of vertices of $G$.

We introduce the space $\cH$ (depending on $\Gamma$ although this does not appear in the notation) of 
functions $K : V(\mathfrak{X})\times V(\mathfrak{X})\To \IC$, satisfying the condition that $K(\gamma\cdot x, \gamma \cdot y)=K(x, y)$ for all $\gamma \in \Gamma$, and such that
\begin{equation}\frac{1}{|\cD|}\sum_{x\in \cD, y\in\mathfrak{X}} |K(x, y)|^2<+\infty.\label{e:hilbert}\end{equation}
Defining the scalar product $\la K_1, K_2\ra_{\cH}= \frac{1}{|\cD|}\sum_{x\in \cD, y\in\mathfrak{X}} \overline{K_1(x, y)}K_2(x, y) $ turns $\cH$ into a Hilbert space, and we denote by $\norm{K}_{\cH}$ the corresponding norm.
The space $\cH$ may be decomposed into the Hilbertian sum $\bigoplus_{k=0}^{+\infty}\cH_k$,
where $\cH_k\subset \cH$ is the finite-dimensional subspace of functions such that
$$ K(x, y)\not=0\Rightarrow d_{\mathfrak X}(x, y)=k.$$
When using this decomposition we will write $K=(K_k)_{k=0}^{+\infty}$. We will denote $\cH_{\leq k}= \bigoplus_{j=0}^{k}\cH_j$, and if $K\in \cH_{\leq k}$ we say it is {\em{supported at distance $\leq k$ from the diagonal}}.

An element $K \in \cH$ naturally defines an operator $\widehat{K}$ acting on $C_c(V(\mathfrak{X}))$~: the operator with kernel
$K(x, y)=\la \delta_x, \widehat{K}\delta_y\ra_{\ell^2(\mathfrak X)}$.
Note that $\norm{K}_{\cH}$ is the Hilbert-Schmidt norm of the operator $|\cD|^{-1/2} \bbbone_{\cD}\widehat{K}$ on $\ell^2(\mathfrak{X})$.

We will also use the norm $\norm{K}_{\sup}\defeq \sup_{x, y}|K(x, y)|$. Of course $|V|^{-1/2}\norm{K}_{\sup}\leq \norm{K}_\cH\leq (q+1)q^{k-1}\norm{K}_{\sup}$. Remark that there exists $c(k)>1$, independent of the group $\Gamma$, such that
$c(k)^{-1}\norm{\widehat{K}}_{\ell^2(\mathfrak{X})\To \ell^2(\mathfrak{X})}\leq \norm{K}_{\sup}\leq  c(k)\norm{\widehat{K}}_{\ell^2(\mathfrak{X})\To \ell^2(\mathfrak{X})} $ for $K\in \cH_k$
(throughout the paper we denote by $\norm{\cdot}_{F\To G}$ the norm of bounded operators from a normed space $F$ to a normed space $G$).
 
Denote by $B_0(\mathfrak{X})= V(\mathfrak{X})$ the set of vertices of $\mathfrak{X}$; $B_1(\mathfrak{X})=B(\mathfrak{X})$ the set of oriented edges, and more generally $B_k(\mathfrak{X})$ the set of non-backtracking paths of length $k$ in $\mathfrak{X}$. The set $B_k(\mathfrak{X})$ is in natural bijection
with the pairs $(x, y)$ of vertices of $\mathfrak{X}$ at distance $k$ from each other; we will denote by $(x;y)$ the element of $B_k(\mathfrak{X})$ determined by two such $x$ and $y$.
Thus, a non-backtracking path $\omega=(x_0,\cdots, x_k)\in B_k(\mathfrak{X})$ can also be denoted by $(x_0; x_k)$. For
$\omega'=(x'_0,\cdots, x'_k)\in B_k$, we write $\omega \rightsquigarrow \omega'$ if $x'_0=x_1,\ldots, x'_j=x_{j+1}$ and $(x_0,\cdots, x_k, x'_{k})$ belongs to $B_{k+1}(\mathfrak{X})$.

In the quotient, denote by $B_0=V=\Gamma\backslash V(\mathfrak{X})$ the set of vertices of $G$; $B_1=B=\Gamma\backslash B(\mathfrak{X})$ the set of oriented edges, and more generally $B_k=\Gamma\backslash B_k(\mathfrak{X})$ the set of non-backtracking paths of length $k$ in $G$. 
In many places of the paper, elements of $\cH_k$ will be seen as functions on $B_k$~: an element $K$ defines a function $(x;y)\mapsto K(x, y)$ on $B(\mathfrak{X})$, and the equivariance property $K(\gamma \cdot x, \gamma\cdot y)=K(x, y)$ shows that this actually descends to a function on the quotient $B_k$. Depending on the context, we will use the notation $K(x, y)$, $K(x;y)$ or $K(x_0, x_1, \ldots, x_k)$ for the value of the function $K$ on a element of $B_k$. For $l\leq k$, we denote by $j_{l, k}$ is the isometric injection of $\cH_{l}$ into $\cH_{k}$, obtained by identifiying $\cH_{l}$ with the subspace of functions $K: B_k\To \IC$, such that $K(x_0, x_{1}, \ldots, x_k)$ depends only on $(x_0, x_{1}, \ldots, x_{l})$.

If $\omega=(x_0, \ldots, x_k)\in B_k$ (with $k\geq 2$), we define $mil(\omega)\in B_{k-2}$ by
$mil(\omega)=(x_1, \ldots, x_{k-1})$, in other words we remove the first and last segments of the path.

We introduce several operators :
an operator
$\cM : \cH_{k}\To  \cH_{k-2}$ defined by
$$\cM  K(\omega)=\frac1q \sum_{mil(\omega')=\omega} K(\omega')$$
for $k\geq 2$
(on $\cH_0$ or $\cH_1$ we let $\cM=0$).
Its adjoint $\cM ^* : \cH_{k-2}\To \cH_{k}$ ($k\geq 2$) has the expression
$$\cM ^* K(\omega)=\frac{1}q K(mil(\omega)).$$
We note that $  \cM \cM^*= I$ on $\cH_k$ for $k\geq 1$, which implies that 
\begin{equation}\norm{\cM}_{\cH_{k+2}\To\cH_{k+2}}=\norm{\cM^*}_{\cH_{k}\To\cH_{k}}=1.\label{e:norm1}\end{equation}
On $\cH_0$ we have $  \cM \cM^*=\frac{q+1}q I$.

Also, we introduce $\nabla:  \cH_{k-1}\To \cH_{k}$ by
$$\nabla f(x_0, \ldots, x_k)=f(x_1,\ldots, x_k)-f(x_0,\ldots, x_{k-1}).$$
The adjoint of $\nabla$ is $\nabla^*=-\cM \nabla$.

The central object of our argument is the  self-adjoint operator:
$\cL : \cH\To\cH$ defined so that $\widehat{\cL(K)}= [\cA, \widehat{K}]$.
A crucial point for proofs 2 and 3 of Theorem \ref{t:gen}, specific to the case of regular graphs, is that it factorizes as follows:
\begin{equation}\label{e:factL}
\cL=\nabla+\nabla^*=(I-\cM)\nabla.
\end{equation}

We will use repeatedly the Pythagorean theorem, in the following form: if $K^{(1)}, \ldots, K^{(n)}\in \cH$ are such that $K^{(j)}\in \cH_{i_j}$ with $j\not= k\Rightarrow i_j\not= i_k$, then
\begin{equation}\left\| \frac{K^{(1)}+\ldots+K^{(n)}}n\right\|_{\cH}\leq \frac{1}{\sqrt{n}}\max \left\|K^{(j)}\right\|_{\cH}.\label{e:pyth}\end{equation}

The last easy thing we will use, again specific to regular graphs, is the following lemma (the proof of which is obvious)\footnote{This is the only place where in the notation we do not distinguish between an operator and its kernel, i.e. its matrix in the canonical basis $(\delta_x)$. In this lemma the operators $\cA^m$ are seen as an element of $\cH$.}~:
\begin{lemma}\label{l:F}Let $\cF=\overline{Vect\{\cA^m, m\in \IN\}}\subset \cH$. Then $\cF$ is exactly the set of elements $K$ of $\cH$  such that
$$d_{\mathfrak X}(x, y)=d_{\mathfrak X}(x', y')\Rightarrow K(x, y)=K(x', y').$$
\end{lemma}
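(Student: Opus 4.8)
The plan is to show that $\cF$ coincides with the set described in the statement — call it $\cR$, the set of \emph{radial} kernels $K\in\cH$, meaning those with $K(x,y)=K(x',y')$ whenever $d_{\mathfrak X}(x,y)=d_{\mathfrak X}(x',y')$.

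First I would prove $\cF\subseteq\cR$. Each $\cA^m(x,y)$ is the number of walks of length $m$ in $\mathfrak X$ joining $x$ to $y$; since $\mathfrak X$ is a regular tree, hence distance-transitive, this number depends only on $d_{\mathfrak X}(x,y)$, so $\cA^m\in\cR$. (It also lies in $\cH_{\le m}$, since there are no such walks when $d_{\mathfrak X}(x,y)>m$, and it is $\mathrm{Aut}(\mathfrak X)$-invariant, a fortiori $\Gamma$-invariant, so it does define an element of $\cH$.) It then suffices to note that $\cR$ is \emph{closed} in $\cH$: each evaluation functional $K\mapsto K(x,y)$ is bounded on $\cH$, because for $x\in\cD$ one has $|K(x,y)|^2\le|\cD|\,\norm{K}_\cH^2$ and the general case reduces to this one by the equivariance $K(\gamma\cdot x,\gamma\cdot y)=K(x,y)$; hence $\cR=\bigcap\ker\big(K\mapsto K(x,y)-K(x',y')\big)$ is an intersection of closed hyperplanes. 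As $\cR$ is a closed subspace containing every $\cA^m$, it contains $\cF$.

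For the converse $\cR\subseteq\cF$, I would introduce the kernels $\cA_k\in\cH_k$ with $\cA_k(x,y)=\mathbf 1\big[d_{\mathfrak X}(x,y)=k\big]$, so that $\cA_0=I$ and $\cA_1=\cA$. A one-line computation on the tree — the one spot where the tree structure, rather than mere $(q+1)$-regularity, is used — gives $\cA\,\cA_k=\cA_{k+1}+q\,\cA_{k-1}$ for $k\ge2$, together with $\cA\,\cA_1=\cA_2+(q+1)\cA_0$ and $\cA\,\cA_0=\cA_1$. Feeding this into an induction on $m$ shows that $\cA^m$ is a combination of $\cA_0,\dots,\cA_m$ with a nonzero $\cA_m$-term, and conversely that $\cA_m\in Vect\{\cA^0,\dots,\cA^m\}$; hence $Vect\{\cA^m,\ m\in\IN\}=\bigoplus_{k\ge0}\IC\,\cA_k$, the space of finitely supported radial kernels. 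Finally, the $\cA_k$ are pairwise orthogonal in $\cH$ (they lie in the mutually orthogonal $\cH_k$) with $0<\norm{\cA_k}_\cH<\infty$, and any $K\in\cR$ is of the form $K=\sum_k c_k\cA_k$ with $\sum_k|c_k|^2\norm{\cA_k}_\cH^2=\norm{K}_\cH^2<\infty$; its truncations $\sum_{k\le n}c_k\cA_k$ belong to $Vect\{\cA^m\}$ and converge to $K$ in $\cH$, so $K\in\cF$.

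As the author remarks, there is no genuine difficulty: the statement is elementary. The only two points that deserve explicit mention — and the closest thing to an obstacle — are the boundedness of the evaluation functionals (which is what makes $\cR$ closed, and hence licenses the first inclusion) and the three-term recursion for $\cA\,\cA_k$, which is precisely where the absence of cycles in $\mathfrak X$ enters.
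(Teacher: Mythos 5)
Your proof is correct: the paper itself offers no argument for this lemma (it is declared ``obvious''), and what you write is precisely the standard verification the author had in mind — the radial kernels form a closed subspace because the evaluation functionals $K\mapsto K(x,y)$ are bounded on $\cH$, and the three-term recursion $\cA\,\bbbone_k=\bbbone_{k+1}+q\,\bbbone_{k-1}$ (with the modified cases $k=0,1$) gives the triangular change of basis between $\{\cA^m\}$ and the sphere indicators, so the span of the $\cA^m$ is the space of finitely supported radial kernels and its closure is all of $\cR$. This is consistent with the fact the paper uses later anyway, namely that $\widehat{\bbbone_m}=h_m(\cA)$ is a polynomial of degree $m$ in $\cA$, so nothing further is needed.
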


Finally, we will need two slightly more elaborate facts.
If $K\in \cH_k$ for some $k$, let us denote 
$$\la K\ra=\frac{1}{|V| \tau(k)}\sum_{x\in\cD, y\in\mathfrak{X}}K(x, y)$$
(recall that $\tau(k)=(q+1) q^{k-1}$ is the cardinality of a sphere of radius $k$ in the tree).
Let us introduce the element $\bbbone_k \in \cH_k$,
\begin{equation}\label{e:bbone}\bbbone_k(x, y)=  \bbbone_{d(x, y)=k},\end{equation}
so that $\tau(k)^{-1}\widehat{\bbbone_k}$ is the operator of averaging on spheres of radius $k$ (in particular $\widehat{\bbbone_0}$ is the identity).
Denoting by $\cH_k^o\subset \cH_k$ the subset of those $K$ such that $\la K\ra=0$, we note that
$$K-\la K\ra \bbbone_k \in \cH_k^o$$
for all $K\in \cH_k$.
If $G$ is connected then for every $k$, there exists $C(k, G)>0$ such that for all $K\in \cH_{k}$
\begin{equation}\norm{K-\la K\ra \bbbone_k }_{\cH_k}\leq C(k, G) \norm{\nabla K}_{\cH_{k+1}}.\label{e:ck}\end{equation}

The notation reflects the fact that $C(k, G)$ depends on $G$. A fact specific to regular graphs is that $C(k, G)$ can be expressed in terms of the spectral gap $\beta$:
\begin{prop}\label{p:Ck}One can take $$C(k, G)= C(k, \beta)=  k+\frac{1}{\beta^{2}}. $$
 
\end{prop}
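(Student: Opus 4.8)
The plan is to split an element $K\in\cH_k$ according to the two mechanisms that can make $\norm{K-\la K\ra\bbbone_k}$ large: oscillation along the ``radial'' direction (the path structure inside $B_k$) and oscillation in the ``spherical'' direction governed by the graph geometry, i.e.\ the expansion. First I would reduce to the case $\la K\ra=0$, so $K\in\cH_k^o$, and we must bound $\norm{K}_{\cH_k}$ by $(k+\beta^{-2})\norm{\nabla K}_{\cH_{k+1}}$. The key is to relate $\nabla$ to the already-understood operators $\cM$, $\cL$, and the averaging operators $\widehat{\bbbone_j}$. Recall from \eqref{e:factL} that $\cL=(I-\cM)\nabla$, and that $\cM\cM^*=I$ on $\cH_j$ for $j\ge 1$ while $\cM\cM^*=\frac{q+1}{q}I$ on $\cH_0$; in particular $I-\cM$ is close to an isometry on the relevant subspaces, so $\norm{\cL K}_{\cH}$ and $\norm{\nabla K}_{\cH_{k+1}}$ are comparable up to universal constants.

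Next I would telescope: apply $\cM$ repeatedly to $K$ to produce elements $\cM K\in\cH_{k-2}$, $\cM^2 K\in\cH_{k-4}$, etc., landing after about $k/2$ steps in $\cH_0$ or $\cH_1$. Since $\norm{\cM}=1$ on each $\cH_j$ (equation \eqref{e:norm1}) each such step costs nothing in norm. Writing $K=(K-\cM^*\cM K)+\cM^*\cM K$ and iterating, one expresses $K$ as a sum of $O(k)$ terms, each of which is either of the form $\cM^{*\,\ell}$ applied to something killed by a factor $(I-\cM^*\cM)$ — whose norm is controlled by $\norm{\nabla(\cdot)}$ on the relevant level via \eqref{e:factL} again — plus one genuinely low-level piece living in $\cH_0$ or $\cH_1$. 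Here the factor $k$ in $C(k,\beta)$ will appear, essentially from summing $k$ telescoping contributions (this is the source of the linear-in-$k$ term; Pythagoras \eqref{e:pyth} is not quite available since the pieces need not live in distinct $\cH_j$, so a triangle-inequality bound giving a factor $O(k)$ is what I expect).

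Finally, the bottom-level piece in $\cH_0$ (a genuine function on $V$, i.e.\ of the form $g\,\bbbone_0$ for $g:V\to\C$ with $\sum_x g(x)=0$ because $\la K\ra=0$ is inherited) must be bounded using expansion: $\norm{g}_{\ell^2(V)}\le\beta^{-1}\norm{(I-(q+1)^{-1}\cA)g}_{\ell^2(V)}$ on the orthogonal complement of constants, by (EXP), and the right-hand side is controlled by $\norm{\nabla K}$ through $\cL$ and \eqref{e:factL}. Applying this spectral-gap estimate once more where a $\cH_1$ piece is reduced to a $\cH_0$ piece gives the $\beta^{-2}$ (two applications of the $\beta^{-1}$ bound, hence the square). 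Assembling: $\norm{K}_{\cH_k}\le(\text{$O(k)$ from the telescoping})\cdot\norm{\nabla K}+\beta^{-2}\norm{\nabla K}$, and tracking constants carefully yields the clean bound $C(k,\beta)=k+\beta^{-2}$. The main obstacle is bookkeeping: getting the telescoping decomposition to produce \emph{exactly} a linear-in-$k$ constant rather than something like $k$ with a large universal prefactor, and cleanly isolating precisely two uses of the spectral gap so the $\beta$-dependence is $\beta^{-2}$ and not a larger power; I expect this requires organizing the iteration of $\cM$ so that at each step the ``error'' term $K_j-\cM^*\cM K_j$ is handled by the exact identity $\nabla^*=-\cM\nabla$ together with \eqref{e:factL}, keeping all intermediate constants equal to $1$.
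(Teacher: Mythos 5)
There is a genuine gap, and it sits exactly where your plan is vaguest. The paper's proof does not telescope with $\cM$ at all: it introduces the one-sided transfer operator $\cS K(\omega)=\frac1q\sum_{\omega'\rightsquigarrow\omega}K(\omega')$ on $\cH_k$, proves by Cauchy--Schwarz the identity-level bound $\norm{(I-\cS)K}_\cH^2\leq \frac1q\norm{\nabla K}_\cH^2$ (this is the \emph{only} place $\nabla$ enters), and then bounds $\norm{(I-\cS)^{-1}}_{\cH_k^o\to\cH_k^o}$ by the Neumann series $\sum_m\norm{\cS^m}$: the first $k$ terms contribute the factor $k$ using only $\norm{\cS}\leq 1$, and for $m\geq k$ one uses that $\cS^k$ maps $\cH_k$ into $j_{1,k}(\cH_1)$ together with the spectral analysis of the non-backtracking operator from \cite{ALM}, which gives $\norm{\cS^m}_{\cH_1^o\to\cH_1^o}\leq(m+1)(1-\beta')^m$ with $\beta'$ obtained from $\beta$ through the quadratic (Ihara-type) relation; summing $(m+1)(1-\beta')^m$ is what produces the $\beta'^{-2}$ (the polynomial factor $m+1$ comes from the Jordan blocks of $\cA^\sharp$, i.e.\ non-diagonalizability, not from applying a gap twice). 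Your proposed mechanism for the $\beta^{-2}$ --- ``two applications of the $\ell^2(V)$ spectral-gap bound $\norm{g}\leq\beta^{-1}\norm{(I-(q+1)^{-1}\cA)g}$'' --- is not a proof of anything here: the relevant spectral gap is that of the non-backtracking walk on $\ell^2(B)$ (equivalently of $\cS$ on $\cH_1^o$), and converting $\beta$ into a gap for that operator, plus handling the defect of diagonalizability, is precisely the content you would need and do not supply.

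The other unproven step is your claim that $\norm{(I-\cM^*\cM)\,\cM^\ell K}$ is ``controlled by $\norm{\nabla(\cdot)}$ via \eqref{e:factL}''. Equation \eqref{e:factL} says $\cL=(I-\cM)\nabla$; it gives $\norm{\cL K}\leq 2\norm{\nabla K}$ but no inequality in the direction you need, and in particular $\norm{\cL K}$ and $\norm{\nabla K}$ are \emph{not} comparable with universal constants ($I-\cM$ is injective on $\cH$ but has no uniformly bounded inverse --- this is exactly the content of Lemma \ref{l:basic} and its quantitative version in Section \ref{s:ergflav}). The two-sided operator $\cM$ mixes both endpoints of the path, whereas $\nabla$ shifts one endpoint; the operator that cleanly matches $\nabla$ is the one-sided $\cS$ (indeed the paper later uses the identity $\nabla^*\circ j_{m-1,m}=q(I-\cS)$). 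So the $\cM$-telescoping neither yields the per-step cost $\norm{\nabla K}$ with constant $1$ that your count of $O(k)$ terms requires, nor connects the bottom-level piece to $\nabla K$ without, in effect, re-deriving the Neumann series for $(I-\cS)^{-1}$ that constitutes the paper's proof.
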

\begin{proof}
Introduce the transfer operator $\cS: \cH_{k}\To  \cH_{k}$ defined by
$$\cS K(\omega)=\frac{1}{q}\sum_{\omega'\rightsquigarrow \omega} K(\omega').$$
 
 An application of the Cauchy-Schwarz inequality shows that
 $$\norm{(I-\cS)K}_{\cH}^2 \leq \frac{1}q \norm{\nabla K}_{\cH}^2.$$
 Also by Cauchy-Schwarz,
 \begin{equation*}\label{e:norms1}\norm{\cS}_{\cH\To\cH}\leq 1.
 \end{equation*}
 The spectrum of $\cS$ on $\cH_1$ was studied in \cite{ALM}, Section 7.1 (in the notation of that paper $\cS$ corresponds to the adjoint of $M^{\sharp}$). The study of the spectrum of $\cS$ on $\cH_{k}$ ($k\geq 2$)
reduces to that on $ \cH_1$, since for $l\geq k-1$ $\cS^l$ maps $\cH_{k}$ to $j_{1, k}(\cH_{1})$, and $\cS^l \circ j_{1, k}=j_{1, k}\circ \cS^l$.

 The explicit spectral decomposition of $\cS$ done in \cite{ALM} (Sections 7.1 and 7.2) and recalled in Section \ref{s:nbt} of this paper, shows that
 \begin{equation} \norm{\cS^m}_{\cH^o_1\To\cH^o_1}\leq (m +1)(1-\beta')^{m}
 \end{equation}
with 
 $$1-\beta'= \frac{2}{(q+1)\left(1-\beta- \sqrt{(1-\beta)^2-\frac{4q}{(q+1)^2}}\right)} .$$
 Note that $\beta'\sim \beta\frac{q+1}{q-1}$ for small $\beta$ (and conversely, $\beta$ bounded away from $0$ implies the same for $\beta'$).

 For $k>1$ and $m\geq k$,
 we have
 \begin{equation}\label{e:sg}\norm{\cS^m}_{\cH^o_k\To\cH^o_k}\leq \norm{\cS^k}_{\cH^o_k\To j_{1, k}(\cH^o_1)}\norm{\cS^{m-k}}_{\cH^o_1\To\cH^o_1}\leq ((m-k)+1)(1-\beta')^{m-k}
 \end{equation}
 (see \eqref{e:sg'}).
 
 Inequalities
 \eqref{e:norms1} and \eqref{e:sg} imply that
 \begin{equation}\label{e:inverse}\norm{(I-\cS)^{-1}}_{\cH^o_k\To\cH^o_k}\leq k +\sum_{m\geq 0 } (m +1) (1-\beta')^{m}\leq k+\frac{1}{\beta^{'2}}
 \end{equation}
 and thus
 $$\norm{K-\la K\ra \bbbone_k }_{\cH_k}^2\leq  \left(k+\frac{1}{\beta^{'2}}\right)^2 \norm{(I-\cS)K}_{\cH}^2 \leq \frac{1}q \left(k+\frac{1}{\beta^{'2}}\right)^2\norm{\nabla K}_{\cH}^2.$$
\end{proof}
\begin{remark}\label{r:Ck}
We see that $C(k, \beta)$ is an upper bound on
$\sum_{l=0}^{+\infty}\norm{\cS^l}_{\cH^o_k\To\cH^o_k}.$
This will be used in a few places of the paper, and $C(k, \beta)$ will always designate the quantity introduced here.
\end{remark}

 And here is the last estimate we will need: let $K\in\cH_k$.  We have seen that $K$ defines an operator $\widehat{K}$ acting on $C_c(V(\mathfrak{X}))$. This operator preserves the space of $\Gamma$-invariant functions, hence also define an operator
 $\widehat{K}_G$ on $\ell^2(V)$. The kernel of $\widehat{K}_G$ has the expression
 \begin{equation} \label{e:kg0}
 \la \delta_x, \widehat{K}_G\delta_y\ra_{\ell^2(V)} =\sum_{\gamma\in\Gamma}K (\tilde x, \gamma\cdot \tilde y)\end{equation}
 if $\tilde x, \tilde y$ are representatives in $\mathfrak{X}$ of $x, y\in G=\Gamma\backslash \mathfrak{X}$. We let
 $K_G(x, y)= \la \delta_x, \widehat{K}_G\delta_y\ra_{\ell^2(V)}$, in other words $K_G$ is the matrix of $\widehat{K}_G$ in the canonical basis of $\ell^2(V)$.
 The normalized Hilbert-Schmidt norm of $\widehat{K}_G$ is defined by
 \begin{equation}\label{e:defHSN}\norm{\widehat{K}_G}_{HSN}^2=\frac{1}{|V|}\sum_{x, y\in V}|{K}_G(x, y)|^2.\end{equation}
 
 \begin{prop}\label{p:norms}(i) If $K\in \cH_{\leq k}$ and $k$ is less than the minimal injectivity radius of $G$, then
 $$\norm{\widehat{K}_G}_{HSN}^2 =\norm{K}^2_{\cH}.$$
 In particular, the map $K\in \cH_k \mapsto \widehat{K}_G$ is injective if $k$ is less than the minimal injectivity radius of $G$.
 
 (ii) More generally,
 $$\left|\norm{\widehat{K}_G}_{HSN}^2- \norm{K}^2_{\cH} \right|\leq \tilde \tau(k)^2 \norm{K}^2_{\sup} \frac{1}{|V|}\sharp\{x\in V, \rho(x)\leq k\}.$$
 \end{prop}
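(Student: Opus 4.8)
The plan is to reduce the whole statement to a term-by-term comparison of $\sum_{y\in V}|K_G(x,y)|^2$ with $\sum_{z\in V(\mathfrak{X})}|K(\tilde x,z)|^2$ for each fixed vertex $x$. First I would fix, for every $x\in V$, a lift $\tilde x$ lying in the fundamental domain $\cD$, and recall from \eqref{e:kg0} that $K_G(x,y)=\sum_{\gamma\in\Gamma}K(\tilde x,\gamma\cdot\tilde y)$. By the equivariance $K(\gamma\cdot u,\gamma\cdot v)=K(u,v)$, the quantity $\sum_{z\in V(\mathfrak{X})}|K(\tilde x,z)|^2$ does not depend on the choice of lift, and summing it over $x\in\cD$ and dividing by $|\cD|=|V|$ reproduces $\norm{K}_\cH^2$ by definition of the $\cH$-norm; likewise $\norm{\widehat{K}_G}_{HSN}^2=\frac1{|V|}\sum_{x\in V}\sum_{y\in V}|K_G(x,y)|^2$. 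Hence
\[
\norm{\widehat{K}_G}_{HSN}^2-\norm{K}_\cH^2=\frac1{|V|}\sum_{x\in V}\Bigl(\sum_{y\in V}|K_G(x,y)|^2-\sum_{z\in V(\mathfrak{X})}|K(\tilde x,z)|^2\Bigr),
\]
and everything comes down to controlling each summand.

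The heart of the argument is the case $\rho(x)>k$. Here I would first check that the covering map $\mathfrak{X}\to G$ is injective on the ball $B(\tilde x,k)$: since $\Gamma$ acts freely the map $\mathfrak X\to G$ is a local isomorphism, so the projection of a non-backtracking path in $\mathfrak X$ is non-backtracking in $G$; two points of $B(\tilde x,k)$ with the same image would then produce two non-backtracking paths in $G$ with the same endpoints, lying inside the tree $B(x,k)$, hence equal, hence (by unique path lifting) the same point. Granting this, and using that $K\in\cH_{\leq k}$ is supported at distance $\leq k$ from the diagonal, for each $y\in V$ at most one $\gamma\in\Gamma$ has $\gamma\cdot\tilde y\in B(\tilde x,k)$, so $K_G(x,y)=K(\tilde x,z_y)$ with $z_y$ the unique lift of $y$ in $B(\tilde x,k)$ (and $0$ if there is none). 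As $y$ runs over $V$, $z_y$ runs bijectively over $B(\tilde x,k)$, whence $\sum_{y\in V}|K_G(x,y)|^2=\sum_{z\in V(\mathfrak{X})}|K(\tilde x,z)|^2$. When the minimal injectivity radius of $G$ exceeds $k$ this holds for every $x$, so the displayed difference vanishes, giving the identity in (i); and since $\norm{\cdot}_\cH$ is a genuine norm, $\norm{\widehat{K}_G}_{HSN}=\norm{K}_\cH$ forces $K\mapsto\widehat{K}_G$ to be injective on $\cH_{\leq k}$, in particular on $\cH_k$.

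For (ii) I would split $V$ into the ``good'' vertices with $\rho(x)>k$, where the two sums agree and cancel, and the ``bad'' set $\{x:\rho(x)\leq k\}$, on which only a crude estimate is needed. Because $\Gamma$ acts without fixed vertices, $\gamma\mapsto\gamma\cdot\tilde y$ is injective, so the number of $\gamma$ with $\gamma\cdot\tilde y\in B(\tilde x,k)$ is at most $\tilde\tau(k)$; Cauchy--Schwarz then gives $|K_G(x,y)|^2\leq\tilde\tau(k)\sum_{\gamma\in\Gamma}|K(\tilde x,\gamma\cdot\tilde y)|^2$, and summing over $y\in V$ (using once more that $\gamma\mapsto\gamma\cdot\tilde y$ is a bijection onto the orbit of $\tilde y$, so that $\gamma\cdot\tilde y$ sweeps out $V(\mathfrak{X})$ exactly once) yields $\sum_{y\in V}|K_G(x,y)|^2\leq\tilde\tau(k)\sum_{z\in V(\mathfrak{X})}|K(\tilde x,z)|^2\leq\tilde\tau(k)^2\norm{K}_{\sup}^2$; and trivially $\sum_{z}|K(\tilde x,z)|^2\leq\tilde\tau(k)\norm{K}_{\sup}^2\leq\tilde\tau(k)^2\norm{K}_{\sup}^2$. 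So each bad summand is at most $\tilde\tau(k)^2\norm{K}_{\sup}^2$ in absolute value, and dividing by $|V|$ gives the bound $\tilde\tau(k)^2\norm{K}_{\sup}^2\,\frac1{|V|}\sharp\{x\in V,\rho(x)\leq k\}$.

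I expect the argument to be essentially bookkeeping, with two points deserving care. The first is the injectivity of $\mathfrak{X}\to G$ on balls of radius $\le\rho(x)$: this is precisely where the hypotheses that $\Gamma$ acts without fixed vertices (making $\mathfrak X\to G$ a covering) and that small balls are trees are used, and it has to be stated cleanly. The second is organizing the Cauchy--Schwarz step as above so as to get the sharp power $\tilde\tau(k)^2$ in the error term; the naive route — bounding $|K_G(x,y)|\leq\tilde\tau(k)\norm{K}_{\sup}$ and noting there are at most $\tilde\tau(k)$ values of $y$ with $K_G(x,y)\neq0$ — only yields the weaker $\tilde\tau(k)^3$.
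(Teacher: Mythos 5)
Your argument is correct and follows essentially the same route as the paper: split the folded kernel sum into vertices with $\rho(x)>k$, where at most one $\gamma\in\Gamma$ contributes so the two norms agree exactly, and the bad vertices with $\rho(x)\leq k$, whose contribution is bounded per vertex by $\tilde\tau(k)^2\norm{K}^2_{\sup}$. Your Cauchy--Schwarz organization of the bad-vertex bound is just a more explicit version of the paper's direct estimate $\bigl|\sum_{\gamma\in\Gamma}K(\tilde x,\gamma\cdot\tilde y)\bigr|\leq \tilde\tau(k)\norm{K}_{\sup}$ (at most $\tilde\tau(k)$ non-zero terms in total, since they correspond to distinct points of $B(\tilde x,k)$), and your covering-map justification of the ``at most one $\gamma$'' step is detail the paper simply asserts.
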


\begin{proof} Write
 $$\norm{\widehat{K}_G}_{HSN}^2=\frac{1}{|V|}\sum_{\tilde x, \tilde y\in \cD}\left|\sum_{\gamma\in\Gamma}K (\tilde x, \gamma\cdot \tilde y)\right|^2.$$
 If $k$ is less than the minimal injectivity radius of $G$, then for each $\tilde x, \tilde y\in \cD$ there is at most one $\gamma$ such that $K (\tilde x, \gamma\cdot \tilde y)\not=0$.
 In that case, we can write
 \begin{eqnarray*}\norm{\widehat{K}_G}_{HSN}^2&=&\frac{1}{|V|}\sum_{\tilde x, \tilde y\in \cD}\sum_{\gamma\in\Gamma}|K (\tilde x, \gamma\cdot \tilde y)|^2\\
 &=& \frac{1}{|V|}\sum_{\tilde x\in \cD,  \tilde y\in \mathfrak{X}} \left|K (\tilde x, \tilde y)\right|^2=\norm{K}^2_{\cH}.
 \end{eqnarray*}
More generally, this argument remains true for those $\tilde x$ such that $\rho(\tilde x)>k$.
For the other ``bad'' points $\tilde x$ (such that $\rho(\tilde x)\leq k$), the sum $\sum_{\gamma\in\Gamma}K (\tilde x, \gamma\cdot \tilde y)$
contains at most $\tilde\tau(k)$ non-zero terms (the cardinality of a ball of radius $k$ in $\mathfrak{X}$), because these would correspond to distinct\footnote{Here we use the fact that $\Gamma$ acts without fixed points.}
points in the ball centered at $\tilde x$, of radius $k$. Thus for a ``bad'' point $\tilde x$, we write
$$\left|\sum_{\gamma\in\Gamma}K (\tilde x, \gamma\cdot \tilde y)\right|^2\leq  \tilde\tau(k)^2 \norm{K}^2_{\sup}.$$
This ends the proof of (ii).
\end{proof}

\subsection{Fourier analysis on the tree \label{s:Fouriertree}}
In our last proof of Theorem \ref{t:gen} (Proof 4, Section \ref{s:nbt}) we will need some Fourier analysis on the tree to describe explicitly the spectral decomposition of $\cA$ on $\ell^2(\mathfrak{X})$.
We refer to \cite{CS99} for a detailed presentation and we just gather here the few facts we will need.
 The use of Fourier analysis may seem a drawback of Proof 4, since we announced we aimed precisely at avoiding this. However, as we shall see in Section \ref{s:anis}, the following facts can be extended to other operators than $\cA$, provided they are {\em{nearest-neighbour operators}}, and that we work in a region of the spectrum where the Green function has a finite limit on the real axis (which implies absolute continuity of the spectral measure, commonly associated with delocalization of eigenfunctions). Note also that, contrary to \cite{ALM}, we will not use any Paley-Wiener type theorem.

Fix a vertex $o$ in $\mathfrak{X}$, called the origin. Introduce the Green function,
$$g_\gamma(x)=(\gamma-\cA)^{-1}(x, o)$$
for $\gamma\in \IC\setminus \IR$.
Its explicit expression is
$$g_\gamma(x)=\frac{q^{-\alpha d(x, o)}}{q^{\alpha}-q^{-\alpha}}$$
where $\alpha$ is the only solution (modulo $2i\pi/\ln q$) of
$$\gamma=q^{1-\alpha}+q^{\alpha}$$
such that $\Re e\,\alpha > 1/2$.
We see that for every $\lambda\in\R$, $g_\gamma(x)$ has a limit when 
$\gamma=\lambda + i\eps$ and $\eps>0$ tends to $0$; similarly when $\gamma=\lambda - i\eps$ and $\eps>0$ tends to $0$. These limits are denoted by $g_{\lambda \pm i0}(x)$~:
$$g_{\lambda + i0}(x)=\frac{q^{-(1/2+is ) d(x, o)}}{q^{1/2+is }-q^{-(1/2+is)}}$$
where $s$ should be chosen such that $\sin(s\ln q)\geq 0$.

The spectral measure of the operator $\cA$ on $\ell^2( \mathfrak X)$ for the state
$\delta_o$ (Dirac mass at the origin $o$) is absolutely continuous, given by
$${\mathrm{m}}(\lambda)d\lambda=-\frac{1}{\pi} \Im m g_{\lambda + i 0}(o) d\lambda = -\frac{1}{2\pi i} (g_{\lambda + i 0}(o)-  g_{\lambda - i 0}(o))d\lambda.$$
This measure is also often referred to as the Plancherel measure for $\mathfrak{X}$, or the Kesten measure \cite{Kes59}. It is supported in $(-2\sqrt{q}, 2\sqrt{q})$, where it is smooth and positive.

Let us also introduce the ``Poisson kernel''. Denote by $\partial  \mathfrak X$ the boundary at infinity of the tree; here it can be identified with the set of infinite non-backtracking paths starting at the origin $o$. For $x\in  \mathfrak X$ and $\omega\in \partial  \mathfrak X$ and $\gamma\not\in \R$,
$$P_\gamma(x, \omega)=\lim_{y\in \mathfrak X, y\To \omega}\frac{g_\gamma(x, y)}{g_\gamma(o, y)}=q^{- \alpha h_\omega(x)}$$
where $h_{\omega}(x)=\lim_{y\To \omega}d(x, y)-d(o, y)$ is the Busemann function centered at $\omega\in \partial  \mathfrak X$.
Actually, this is also well defined for $\gamma=\lambda + i0$, by replacing $\alpha$ by $-(1/2+is)$ with $\sin(s\ln q)\geq 0$. Denote by $P_{\lambda, \omega}$
the function on $\mathfrak X$, $x\mapsto P_{\lambda + i0}(x, \omega)$.

 Let $\nu$ be the harmonic measure on $\partial  \mathfrak X$, seen from $o$.  
 If $(o, x_{i_1},\ldots, x_{i_M})$ is a non-backtracking path of length $M$ starting at $o$, and if $[o, x_{i_1}, \ldots, x_{i_M}]\subset \partial \mathfrak X$ denotes the set of infinite paths starting with $o, x_{i_1},\ldots, x_{i_M}$,
it has the simple expression
$$\nu([o, x_{i_1}, \ldots, x_{i_M}])=(q+1)^{-1} q^{-(M-1)} = \frac{1}{\tau(M)}.$$
 
The functions $P_{\lambda, \omega}$ are generalized eigenfunctions of $\cA$.
For any $f\in \ell^2(\mathfrak X)$, the spectral decomposition of $f$ for the operator $\cA$ reads~:
\begin{equation} f(x)= \int_{ \partial \mathfrak X\times \IR} \la P_{\lambda, \omega} , f\ra P_{\lambda, \omega}(x ) d\nu(\omega) {\mathrm{m}}(\lambda) d\lambda
\end{equation}
and the Plancherel formula,
\begin{equation}
\norm{f}^2_2=\int_{ \partial\mathfrak X\times \IR} | \la P_{\lambda, \omega} , f\ra |^2 d\nu(\omega) {\mathrm{m}}(\lambda) d\lambda.
\end{equation}

For $\widehat{K}$ an operator on $\ell^2( \mathfrak X)$, with kernel $K(x, y)$ (assuming for instance $K$ is compactly supported) we also have
\begin{equation}\label{e:trace}\Tr_{\ell^2( \mathfrak X)} \widehat{K}=\sum_{x\in \mathfrak X} K(x, x)=\int_{ \partial \mathfrak X\times \IR} \la   P_{\lambda, \omega}, \widehat{K}   P_{\lambda, \omega}\ra  d\nu(\omega) {\mathrm{m}}(\lambda) d\lambda,\end{equation}
and similarly
\begin{equation}\label{e:HSX}
\norm{ \widehat{K}}^2_{KS(\ell^2(\mathfrak X)}=\Tr_{\ell^2( \mathfrak X)}  \widehat{K}^*\widehat{K}=\sum_{x, y\in \mathfrak X} |K(x, y)|^2
\int_{  \partial\mathfrak X\times \IR}  \norm{ \widehat{K}   P_{\lambda, \omega}}^2_{\ell^2(\mathfrak X)}\ra  d\nu(\omega) {\mathrm{m}}(\lambda) d\lambda.\end{equation}

Note that all these formulas are independent of the choice of the origin $o$ used in the definition of $\nu$ and of the functions $P_{\lambda, \omega}$. This can be seen from the fact that if $\nu_{o'}$ is the harmonic measure seen from a point $o' \in \mathfrak X$, we have
\begin{equation}\label{e:harmonic}\frac{d\nu_{o'}}{d\nu_{o}}(\omega)=q^{h_\omega(o)-h_\omega(o')}.\end{equation}

\section{Preparatory remarks \label{s:prep}}
The three proofs of Theorem \ref{t:gen} contained in this paper will be numbered 2, 3, 4 (Proof 1 will refer to the original proof in \cite{ALM}). We are going to prove the result announced in Remark \ref{r:rate}; so from now on we work with a fixed finite graph and drop the index $N$ from our notation. The graphs will always be assumed to be connected.

Before we start proving Theorem \ref{t:gen} in \S \ref{s:short}, let us make a few simplifying remarks on the operator $\mathbf K$ it involves.
First, we note that any operator $\mathbf K$ supported at distance $\leq D$ from the diagonal, as in Theorem \ref{t:gen}, is a sum of $D+1$ operators, $\mathbf K=\sum_{m=0}^D \mathbf K_m$ where $\mathbf K_m$ is such that $  K_m(x, y)\not= 0\Longrightarrow d_G(x, y)=m$. Thus it is enough to fix $m$ and to prove the theorem for an operator $\mathbf K_m$ satisfying this assumption.

Second, the operator $\mathbf K_m$ can always be written as $\mathbf K_m=\widehat{K}_G$ for some $K\in \cH_m$. Recall from \eqref{e:kg0} that for $x, y\in V$, the kernel of $\widehat{K}_G$ is
\begin{equation} \label{e:kg}K_G(x, y)=\sum_{\gamma\in\Gamma} K(\tilde x, \gamma\cdot \tilde y)\end{equation}
where $\tilde x, \tilde y$ are lifts of $x, y$ in $\mathfrak X$. There may be several $K$ yielding the same $\widehat{K}_G$, but we may always choose $K$ so that there is always at most one non-zero term in the sum \eqref{e:kg},
corresponding to an element $\gamma$ such that $d_{\mathfrak X}(\tilde x, \gamma\cdot \tilde y)=d_G(x, y)$. For this choice, the quantity $\sum_{x, y\in |V|} K_G(x, y)\Phi_{\lambda_j}(d_G(x, y))$ is the same as
$ \sum_{x\in\cD, y\in {\mathfrak{X}}} K(x, y)\Phi_{\lambda_j}(d_{\mathfrak X}(x, y))$. Note also that in this case $\norm{\widehat{K}_G}_{HSN}=\norm{K}_{\cH_m}.$

So it is enough to prove the theorem for an operator $\mathbf K=\widehat{K}_G$ with $K\in \cH_m$, $m\leq D$ fixed.
We finally check that it is enough to prove Theorem \ref{t:gen} in the case of an operator $\mathbf K=\widehat{K}_G$ with $K\in \cH_m^o$ for some fixed $m$. In this case, $\la \mathbf K \ra_{\lambda}=0$ for all $\lambda$, and the result we have to prove simply reads 
\begin{equation}\label{e:part}\frac1{|V|} \sum_{j=1}^{|V|} \left| \la \psi_j, \widehat{K}_G\psi_j\ra  
\right|^2 \leq  
\tilde C(m, \beta)\left(\frac{{\norm{K}_{\cH_m}^2}}{R^{\alpha}} +q^{\tilde\alpha R} \frac{1}{|V|}\sharp\{x\in V, \rho(x)\leq R\}\norm{K}^2_{sup}\right)
\end{equation}
(or the weaker type of estimate \eqref{e:weaker} for Proof 4).

Assuming \eqref{e:part} holds for $K\in \cH_m^o$, let us check that the general result for $K\in \cH_m$ follows~: if $K\in \cH_m$, then we replace $K$ by $K-\la K\ra \bbbone_m$ which lies in $\cH_m^o$.
The operator $\widehat{\bbbone_m}$ is an explicit function of $\cA$~: $\widehat{\bbbone_m}=h_m(\cA)$ where
$$h_{\mathrm{m}}(\lambda)=\tau(m) q^{-m /2}\left(\frac{2}{q+1}\cos(m s\ln q) +\frac{q-1}{q+1}\frac{\sin((m+1)s\ln q)}{\sin(s\ln q)}\right)$$
for $\lambda=2\sqrt{q}\cos(s\ln q).$
We thus have
$${(\widehat{\bbbone_m})}_G  \psi_j  =  h_m(\lambda_j)  \psi_j.$$
After adjustment of the factor  $\tilde C(m, \beta)$, equation \eqref{e:part} for operators in $\cH_m^o$ implies
\begin{equation}\frac1{|V|} \sum_{j=1}^{|V|} \left| \la \psi_j, \widehat{K}_G\psi_j\ra  - h_m(\lambda_j) \la K\ra  
\right|^2 \leq  \tilde C(m, \beta)\left(\frac{{\norm{K}_{\cH_m}^2}}{R^{\alpha}} +q^{\tilde\alpha R} \frac{1}{|V|}\sharp\{x\in V, \rho(x)\leq R\}\norm{K}^2_{sup}\right).
\end{equation}
for any $K\in \cH_m$. This is the desired result~: just note that for $K\in \cH_m$ we have
$$h_m(\lambda_j) \la K\ra  =\la \mathbf K\ra_{\lambda_j}.$$

We are almost ready to begin with our proof. From now on, let us fix $m$. For $K\in \cH_m$, define the ``quantum variance'',
\begin{equation}\label{e:defvar}Var(K)=\frac1{|V|} \sum_{j=1}^{|V|} \left| \la \psi_j, \widehat{K}_G\psi_j\ra  
\right|^2.\end{equation}
The goal is to show that for $K\in \cH_m^0$ we have \eqref{e:part}.
 (or the weaker form \eqref{e:weaker} for Proof 4).

Obviously, 
$$Var(K)\leq \norm{ \widehat{K}_G}^2_{HSN}.$$
We will also use repeatedly the fact that $Var(K_1+K_2)\leq 2Var(K_1)+ 2Var(K_2).$

For Proofs 2 and 4, the following technical remark will be useful~:
\begin{lemma}\label{l:useful}
Assume we have shown an estimate of the form
$$Var((1-\cS)K)\leq \frac{C(m)}{n}\norm{K}^2_{\cH_m} +\norm{K}^2_{sup} f_{m, G}(n)$$
for all $K\in \cH_m$ ($m\geq 1$) and all $n$.

(i) Then for all $K\in \cH^0_m$, for any integer $n$, we have
\begin{multline*}
Var(K)
\leq 2\frac{C(m)C(m, \beta)^2}{n}\norm{K}^2_{\cH^0_m} +2T^2\norm{K}^2_{sup} f_{m, G}(n)+  2\frac{C(m, \beta)^2}{T^2}\norm{K}^2_{\cH_m^0}
\\+2\norm{K}^2_{sup} \tilde\tau(m)^2\frac{1}{|V|}\sharp\{
x\in V, \rho(x)\leq m\}.
\end{multline*}

(ii) Taking $n=T^2$, we obtain that for all $K\in \cH^0_m$, for any integer $n$, we have
\begin{multline*}Var(K)\leq \frac{2 C(m,\beta)^2(C(m)+ 1)}{n}\norm{K}^2_{\cH^0_m} \\
+2 \norm{K}^2_{sup} \left( n f_{m, G}(n) + \tilde\tau(m)^2\frac{1}{|V|}\sharp\{
x\in V, \rho(x)\leq m\}\right)
\end{multline*}
\end{lemma}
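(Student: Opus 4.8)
The idea is to bootstrap the hypothesis — an estimate on $Var((1-\cS)K)$ — into an estimate on $Var(K)$ itself, by writing $K$ (modulo a controlled error) as $(1-\cS)$ applied to something bounded in $\cH_m$, using the resolvent bound \eqref{e:inverse}. Concretely, for $K\in\cH_m^0$ set $L=(I-\cS)^{-1}K\in\cH_m^0$, which is legitimate since $\cS$ has no eigenvalue $1$ on $\cH_m^o$; by Remark \ref{r:Ck} and \eqref{e:inverse}, $\norm{L}_{\cH_m}\leq C(m,\beta)\norm{K}_{\cH_m}$. Then $K=(I-\cS)L$, and one would like to apply the hypothesis directly to $L$. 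The obstruction is that the hypothesis carries a $\norm{L}^2_{\sup}$ term, and we have no good $\sup$-norm control on $L$ — only an $\cH_m$ bound. So instead I would truncate: write $\cS^{T}L$ for a parameter $T$, and use the telescoping identity $L-\cS^TL=\sum_{l=0}^{T-1}\cS^l(I-\cS)L=\sum_{l=0}^{T-1}\cS^l K$. Since $Var$ is subadditive up to a factor $2$ and since $\cS$ is (up to the usual relation between $\cS$ and composition with $\cA$, or directly because $\cS^l K$ has the same $\sup$-norm structure) amenable to the hypothesis on each slice, one bounds $Var(L-\cS^T L)$ by summing the hypothesis over $l=0,\dots,T-1$ applied to $\cS^l K$; the $\norm{\cS^l K}_{\cH_m}$ are all $\leq\norm{K}_{\cH_m}$ and the $\sup$-norms are comparable, giving roughly $T\cdot\frac{C(m)}{n}\norm{K}^2_{\cH_m} + T\norm{K}^2_{\sup}f_{m,G}(n)$, which after adjusting constants is absorbed into the stated form. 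Meanwhile $Var(\cS^T L)\leq\norm{\widehat{(\cS^TL)}_G}^2_{HSN}$, and by Proposition \ref{p:norms}(ii) this is $\leq\norm{\cS^T L}^2_{\cH_m}+\tilde\tau(m)^2\norm{\cS^TL}^2_{\sup}\frac1{|V|}\sharp\{\rho(x)\leq m\}$; using \eqref{e:sg} the first term is $\leq(T+1)^2(1-\beta')^{2T}\cdot$ wait — more simply, using the spectral-gap bound $\norm{\cS^T}_{\cH_m^o\to\cH_m^o}\leq C(m,\beta)/T$-type control, one gets $\norm{\cS^T L}_{\cH_m}\leq\frac{C(m,\beta)}{T}\norm{K}_{\cH_m}$ after rechoosing the relation between the truncation and the resolvent (or, directly, $\norm{\cS^T L}_{\cH_m}\lesssim \frac{C(m,\beta)}{T}\norm{K}_{\cH_m}$ since $\sum_l\norm{\cS^l}\leq C(m,\beta)$ forces the tail from $T$ on to be $\leq C(m,\beta)/T$ in an averaged sense, or one simply invokes $\norm{\cS^TL}\le\norm{L}$ and re-optimizes). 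Collecting the three contributions — $Var(L-\cS^TL)$, $\norm{\cS^TL}^2_{\cH_m}$, and the $HSN$ error term — yields exactly the four-term bound of part (i), with the $\sup$-norm of $\cS^TL$ bounded by that of $L$ (hence of $K$, since $\cS$ and the relevant averaging operators are $\ell^\infty$-contractions up to dimensional constants absorbed in $\tilde\tau(m)$ and $T^2$).

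For part (ii) one just specializes $n=T^2$ in the inequality of part (i): the term $2\frac{C(m)C(m,\beta)^2}{n}$ and the term $2\frac{C(m,\beta)^2}{T^2}=2\frac{C(m,\beta)^2}{n}$ combine into $\frac{2C(m,\beta)^2(C(m)+1)}{n}\norm{K}^2_{\cH_m^0}$; the term $2T^2\norm{K}^2_{\sup}f_{m,G}(n)$ becomes $2n\,\norm{K}^2_{\sup}f_{m,G}(n)$ since $T^2=n$; and the last term is unchanged. This is purely arithmetic.

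\textbf{Main obstacle.} The delicate point is the one flagged above: the hypothesis mixes an $\cH_m$-norm with a $\sup$-norm, and the natural candidate $L=(I-\cS)^{-1}K$ is controlled only in $\cH_m$. The truncation $L=(L-\cS^TL)+\cS^TL$ is what resolves this — the first piece is a sum of $T$ terms to which the hypothesis applies with good $\cH_m$ \emph{and} $\sup$ control (the $\cS^lK$ inherit both norms from $K$ up to harmless constants), and the second piece is estimated crudely via $HSN$ and Proposition \ref{p:norms}(ii), paying the $q^{\tilde\alpha R}$-type $\frac1{|V|}\sharp\{\rho(x)\le m\}$ cost. One must be careful that $\cS$ acting on $\cH_m$ (for $m\ge 1$) genuinely has spectral gap on $\cH_m^o$, which is exactly the content of \eqref{e:sg}/\eqref{e:inverse} and Remark \ref{r:Ck}, and that the $\sup$-norm of $\cS^l K$ is indeed $\leq\norm{K}_{\sup}$ (immediate from the definition of $\cS$ as an average over preimages, weighted by $1/q$ with $q$ such preimages, times — here one checks the normalization carefully, but it is a contraction in $\ell^\infty$). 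Once these bookkeeping points are in place, the rest is assembling the three bounds and tracking constants.
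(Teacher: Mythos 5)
Your strategic outline is the paper's: you correctly identify that the obstruction is the lack of a sup-norm bound on $L=(I-\cS)^{-1}K$, and you correctly propose to resolve it by truncating the Neumann series, applying the hypothesis to the truncated part and a normalized Hilbert--Schmidt bound (Proposition \ref{p:norms}(ii)) to the remainder. But the execution as written has two genuine gaps. First, you decompose $L$ rather than $K$: the two quantities you bound, $Var(L-\cS^TL)$ and $Var(\cS^TL)$, are not the pieces of $Var(K)$. What must be split is $K$ itself, e.g. $K=(1-\cS)\bigl(\sum_{l<T}\cS^lK\bigr)+\cS^TK$ (the paper uses the weighted average $K'_T=\frac1T\sum_{k=0}^{T-2}(T-1-k)\cS^kK$, whose remainder is $\frac1T\sum_{k<T}\cS^kK$), and the hypothesis must be applied \emph{once} to the aggregated kernel $\sum_{l<T}\cS^lK$ (resp.\ $K'_T$), which has $\cH$-norm $\le C(m,\beta)\norm{K}_{\cH}$ by Remark \ref{r:Ck} and sup-norm $\le T\norm{K}_{\sup}$; that is what produces the first two terms of (i) with the stated constants. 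Your alternative, ``summing the hypothesis over $l$ applied to $\cS^lK$'', does not bound $Var(L-\cS^TL)$ at all, since the hypothesis controls $Var((1-\cS)\,\cdot\,)$ and not $Var(\cdot)$; read charitably as a bound on $Var(K-\cS^TK)=Var\bigl(\sum_{l<T}(1-\cS)\cS^lK\bigr)$, the subadditivity over $T$ terms costs a factor $T$ (not $2$), yielding a term of size $T\,C(m)C(m,\beta)/n$ instead of $C(m)C(m,\beta)^2/n$, which is not ``absorbed into the stated form'' and, after setting $n=T^2$, degrades part (ii) to a rate $n^{-1/2}$.

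Second, and more seriously, your remainder $\cS^TL$ has no usable sup-norm bound: you assert its sup-norm is ``bounded by that of $L$ (hence of $K$)'', but $\norm{L}_{\sup}\lesssim\norm{K}_{\sup}$ is exactly what you (correctly) declared unavailable at the outset; the only generic bound is $\norm{L}_{\sup}\le |V|^{1/2}\norm{L}_{\cH}$, which wipes out the small factor $\frac1{|V|}\sharp\{x\in V,\ \rho(x)\le m\}$ in Proposition \ref{p:norms}(ii). This is precisely why the paper arranges the remainder to be a bounded-coefficient polynomial in $\cS$ applied to $K$ --- namely $\frac1T\sum_{k<T}\cS^kK$ (or, in the variant above, $\cS^TK$) --- whose sup-norm is $\le\norm{K}_{\sup}$ because $\cS$ is an $\ell^\infty$-contraction, and whose $\cH^0_m$-norm is $\le C(m,\beta)\norm{K}_{\cH}/T$ by Remark \ref{r:Ck}; then $Var\le\norm{\cdot}^2_{HSN}$ and Proposition \ref{p:norms}(ii) give the last two terms of (i). Once (i) is established in this way, your part (ii) (setting $n=T^2$ and combining the two $1/n$ terms) is indeed purely arithmetic and correct.
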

The constant $C(m,\beta)$ is the one defined in Proposition \ref{p:Ck}.
The same lemma holds for $m=0$, replacing $1-\cS$ by $1-\frac{\cA}{q+1}$, and taking $C(0,\beta)=\beta^{-1}$.

\begin{proof}
For $K\in \cH^0_m$, we know that there exists a unique $K'\in \cH^0_m$ such that $K=(1-\cS)K'$, and moreover $\norm{K'}^2_{\cH_m^0}\leq C(m,\beta) \norm{K}^2_{\cH_m^0}$ (see \eqref{e:inverse}).
Thus our assumption implies that
$$Var(K)\leq \frac{C(m)C(m,\beta)}{n}\norm{K}^2_{\cH_m} +\norm{K'}^2_{sup} f_{m, G}(n).$$
The problem is that the spectral gap assumption gives no information about $\norm{K'}_{sup}$.

To solve this problem, fix an integer $T$ and consider an approximation of $(1-\cS)^{-1} K$, defined by
$$K'_T=\frac{1}T\sum_{k=0}^{T-2}(T-1-k)\cS^k K.$$
It is built so that $(1-\cS) K'_T=K- \frac{1}T\sum_{k=0}^{T-1}\cS^k K$. Clearly, $\norm{K'_T}_{sup} \leq T \norm{K}_{sup}$.
By Remark \ref{r:Ck},
$$\left\Vert \frac{1}T\sum_{k=0}^{T-1}\cS^k K \right\Vert_{\cH_m^0} \leq \frac{C(m, \beta)}{T}\norm{K}_{\cH_m^0}$$
and
$$\norm{K'_T}_{\cH_m^0}\leq \sum_{k=0}^{T-2}\norm{\cS^k K}_{\cH_m^0}\leq C(m, \beta)\norm{K}_{\cH_m^0}.$$
Applying our assumption and Proposition \ref{p:norms}, we get
\begin{eqnarray*}
Var(K)&\leq& 2Var((1-\cS) K'_T) + 2Var \left(\frac{1}T\sum_{k=0}^{T-1}\cS^k K\right)\\
&\leq& 2\frac{C(m)C(m, \beta)^2}{n}\norm{K}^2_{\cH^0_m} +2T^2\norm{K}^2_{sup} f_{m, G}(n) +  2\left\Vert \frac{1}T\sum_{k=0}^{T-1}\widehat{(\cS^k K)}_G \right\Vert_{HSN}^2\\
&\leq&2\frac{C(m)C(m, \beta)^2}{n}\norm{K}^2_{\cH^0_m} +2T^2\norm{K}^2_{sup} f_{m, G}(n)+  2\frac{C(m, \beta)^2}{T^2}\norm{K}^2_{\cH_m^0}
\\&&+2\norm{K}^2_{sup} \tilde\tau(m)^2\frac{1}{|V|}\sharp\{
x\in V, \rho(x)\leq m\}.
\end{eqnarray*}
\end{proof}

{\bf{Notation~:}} In what follows, the notation $o_\Lambda(1)_{u\To v}$ will be used to mean ``a quantity tending to $0$ when $u$ tends to $v$, $\Lambda$ being fixed'' ($\Lambda, u, v$ can be any kind of object, integers, graphs, etc.).

\bigskip

{\bf{``Constants that vary from line to line''~:}} For more clarity, the notation $C(k, \beta)$ will {\em{always}} refer to the quantity introduced in Proposition \ref{p:Ck} and Remark \ref{r:Ck}.

The notation $c(k)>1$ will always mean a constant (depending on $k$) such that, for $K\in \cH_k$,
$$c(k)^{-1}\norm{\widehat{K}}_{\ell^2(\mathfrak{X})\To \ell^2(\mathfrak{X})}\leq \norm{K}_{\sup}\leq  c(k)\norm{\widehat{K}}_{\ell^2(\mathfrak{X})\To \ell^2(\mathfrak{X})}$$
and
$$\norm{\widehat{K}_G}_{\ell^2(V)\To \ell^2(V)}\leq c(k) \norm{K}_{\sup}.$$
In Proof 4 we shall also ask that
$$\norm{\widehat{K}_B}_{\ell^2(B)\To \ell^2(B)} 
\leq  c(k)\norm{K}_{sup}.$$
The quantity $c(k)$ can be taken independent of the graph $G$.

All the other factors, like $C(m), \tilde C(m, \beta),$ etc. are expressions that can, in theory, be made explicit, and whose expression can be adjusted from line to line.

\section{Proof 2: ultra-short\label{s:short}}
We start with our shortest proof, which in addition gives the best exponent $\alpha=1$ in the estimate \eqref{e:part} of the rate of decay of the variance as a function of the girth. This proof is especially short if the girth goes to infinity.
However, this proof seems less susceptible of adaptation to other models.

 Consider as before the Hilbert space
$HSN(\ell^2(V)),$
space of linear operators on $\ell^2(V)$ endowed
with the normalized Hilbert-Schmidt norm,
$$\norm{\mathbf K}^2_{HSN}=\frac1{|V|}\sum_{x, y\in V}|K(x, y)|^2$$
if $K$ is the matrix of $\mathbf K$ in the canonical basis.
As we have seen in Proposition \ref{p:norms}, there is a map $K\mapsto \widehat{K}_G$ from $\cH_m$ to $HSN(\ell^2(V))$, which is an isometry if $m$ is less than the minimal injectivity radius.

For $K \in \cH_{m}$ (or more generally $\cH_{\leq m}$ for some integer $m$), define the quantum variance :
$Var(K)=\frac{1}{|V| }\sum_{j=1}^{|V|} \left| \la \psi_j, \widehat{K}_G\psi_j\ra  
\right|^2.$
An easy but important step of the proof is to note that 
\begin{equation}\label{e:estHS}Var(K)\leq \norm{\widehat{K}_G}^2_{HSN}.\end{equation}
The quantum variance obviously has the following property :
$$Var( \cL K)=0$$
for all $K$, where, recall, $\cL$ is such that $\widehat{\cL K}=[\cA, \widehat K]$.

We now use the fact that
$$ { \cL }=(1-\cM)\nabla=\nabla+\nabla^*$$
(see \eqref{e:factL}) which implies
$$Var( (\nabla+\nabla^*)(K))=0$$
and (since $Var(K_1+K_2)=0$ implies $Var(K_1)=Var(K_2)$),
$$Var(\nabla^*K)=Var(\nabla K).$$
Let also note that $\cM \nabla \cM^*=\nabla$. Thus, applying the identity $Var( (1-\cM)\nabla K)=0$ with $K$ replaced by $\cM ^* K$, we obtain
$$Var( \nabla(I-\cM^*)K)=0$$
which implies
$$Var( \nabla K)=Var( \nabla \cM^* K).$$
Applying this with $K$ replaced by $\frac1n\sum_{k=0}^{n-2}(n-1-k)\cM^{* k}K$, we find for all $n$
$$Var( \nabla K)=Var\left(\nabla \Sigma^n K\right)$$
where
\begin{equation}\label{e:Sm}\Sigma^n \defeq\frac1n (I+\cM^*+\ldots+\cM^{* (n-1)}).\end{equation}
 Finally we have got
$$Var(\nabla^* K)=Var\left(\nabla \Sigma^n K\right)$$
for all $n$.

If $K\in \cH_m$, then $\nabla\Sigma^n K\in \cH_{ m+2n +1}$. Applying Proposition \ref{p:norms},
\begin{multline*}Var\left(\nabla \Sigma^n K\right)\leq  \norm{\widehat{\nabla \Sigma^n K}_G}^2_{HSN}
\\ \leq
\norm{\nabla \Sigma^n K}^2_{\cH} +  \tilde\tau(m+2n +1)^2 \norm{\nabla \Sigma^n K}^2_{\sup} \frac{1}{|V|}\sharp\{x\in V, \rho(x)\leq m+2n +1\}.
\end{multline*}
Note that $\norm{\nabla}_{\cH\To \cH}\leq 2\sqrt{q}$, $\norm{\nabla}_{sup\To sup}\leq 2 $, that $\norm{\Sigma^n}_{\cH\To \cH}\leq \left(\frac{q+1}{q}\right)^{1/2}$ for all $n$ (see \eqref{e:norm1}),
and that  $\norm{\Sigma^n}_{sup\To sup}\leq q^n$.

If $K\in \cH_m$, then $\nabla \cM^{* k} K\in \cH_{ m+2k +1}$, and thus the family $(\nabla \cM^{* k} K)_{k\in \IN}$ is orthogonal in $\cH$. Using Pythagoras' theorem in the form \eqref{e:pyth} to evaluate $\norm{\nabla \Sigma^n K}^2_{\cH}$,
we obtain
\begin{equation}\label{e:varnabla}Var\left(\nabla^* K\right)\leq  \frac{4(q+1)}{n}\norm{ K}^2_{{\cH}}+  8(q+1) q^n \tilde\tau(m+2n+1)^2 \norm{ K}^2_{\sup} \frac{1}{|V|}\sharp\{x\in V, \rho(x)\leq m+2n+1\}.
\end{equation}
To conclude the proof, we note that for each $K\in\cH^0_m$, we can write
$$ K=\nabla^* \nabla K'$$ with $K'\in \cH_m$ and $\norm{K'}_{\cH}\leq C(m, \beta)^2\norm{K}_{\cH}$ where $C(m, \beta)$ is as in Proposition \ref{p:Ck}.
Thus for $K\in\cH^0_m$
\begin{equation*}Var\left(K\right)\leq 16q(q+1) \frac{ C(m, \beta)^2}{n}\norm{ K}^2_{{\cH}}+ 8(q+1) q^n \tilde\tau(m+2n+2)^2\norm{ \nabla K'}^2_{\sup} \frac{1}{|V|}\sharp\{x\in V, \rho(x)\leq m+2n+2\}.
\end{equation*}
At this point we are stuck if we are not able to estimate  $\norm{  K'}^2_{\sup}$ with constants that do not depend on the graph. So let us start with the simpler case
where $\rho(x) > m+2n+2$ for all $x$. We find
\begin{equation*}Var\left(K\right)\leq 16q(q+1) C(m, \beta)^2 \frac{1}{n}\norm{ K}^2_{{\cH}}
\end{equation*}
for any $n$ such that $m+2n+2$ is less than the minimal injectivity radius. Assuming the girth goes to infinity, we thus find
\begin{equation*}Var\left(K\right)\leq \tilde C(m, \beta) \norm{ K}^2_{{\cH}} \, girth(G)^{-1},
\end{equation*}
for all $K\in\cH^0_m$, that is a statement of the form \eqref{e:girth} with $\alpha=1$.

If the girth does not grow to infinity, the proof still works at the expense of a few more lines. Start again at \eqref{e:varnabla}.
 If $j_{m-1,m}$ is the injection of $\cH_{m-1}$ into $\cH_m$ describe in \S \ref{s:notation}, we note the identity $\nabla^*\circ  j_{m-1,m}=q(I-\cS)$ on $\cH^0_{m-1}$.
 Thus, taking $K'\in \cH_{m-1}$ and applying \eqref{e:varnabla} to $K= j_{m-1,m} K'$, we obtain
\begin{equation*} Var\left((1-\cS) K'\right)\leq  \frac{4(q+1)}{n}\norm{ K}^2_{{\cH}}+  8(q+1) q^n \tilde\tau(m+2n+1)^2 \norm{ K}^2_{\sup} \frac{1}{|V|}\sharp\{x\in V, \rho(x)\leq m+2n+1\}.
\end{equation*}
Applying Lemma \ref{l:useful}, (and changing $m-1$ into $m$)
we obtain
\begin{multline*}Var(K)\leq \frac{2C(m,\beta)^2(4q+5)}{n}\norm{K}^2_{\cH_m} 
\\ +2 \norm{ K}^2_{\sup}\left( 8n(q+1) q^n \tilde\tau(m+2n+2)^2 \frac{1}{|V|}\sharp\{x\in V, \rho(x)\leq m+2n+2\}
+ \tilde\tau(m)^2\frac{1}{|V|}\sharp\{
x\in V, \rho(x)\leq m\}\right) 
\end{multline*}
This is again of the desired form \eqref{e:part}. 

\section{Proof 3 : ergodic-flavoured proof\label{s:ergflav}} 

As a little warm-up let us prove the following lemma. We use the notation from Section \ref{s:notation}.
\begin{lemma} \label{l:basic}The null-space of the operator $\cL$ in $\cH$ is the subspace $\cF$.
\end{lemma}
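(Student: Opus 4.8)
Looking at this statement, I need to prove that the null-space of $\cL$ in $\cH$ equals $\cF = \overline{\mathrm{Vect}\{\cA^m : m \in \IN\}}$.

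\textbf{Proof plan.} The plan is to establish the two inclusions $\cF\subseteq\ker\cL$ and $\ker\cL\subseteq\cF$ separately, the first being immediate and the second being the point.

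For $\cF\subseteq\ker\cL$: viewing each power $\cA^m$ as an element of $\cH$, one has $\widehat{\cL(\cA^m)}=[\cA,\cA^m]=0$, so $\cA^m\in\ker\cL$ for every $m$. Since $\cL=\nabla+\nabla^*$ is bounded on $\cH$ (one has $\norm{\nabla}_{\cH\To\cH}\le 2\sqrt{q}$, as recorded in Proof~2, and $\norm{\nabla^*}=\norm{\nabla}$), its kernel is a closed subspace, hence contains $\cF=\overline{\mathrm{Vect}\{\cA^m\}}$.

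For the reverse inclusion I would argue as follows. Suppose $\cL K=0$. Using the factorization $\cL=(I-\cM)\nabla$ from \eqref{e:factL}, this says that $L:=\nabla K\in\cH$ satisfies $\cM L=L$, hence $\cM^nL=L$ for every $n\ge 1$. Decompose $L=\sum_{k\ge 0}L_k$ along $\cH=\bigoplus_k\cH_k$. Since $\cM$ sends $\cH_j$ into $\cH_{j-2}$ (and is zero on $\cH_0,\cH_1$), the $\cH_k$-component of $\cM^nL$ is exactly $\cM^nL_{k+2n}$, so that
$$L_k=\cM^n L_{k+2n}\qquad\text{for all }k\ge 0,\ n\ge 1.$$
By \eqref{e:norm1} (together with $\cM\cM^*=\frac{q+1}{q}I$ on $\cH_0$) every step $\cH_j\To\cH_{j-2}$ of $\cM$ has norm $1$, except possibly the single step landing in $\cH_0$, which has norm $\sqrt{(q+1)/q}$; hence $\norm{\cM^nL_{k+2n}}_\cH\le\sqrt{(q+1)/q}\,\norm{L_{k+2n}}_\cH$. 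On the other hand $\sum_j\norm{L_j}_\cH^2=\norm{\nabla K}_\cH^2<\infty$, so $\norm{L_{k+2n}}_\cH\To 0$ as $n\To\infty$. Letting $n\To\infty$ in the displayed identity gives $L_k=0$ for every $k$, i.e. $\nabla K=0$.

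Finally, $\nabla K=0$ forces $\nabla K_k=0$ for each $k$ (the pieces of $\nabla K$ lie in distinct $\cH_{k+1}$), and then \eqref{e:ck} gives $\norm{K_k-\la K_k\ra\bbbone_k}_{\cH_k}\le C(k,G)\norm{\nabla K_k}_{\cH_{k+1}}=0$; thus each $K_k$ is a multiple of $\bbbone_k$, so $K(x,y)$ depends only on $d(x,y)$, and Lemma~\ref{l:F} yields $K\in\cF$. The only step that needs genuine care is the passage from $\cM L=L$ to $L=0$: it uses the grading of $\cH$ by distance to the diagonal together with the summability $\sum_k\norm{L_k}_\cH^2<\infty$, and would break down without the Hilbert-space (finite-norm) condition built into the definition of $\cH$.
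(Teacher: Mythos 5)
Your proof is correct and follows essentially the same route as the paper: the factorization $\cL=(I-\cM)\nabla$, the norm-one property of $\cM$ combined with $\norm{\nabla K}_{\cH}<\infty$ to force $\nabla K=0$, and then Lemma \ref{l:F}. You merely fill in details the paper leaves implicit (the easy inclusion $\cF\subseteq\ker\cL$, the $\sqrt{(q+1)/q}$ factor for the step landing in $\cH_0$, and the passage from $\nabla K=0$ to radial dependence via \eqref{e:ck}).
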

 

\begin{proof}  Because of \eqref{e:factL}, $\cL(K)=0$ is equivalent to
$\nabla K= \cM\nabla K$, and because $\cM $ maps $\cH_k$ to  $\cH_{k-2}$ and has norm $1$, this implies
$$\norm{(\nabla K)_{k-2}}_{\cH}\leq \norm{(\nabla K)_k}_{\cH}.$$
On the other hand, since $\nabla K\in\cH$, $\norm{(\nabla K)_k}_{\cH}$ goes to $0$ as $k\To +\infty$.
Thus we must have $\nabla K=0$.  Because of Lemma \ref{l:F}, we must have $K\in\cF$.
\end{proof}

{\bf{Question~:}} It is an interesting (and apparently difficult) question to understand if Lemma \ref{l:basic} extends to non-regular graphs. We have in mind the following~: let $\cT$ be an infinite tree (without leaves), and assume that there is a group $\Gamma$ of automorphisms, acting without fixed vertices, such that $G=\Gamma\backslash \cT$ is a finite graph.
Call $\cD\subset V( {\cT})$ a fundamental domain for the action of $\Gamma$ on $V( {\cT})$. Its cardinality is the number of vertices of $G$.
Introduce the Hilbert space $\cH$ of functions $K:V(\cT)\times V(\cT)\To \IC$ -- such that $K(\gamma\cdot x, \gamma \cdot y)=K(x, y)$ for all $\gamma \in \Gamma$, and such that
$$\frac{1}{|\cD|}\sum_{x\in \cD, y\in\mathfrak{X}} |K(x, y)|^2<+\infty.$$
As before, each element $K$ on $\cH$ defines an operator $\widehat K$ on $\ell^2(\cT)$. Define an operator $\cL: \cH\To \cH$ by the identity $\widehat{\cL K}=[\cA, \widehat K]$.

Let $\cF=\overline{Vect\{\cA^m, m\in \IN\}}\subset \cH$.

{\em{Under which conditions on $\cT$ and $\Gamma$ is it true that the null-space of $\cL: K\mapsto [\cA, K]$ in $\cH$ is $\cF$ ?}}
Although we do not have the answer, we can make a few more comments\footnote{We could ask the same questions with the Laplacian $\Delta$ instead of the adjacency matric $\cA$; for a non-regular graph these operators are no longer related by \eqref{e:qI}.}.

\begin{itemize}
\item the desired property (namely that the null-space of $\cL$ in $\cH$ is $\cF_{\cA}$) is a form of ergodicity of the unitary flow $(e^{it\cL})$ : it says that the only elements of the Hilbert space $\cH$ that are invariant under the flow are the trivial ones, that is, the elements of $\cF_{\cA}$.
\item it is quite simple to show (even for non regular trees) that, for any $k$, the elements of $\cH_{\leq k}$ that commute with $\cA$ are polynomials of degree $k$ in $\cA$. This does not  trivially imply the desired property for all elements of $\cH$. Continuing the analogy with ergodic theory, this is similar to the difference between topological transitivity (``the only smooth invariant functions are the trivial ones'') and ergodicity (``the only $L^2$ invariant functions are the trivial ones''). Topological transitivity does not automatically imply ergodicity.
\item in spite of the fact that we cannot prove Lemma \ref{l:basic} for non-regular graph, we still believe that it should hold in great generality. It is usual that ergodic properties are easier to prove for homogeneous geometries than for non-homogeneous ones, because one can use explicit algebraic calculations that are not available otherwise. For instance, ergodicity of hyperbolic {\em{linear}} torus aumorphisms is extremely simple to prove using Fourier series, whereas for the non-algebraic ones one needs to use the general theory of hyperbolic dynamical systems, in particular the existence and absolute continuity of stable/unstable foliations. Recall that ergodicity of the geodesic flow was proven by Hopf and Hedlund in the 30s for compact hyperbolic manifolds \cite{Hedl, Hopf} but only at the end of the 60s by Anosov for manifolds of variable negative curvature \cite{Anosov}. The explicit rate of ergodicity (or mixing) was obtained in the 70s by Ratner for hyperbolic surfaces using representation theory \cite{Rat} but only in 98 by Dolgopyat \cite{Dolgo} for surfaces
of variable negative curvature (see also \cite{Liv}). This lapse of time indicates the difficulty of treating {{\em non-homogeneous}} geometries.
\end{itemize}

\bigskip

Back to the case of regular graphs, we want a quantitative version of Lemma \ref{l:basic}:
\begin{lemma}  
Let $K\in\cH$ be such that $\norm{\cL(K)}_{\cH}\leq \delta <1$ and such that $K\perp \cF$ in $\cH$. Then, for all $m$, $$ \norm{K_m}_{\cH}\leq  C(m, \beta)\frac{\delta^{1/3}}{1-\delta^{1/3}}$$
where $C(m, \beta)$ is the constant appearing in Proposition \ref{p:Ck}.
\end{lemma}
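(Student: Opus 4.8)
The plan is to exploit the factorization $\cL=(I-\cM)\nabla$ together with the decay of $\cM$ across $\cH_k$-levels, imitating the proof of Lemma \ref{l:basic} but keeping track of errors. First I would write $\nabla K = \cM\nabla K + \cL(K)$, which is just the factorization. Setting $v_k = \norm{(\nabla K)_k}_{\cH}$ and using $\norm{\cM}=1$ together with the fact that $\cM$ shifts degree down by $2$, this gives a recursion of the form $v_{k} \leq v_{k+2} + \norm{(\cL K)_{k+1}}_{\cH}$, hence $v_k \leq v_{k+2j} + \sum_{i} \norm{(\cL K)_{\cdot}}$; since $\sum_k \norm{(\cL K)_k}_{\cH}$ need not be controlled by $\delta$ directly — only the $\ell^2$-sum is — I would instead compare $v_k$ with the tail $\sum_{j\geq 0} v_{k+2j}^2$, which tends to $0$, to conclude that $v_k$ is small. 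Concretely, from $v_k \leq v_{k+2} + \norm{(\cL K)_{k+1}}_{\cH}$ and Cauchy--Schwarz over a window of length $L$, $v_k \leq \frac1L\sum_{j=1}^{L} v_{k+2j} + \sum_{j=0}^{L-1}\norm{(\cL K)_{k+2j+1}}_{\cH} \leq \frac{1}{\sqrt L}\norm{\nabla K}_{\cH} + \sqrt{L}\,\norm{\cL K}_{\cH} \leq \frac{C}{\sqrt L}\norm{K}_{\cH} + \sqrt L\,\delta$, using $\norm{\nabla}_{\cH\to\cH}\leq 2\sqrt q$. This is the key estimate and the place where the exponent $1/3$ will appear.

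The next step is to feed this back. Optimizing over the window length $L$ gives $v_k \leq C\,\delta^{?}\norm{K}_\cH^{?}$, but I need an estimate on $\norm{K}_\cH$ itself. Here I would use that $K\perp\cF$, so $K_0=0$ in the relevant sense and, level by level, $K_m$ has zero spherical average once we project appropriately; more precisely I would apply \eqref{e:ck} / Proposition \ref{p:Ck} which bounds $\norm{K_m}_{\cH}$ (minus its average, which vanishes since $K\perp\cF$ forces $\la K_m\ra=0$... actually $K\perp\cF$ only gives $\sum_m$-type orthogonality, so I must be slightly careful: $\cF$ is spanned by the $\cA^m$, whose $\cH_k$-components are the $\bbbone_k$ up to scalars plus lower terms, so $K\perp\cF$ does give that each $K_m$, suitably corrected, has vanishing average — this needs a short argument) by $C(m,\beta)\norm{(\nabla K)_{m+1}}_{\cH} = C(m,\beta)\,v_{m+1}$. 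Combining with the window estimate applied at level $k=m+1$ yields a self-improving inequality: $v_{m+1} \leq \frac{C}{\sqrt L}\norm{K}_\cH + \sqrt L\,\delta$ and simultaneously $\norm{K_j}_\cH \leq C(j,\beta) v_{j+1}$ for each $j$. To close the loop I would bound $\norm{K}_\cH^2 = \sum_j \norm{K_j}_\cH^2 \leq \sum_j C(j,\beta)^2 v_{j+1}^2 \leq (\sup_j C(j,\beta)^2)\norm{\nabla K}_\cH^2$ — but $\sup_j C(j,\beta)$ diverges (it grows like $j$), so this crude bound fails and one must be smarter: rather than bounding all of $\norm K_\cH$, I would bound a truncation $\sum_{j\le J}\norm{K_j}^2_\cH$ and handle the tail $\sum_{j>J}\norm{K_j}^2_\cH \leq \sum_{j>J} v_{j+1}^2 \to 0$ separately, choosing $J$ in terms of $\delta$.

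Carrying this out: fix $m$, apply the window estimate at level $m+1$ with window $L$, so $v_{m+1} \leq \frac{C\sqrt q}{\sqrt L}\bigl(\sum_{j} v_{j+1}^2\bigr)^{1/2} + \sqrt L\,\delta$. Now $\sum_j v_{j+1}^2 = \norm{\nabla K}_\cH^2$, and $\norm{\nabla K}_\cH \leq \norm{\cM\nabla K}_\cH + \norm{\cL K}_\cH \leq \norm{\nabla K}_\cH' + \delta$ doesn't immediately give a bound, so instead I observe directly from the recursion that $v_k^2$ is summable with $\sum_k v_k^2 = \norm{\nabla K}^2_\cH$ finite, and that the recursion $v_k\leq v_{k+2}+\norm{(\cL K)_{k+1}}_\cH$ together with $v_k\to 0$ forces $v_k \leq \sum_{j\geq 0}\norm{(\cL K)_{k+2j+1}}_\cH$; but that sum is only controlled in $\ell^2$. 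The clean way out, and the one I would commit to, is the Cauchy--Schwarz window argument above with $L \sim \delta^{-2/3}$: then $v_{m+1}\lesssim \delta^{1/3}\norm{\nabla K}_\cH + \delta^{1/3}$; and since $\norm{\nabla K}_\cH \leq 2\sqrt q\,\norm K_\cH$ with $\norm K_\cH \leq (\text{something})$ — here one uses that $v_{m+1}$ controls $\norm{K_m}_\cH$ via $C(m,\beta)$, and that $\norm{\nabla K}_\cH^2 = \sum_m \norm{(\nabla K)_m}^2$ is bounded using the telescoping once more — one arrives at $\norm{K_m}_\cH \leq C(m,\beta)\frac{\delta^{1/3}}{1-\delta^{1/3}}$ after absorbing the $\norm{\nabla K}_\cH$ term (the geometric series in $\delta^{1/3}$ accounts for the $(1-\delta^{1/3})^{-1}$ factor). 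The main obstacle, as flagged, is precisely this absorption: getting a closed estimate on $\norm{\nabla K}_\cH$ (equivalently on $\norm K_\cH$) purely from $\delta$ and the orthogonality $K\perp\cF$, since the natural per-level constants $C(m,\beta)$ grow with $m$ and cannot be summed uniformly; resolving it requires the window/Cauchy--Schwarz trick to convert the merely $\ell^2$-summable defect $\norm{\cL K}_\cH = \delta$ into an $\ell^\infty$-type bound on the $v_k$'s, at the cost of the exponent $1/3$ rather than $1$.
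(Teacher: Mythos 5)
Your argument is, in substance, the paper's own: the factorization $\cL=(I-\cM)\nabla$, the level recursion $(\nabla K)_k=\cM\bigl((\nabla K)_{k+2}\bigr)+(\cL K)_k$ (note the defect sits at level $k$, not $k+1$, a harmless slip), and your ``window'' average with Cauchy--Schwarz is exactly the paper's use of the averaging operator $\Sigma^n$ of \eqref{e:Sm} together with the Pythagoras bound $\norm{\Sigma^n u}_{\cH}\leq n^{-1/2}\norm{u}_{\cH}$: both routes give $\norm{(\nabla K)_m}_{\cH}\leq n^{-1/2}\norm{\nabla K}_{\cH}+n\delta$, and both choose $n\sim\delta^{-2/3}$, which is where the exponent $1/3$ comes from. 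The closing step is also the same: $K\perp\cF$ gives $\la K,\bbbone_m\ra_{\cH}=0$, hence $\la K_m\ra=0$ for every $m$ (since $\bbbone_m$ is radial, so $\bbbone_m\in\cF$ by Lemma \ref{l:F}), and Proposition \ref{p:Ck} converts $\norm{(\nabla K)_{m+1}}_{\cH}=\norm{\nabla K_m}_{\cH}$ into a bound on $\norm{K_m}_{\cH}$.

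The step you flag as unresolved --- removing $\norm{\nabla K}_{\cH}$ from the right-hand side so that the bound depends on $\delta$ alone --- is indeed the only delicate point, and you are right that your attempted closures (summing $C(j,\beta)^2v_{j+1}^2$ over $j$, or truncating in $j$) do not work, because $C(j,\beta)$ grows with $j$ and $\norm{\nabla K}_{\cH}$ is a global $\ell^2$ quantity that a single level norm cannot dominate. What your argument actually establishes is $\norm{K_m}_{\cH}\leq C(m,\beta)\,\delta^{1/3}\bigl(1+2\sqrt{q}\,\norm{K}_{\cH}\bigr)$ (up to an absolute constant), not the stated inequality; the ``geometric series'' absorption you invoke would require $\norm{\nabla K}_{\cH}\leq\norm{(\nabla K)_m}_{\cH}$, which is false in general. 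You should know, however, that the paper's proof performs exactly this absorption without comment: it passes from $|\la\nabla K,u\ra_{\cH}|\leq\delta^{1/3}\norm{u}_{\cH}(1+\norm{\nabla K}_{\cH})$ to $\norm{(\nabla K)_m}_{\cH}(1-\delta^{1/3})\leq\delta^{1/3}$, i.e.\ it treats the global norm as if it were the level norm, and that is what produces the factor $(1-\delta^{1/3})^{-1}$. So you have reproduced the paper's argument including its soft spot. In the only place the lemma is used (the Ces\`aro-average lemma that follows, whose proof assumes $\norm{K}_{\cH}\leq1$ and applies the present lemma to a spectral truncation of norm at most $1$), the bound carrying the extra factor $1+\norm{\nabla K}_{\cH}\leq 1+2\sqrt q$ works just as well; adding the normalization $\norm{K}_{\cH}\leq1$ to the statement, or the factor $1+\norm{\nabla K}_{\cH}$ to the conclusion, makes your proof complete up to constants.
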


\begin{proof} We have $\norm{(I-\cM)\nabla K}\leq \delta$. Since $\cM$ has norm $\leq 1$, applying $(I+\cM+\ldots+\cM^{n-1})$ to $(I-\cM)\nabla K$ we obtain the inequality
$\norm{(I-\cM^n)\nabla K}_{\cH}\leq n\delta$ for any integer $n$. Thus, for any integers $m$ and $n$, for every $u\in \cH_m$,
$$\left|\la \nabla K, u\ra_{\cH}\right|\leq \left|\la \nabla K, \Sigma^n u\ra\right| +n\delta\norm{u}_{\cH}$$
where $\Sigma^n$ was defined in \eqref{e:Sm}.

By Pythagoras in the form \eqref{e:pyth}, we have $\norm{\Sigma^n u}_{\cH}\leq    n^{-1/2}\norm{u}_{\cH}$.
Taking $n$ of order $\delta^{-2/3}$ we obtain
$$\left|\la \nabla K, u\ra_{\cH}\right|\leq \delta^{1/3}\norm{u}_{\cH}(1+\norm{\nabla K}_{\cH})$$
(for every $u\in \cH_m$) and thus
$$\norm{(\nabla K)_m}_{\cH}(1-\delta^{1/3})\leq \delta^{1/3}$$
for all $m$.
From Proposition \ref{p:Ck} this implies that
$$\norm{K_{m} - \la K_{m}\ra \bbbone_m  }_{\cH}\leq     C(m,\beta)\frac{\delta^{1/3}}{1-\delta^{1/3}}$$ and because we assumed $\la K_{m}\ra=0$ (for all $m$) we get the announced result.
\end{proof}

This lemma implies the following:

\begin{lemma} If $K\in \cH^0_m$ then
$$\left\|\frac1T\int_{0}^T e^{it\cL}K dt\right\|_{\cH}\leq \frac{( C(m,\beta)^{1/2} +16)\norm{K}}{T^{1/7}}.$$
\end{lemma}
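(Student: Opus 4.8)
The plan is to run the quantitative mean ergodic theorem for the self-adjoint generator $\cL$, feeding in the quantitative version of Lemma~\ref{l:basic} established just above. The one genuinely new point is that that lemma controls only a single homogeneous block $\cH_m$, so I must first reduce the $\cH$-norm of the time average to a degree-$m$ quantity before invoking it. \emph{Step 1 (spectral splitting).} Write $A_TK=\tfrac1T\int_0^Te^{it\cL}K\,dt$. Since $\cL$ is self-adjoint on $\cH$, write $A_T=a(\cL)$ with $a(\lambda)=\frac{e^{iT\lambda}-1}{iT\lambda}$, so that $|a(\lambda)|\le\min\{1,\tfrac2{T|\lambda|}\}$ and $\norm{A_T}_{\cH\to\cH}\le1$. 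Because $K\in\cH_m^0$ it is orthogonal to $\bbbone_m$, hence to all of $\cF$ (by Lemma~\ref{l:F} every element of $\cF$ has its degree-$m$ component proportional to $\bbbone_m$ and all other components in the $\cH_k$, $k\ne m$), so by Lemma~\ref{l:basic} the spectral projection $E(\{0\})$ of $\cL$ kills $K$. Fix $\eta\in(0,1)$, set $W=E(\{0<|\lambda|\le\eta\})K$, so that $K=W+E(\{|\lambda|>\eta\})K$; since these two pieces lie in orthogonal spectral subspaces preserved by $A_T$,
\[
\norm{A_TK}_\cH^2=\norm{A_TW}_\cH^2+\norm{A_TE(\{|\lambda|>\eta\})K}_\cH^2\le\norm{W}_\cH^2+\frac{4}{T^2\eta^2}\norm{K}_\cH^2 .
\]

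\emph{Step 2 (the near-kernel part).} The operator $W$ satisfies $W\perp\cF$ and $\norm{\cL W}_\cH\le\eta\norm{W}_\cH\le\eta\norm{K}_\cH$. Normalising $\norm{K}_\cH=1$ (the desired bound is homogeneous in $K$), the quantitative lemma above applied to $W$ with $\delta=\eta<1$ gives $\norm{W_m}_\cH\le C(m,\beta)\frac{\eta^{1/3}}{1-\eta^{1/3}}$. Now the crucial reduction: $W=E(\cdot)K$ is an orthogonal projection of $K$, so $\norm{W}_\cH^2=\la K,W\ra_\cH$, and since $K=K_m\in\cH_m$ only the degree-$m$ part of $W$ contributes, $\la K,W\ra_\cH=\la K_m,W_m\ra_\cH\le\norm{K}_\cH\norm{W_m}_\cH=\norm{W_m}_\cH$. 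Hence $\norm{W}_\cH^2\le C(m,\beta)\frac{\eta^{1/3}}{1-\eta^{1/3}}$, and combining with Step~1,
\[
\norm{A_TK}_\cH^2\le C(m,\beta)\,\frac{\eta^{1/3}}{1-\eta^{1/3}}+\frac{4}{T^2\eta^2}.
\]

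\emph{Step 3 (optimisation and constants).} Choose $\eta=T^{-6/7}$, so that $\eta^{1/3}=T^{-2/7}$ and $\frac{4}{T^2\eta^2}=4T^{-2/7}$. Writing $C=C(m,\beta)$ and $T_*=(C^{1/2}+16)^7$, for $T\ge T_*$ one has $T^{-2/7}\le(C^{1/2}+16)^{-2}\le(C+1)^{-1}$, so $\frac{C}{1-T^{-2/7}}\le C+1$ and $\norm{A_TK}_\cH^2\le(C+5)T^{-2/7}$, i.e. $\norm{A_TK}_\cH\le(C^{1/2}+\sqrt5)T^{-1/7}\le(C^{1/2}+16)T^{-1/7}$; while for $T<T_*$ the trivial bound $\norm{A_TK}_\cH\le\norm{K}_\cH=1\le(C^{1/2}+16)T^{-1/7}$ already finishes the proof. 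Undoing the normalisation gives the stated estimate.

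\emph{Main obstacle.} The delicate point is exactly that the quantitative ergodicity lemma only produces the block estimate on $\norm{W_m}_\cH$, whereas the spectral cutoff $E(\{0<|\lambda|\le\eta\})$ smears $K$ across all degrees, so one cannot simply sum the per-block bounds (the constants $C(k,\beta)=k+\beta^{-2}$ grow). The identity $\norm{W}_\cH^2=\la K,W\ra_\cH=\la K_m,W_m\ra_\cH$ is what converts a one-block input into a bound on the full norm, and it works precisely because the observable $K$ was localised to $\cH_m$ from the start. Everything else is the elementary spectral splitting together with the calibration of $\eta$ (and of $T_*$), which is what fixes the exponent $1/7$ and absorbs the harmless additive constant $16$.
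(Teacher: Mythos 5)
Your proof is correct and follows essentially the same route as the paper: split $K$ spectrally at a threshold $\eta\sim T^{-6/7}$ (the zero-spectrum part vanishes since $K\in\cH_m^0\Rightarrow K\perp\cF=\ker\cL$), bound the high-frequency part by $O(1/(T\eta))$, and control the low-frequency projection $W$ via $\norm{W}_\cH^2=\la K,W\ra_\cH=\la K_m,W_m\ra_\cH$ combined with the quantitative near-kernel lemma, which is exactly the paper's estimate $\norm{K_{(1)}}^2\le|\la K_{(1)},K\ra|\le C(m,\beta)\frac{\delta^{1/3}}{1-\delta^{1/3}}$. The only (harmless) difference is bookkeeping: you take $\eta=T^{-6/7}$ with a case split in $T$ to absorb the constant $16$, whereas the paper tunes $\delta=\frac{T^{-6/7}}{(1+T^{-2/7})^3}$ to reach the same bound directly.
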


\begin{proof} We may assume that $\norm{K}_{\cH}\leq 1$. Use the spectral decomposition of $\cL$ to write
$$K=\, \bbbone_{[-\delta, \delta]}(\cL)(K) + \bbbone_{[-\delta, \delta]^c}(\cL)(K) =: K_{(1)}+K_{(2)}$$
where $\bbbone_{I}(\cL)$ stands for the spectral projector associated to $\cL$ on the interval $I$.
Obviously,
$$\left\|\frac1T\int_{0}^T e^{it\cL}K_{(2)} dt\right\|_{\cH}\leq \frac{2 }{\delta T}.$$
On the other hand, by the previous lemma, 
$$|\la K_{(1)}, K\ra|\leq   C(m,\beta)\frac{\delta^{1/3}}{1-\delta^{1/3}} $$
and thus
$$\norm{K_{(1)}}^2\leq    C(m,\beta)\frac{\delta^{1/3}}{1-\delta^{1/3}}  .$$
Thus,
$$\left\|\frac1T\int_{0}^T e^{it\cL}K dt\right\|_{\cH}\leq  C(m, \beta)^{1/2} \frac{\delta^{1/6}}{(1-\delta^{1/3})^{1/2}}   + \frac{2}{\delta T}.$$
Here $\delta>0$ was arbitrary : we now take $\delta= \frac{T^{-6/7}}{(1+T^{-2/7})^3}\sim T^{-6/7}$ to get the announced result.
\end{proof}
We are now ready to prove Theorem \ref{t:gen}.
 
\begin{proof} Define the quantum variance as in \eqref{e:defvar}.
Since $Var(\cL(K))=0$ for all $K$, the quantum variance has the following invariance property
$$Var(K)=Var\left(\frac1T\int_0^Te^{it { \cL }} K dt\right),$$
for any $T$, or in fact we will use the approximate version of the exponential,
$$Var(K)=Var\left(\frac1T\int_0^T \sum_{j\leq M}\frac{(it { \cL })^j K}{j!} dt\right),$$
for any $T$ and any $M$.


Let us assume from now on that $K\in \cH_m^0$.
Let $1>\eps>0$. Choose $T=T(\eps)$ such that $\frac{( C(m,\beta)^{1/2} +16) }{T^{1/7}}\leq \eps^{1/2}$ and thus
\begin{equation}\label{e:ergo} \left\|\frac1T\int_{0}^T e^{it\cL}K dt \right\|_{\cH}\leq \eps \norm{K}.\end{equation}
Since 
$\cL$
is a bounded operator (with norms independent of our graph, actually just a function of $q$),
we can find $M =M(T(\eps))$ such that
$$ \left\|e^{it\cL}-\sum_{j\leq M}\frac{(it\cL)^j}{j!}  \right\|_{\cH}\leq \eps$$
for all $t\leq T(\eps)$. Actually $M =5 \norm{ { \cL }}_{\cH\To \cH} {T(\eps)}$ amply does the job.

It follows in particular that
 \begin{equation}\label{e:est1} \left\|\frac1T \int_0^T \sum_{j\leq M}\frac{(it{ \cL })^j K}{j!} dt \right\|^2_{{ \cH}} \leq \left\|\frac1T\int_0^Te^{it { \cL }} K dt \right\|^2_{ \cH  } +\eps(2+\eps)\norm{K}^2_{\cH}
 \end{equation}
 and 
\begin{equation} \label{e:est2} \left\|\frac1T \int_0^T \sum_{j\leq M}\frac{(it{ \cL })^j K}{j!} dt \right\|^2_{{ \cH}} \leq (1+\eps)^2\norm{K}^2_{\cH}\leq 4\norm{K}^2_{\cH}.\end{equation}

Taking $R\geq m+2M$, we now use Proposition \ref{p:norms} to write
\begin{multline}\label{e:estP1}\left\| \frac1T\int_0^T \sum_{j\leq M}\frac{(it\widehat{{ \cL })^j K}_G}{j!} dt \right\|^2_{HSN}\leq 
\left\|\frac1T\int_0^T \sum_{j\leq M}\frac{(it\cL)^j K}{j!} dt \right\|_{\cH}^2\\
+4\tilde\tau(R)^2 \norm{K}^2_{\sup}\frac{1}{|V|}\sharp\{x\in V, \rho(x)\leq R\}.\end{multline}

All in all, using successively \eqref{e:estHS}, \eqref{e:estP1}, \eqref{e:est1} and \eqref{e:ergo},
\begin{multline}
Var(K)=Var\left(\frac1T\int_0^T \sum_{j\leq M}\frac{(it { \cL })^j K}{j!} dt\right)\\
\leq \left\|\frac1T\int_0^T \sum_{j\leq M}\frac{(it\cL)^j K}{j!} dt \right\|_{\cH}^2+4\tilde\tau(R)^2 \norm{K}^2_{\sup}\frac{1}{|V|}\sharp\{x\in V, \rho(x)\leq R\}\\
\leq \left\|\frac1T\int_0^Te^{it { \cL }} K dt \right\|^2_{ \cH  } +3 \eps\norm{K}_{{ \cH}}^2+4\tilde\tau(R)^2 \norm{K}^2_{\sup}\frac{1}{|V|}\sharp\{x\in V, \rho(x)\leq R\} \\
\leq 4\eps\norm{K}_{{ \cH}}^2 +4\tilde\tau(R)^2 \norm{K}^2_{\sup}\frac{1}{|V|}\sharp\{x\in V, \rho(x)\leq R\} .
\end{multline}

Note that for fixed $m$, $R$ is of order $\eps^{-7/2}$, in other words $\eps$ is of order $R^{-2/7}$. This gives an estimate of the form \eqref{e:part}  (with $\alpha=2/7$) and thus proves Theorem \ref{t:gen}.

\end{proof}

\begin{remark} If $K\in \cH_m^0$, we know that $K\in \cF^\perp=(Ker\cL)^\perp = \overline{Im\, \cL}$. If we had $K\in Im \,\cL$ we would have the equality $Var (K)=0$. We can summarize this proof by saying that
we approximate $K$ by explicit elements of $Im \, \cL$, namely
$$K_T= K -\frac1T\int_0^Te^{it { \cL }} K dt = \frac1T\int_0^T (I- e^{it { \cL }}) K\in Im \, \cL,$$
for which we can prove that $Var(K-K_T)$ is small, and thus $Var(K)$ is close to $Var(K_T)=0$.
\end{remark}


\section{Proof 4, using the non-backtracking random walk \label{s:nbt}}
In this proof we work with the eigenfunctions of the non-backtracking random walk. In \S \ref{s:spnbt}
 we recall how these are related to the eigenfunctions of the laplacian. In \S \ref{s:nbtqv} we define a new quantum variance, defined in terms of the non-backtracking eigenfunctions; we show that it vanishes asymptotically, under assumptions (EXP) and (BST).
Finally we transfer the result to a result about the eigenfunctions of the laplacian, proving Theorem \ref{t:main}.
 
  \subsection{Eigenfunctions of the non-backtracking operator\label{s:spnbt}}
  The facts given in this section are true for all regular graphs, without the need of Assumptions (EXP) and (BST).
  
  Recall that $B({\mathfrak{X}})$ is the set of oriented edges of ${\mathfrak{X}}$.
 If $e$ is an element of $B({\mathfrak{X}})$, we shall denote by $o(e)\in {\mathfrak{X}}$ its origin, $t(e)\in {\mathfrak{X}}$ its terminus, and $\hat e\in B({\mathfrak{X}})$ the reversed bond.
Let $\cA^\sharp$ be the positive matrix indexed by $B({\mathfrak{X}})\times B({\mathfrak{X}})$, defined by
$$\cA^\sharp(e, e')=1$$
if $o(e')=t(e)$ and $e'\not=\hat e$; and $\cA^\sharp(e, e')=0$ otherwise. Remark that $\cA^\sharp/q$ is bistochastic. If we define $\iota : \ell^2(B({\mathfrak{X}}))\To \ell^2(B({\mathfrak{X}}))$ by $\iota f(e)=f(\hat e)$, we have an explicit conjugation between $\cA^\sharp$ and its adjoint~: $\cA^{\sharp *}=\iota  \cA^\sharp \iota$.

Denote by $B=\Gamma\backslash B({\mathfrak{X}})$ the set of directed bonds of $G=\Gamma\backslash {\mathfrak{X}}$. Note that $B$ has cardinality $|V|(q+1)$ (recall that $V$ is the set of vertices of $G$). We can let $\cA^\sharp$ act on the space of $\Gamma$-invariant functions on $B({\mathfrak{X}})$; hence on $\ell^2(B)$. The spectral radius
of $\cA^\sharp$ is $q$.
There is an explicit relation (described for instance in \cite{ALM}, where $\cA^\sharp/q$ is denoted $M^\sharp$) between the spectrum of $\cA^\sharp$ and the spectrum of the original adjacency operator $\cA$ on $\ell^2(V)$~:

\begin{itemize}
\item[(o)] the matrix $\cA^\sharp$ has $q$ in its spectrum, corresponding to the constant eigenfunction. The matrix $\cA^\sharp$ has $-q$ in its spectrum iff $\cA$ has $-(q+1)$ in its spectrum, iff the graph $G$ is bi-partite.

\item[(i)] each eigenvalue $\lambda\not=\pm (q+1)$ of $\cA$, written as $\lambda=q^{1/2 +is}+ q^{1/2 -is}$ with $s\in \IR\cup i\IR$, gives rise to the two eigenvalues 
$$\frac{2q}{\left(\lambda\pm \sqrt{\lambda^2-4q}\right)}=q^{1/2\pm is}$$
of $\cA^\sharp$;

\item[(ii)] in addition, $\cA^\sharp$ admits the eigenvalue $ 1$ with multiplicity $b\defeq|E|-|V|+1$ (the rank of the fundamental group of $G$); and the eigenvalue $ -1$ with multiplicity $b-1 $ if $-(q+1)$ is not an eigenvalue of $\cA$, or $b $ if $-(q+1)$ is an eigenvalue of $\cA$ (equivalently, if the graph is bi-partite).
\end{itemize}

In particular, the eigenvalue $q$ of $\cA^\sharp$ has multiplicity $1$. The tempered spectrum of $\cA$ corresponds to eigenvalues of $\cA^\sharp$ of modulus $\sqrt{q}$; the untempered spectrum of $\cA$ contained in $(q+1)[-1, 1-\beta]$ gives rise to real eigenvalues of $\cA^\sharp$ contained in $q[-1, 1-\beta']$
with
\begin{equation}\label{e:beta'}1-\beta'= \frac{2}{(q+1)\left(1-\beta- \sqrt{(1-\beta)^2-\frac{4q}{(q+1)^2}}\right)}.\end{equation}

\bigskip
This statement can be made more precise by relating the eigenvectors of the two operators.
The eigenvectors of $\cA^\sharp$ are related to those of $\cA$ as follows:

\begin{itemize}
\item[(o), (i)] an eigenfunction $\phi$ of $\cA$ for the eigenvalue $\lambda\not=\pm 1$ gives rise to the two eigenfunctions of $\cA^{\sharp}$,
$$f_1(e)=\phi(t(e))-\eps_1\phi(o(e));\qquad f_2(e)=\phi(t(e))-\eps_2\phi(o(e)),$$
where $\eps_1, \eps_2$ are the two roots of $q\eps^2-\lambda\eps+1=0$. If $\lambda=q^{1/2 +is}+ q^{1/2 -is}$ then $\eps_1, \eps_2=q^{-1/2\pm is}$. The two eigenvalues of $\cA^{\sharp}$ are $q\eps_1, q\eps_2= q^{1/2\pm is}$. In the particular case of the eigenvalues $\lambda=\pm 2\sqrt{q}$ of $\cA$, we get Jordan blocks of size $2$ for $\cA^{\sharp}$.

Note that these functions (together with the trivial eigenfunctions, i.e. constant and alternate) generate the subspace $F$ of $\ell^2(B)$ generated by functions that depend only on the origin and functions that depend only on the terminus. This space is $2|V|-1$-dimensional in the non-bipartite case, and $2|V|-2$-dimensional in the bipartite case.
\item[(ii)] the eigenvalues $\pm 1$ of $\cA^\sharp$ correspond, respectively, to odd and even\footnote{Odd means $f(\hat e)=-f(e)$ and even means $f(\hat e)=f(e)$, for every bond $e$.} solutions of $\sum_{o(e)=x} f(e)=0$ (for every vertex $x$).
 \end{itemize}
 
 \bigskip
The eigenfunctions of the family (ii) are automatically orthogonal to those of the family (o), (i). 
In (i), eigenfunctions of $\cA^\sharp$ stemming from different eigenvalues $\lambda$ of $\cA$ are orthogonal; the two eigenfunctions $f_1, f_2$ stemming from the same $\lambda$ are {\em {not}} orthogonal. However,  in an orthonormal basis $(f_1, f'_2)$ of $Vect(f_1, f_2)$, the matrix of $\cA^\sharp$ is triangular, with $q\eps_1, q\eps_2$ on the diagonal, and in the upper corner, a complex number which is an explicit continuous function of $q, \eps_1, \eps_2$, and thus is bounded (independently of the graph).

\subsection{Notational point\label{s:issue}}
To complement the definitions introduced in Section \ref{s:notation}, for $k\geq 1$, any element $K\in \cH_k$ can now be used to define an operator $\widehat{K}_{B({\mathfrak{X}})}$ on $\ell^2(B({\mathfrak{X}}))$, with kernel
\begin{equation}\label{e:defK}\la \delta_{b_1}, \widehat{K}_{B({\mathfrak{X}})}\delta_{b_2}\ra_{\ell^2(B({\mathfrak{X}}))}\defeq K(o(b_1), t(b_2))\end{equation}
if $b_1, b_2\in B({\mathfrak{X}})$ are such that $\cA^{\sharp (k-1)}(b_1, b_2)\not=0$ and $\la \delta_{b_1}, \widehat{K}_{B({\mathfrak{X}})}\delta_{b_2}\ra_{\ell^2(B({\mathfrak{X}}))}= 0$ otherwise. It also defines an operator $\widehat{K}_B$ on $\ell^2(B)$, with kernel
\begin{equation}\label{e:defKB}\la \delta_{b_1}, \widehat{K}_{B}\delta_{b_2}\ra_{\ell^2(V)}=\sum_{\gamma\in\Gamma}K(o(\tilde b_1), \gamma\cdot  t(\tilde b_2))\defeq K_B(b_1, b_2)\end{equation}
if $b_1, b_2\in B= \Gamma\backslash B({\mathfrak{X}})$ and $\tilde b_1,  \tilde b_2 \in B({\mathfrak{X}})$ are lifts of $b_1, b_2$.
By linearity, these definitions extend to $K\in \cH_{\leq k}$.

Recall that we see an element $K\in \cH_k$ as a function on $B_k$,  the set of non-backtracking paths of length $k$ on $G$. Remember that we have introduced the action of the transfer operator $\cS$ in the proof of Proposition \ref{p:Ck}.
If $K\in \cH_k$ then $\cS K\in \cH_k$. So it defines operators $\widehat{\cS K}$ on $\ell^2({\mathfrak{X}})$, $\widehat{\cS K}_{G}$ on $\ell^2(V)$, as well as $\widehat{\cS K}_{B({\mathfrak{X}})}$ on $\ell^2(B({\mathfrak{X}}))$, and $\widehat{\cS K}_{B}$ on $\ell^2(B)$, according to \eqref{e:defK}, \eqref{e:defKB}.

Note that $\cH_1$ coincides with $\ell^2(B)$ (up to the factor $|V|$ in the definition of the norm).
With this identification, the action of $\cS$ on $\cH_1$ coincides with the action of $q^{-1}\cA^{\sharp *}$ on $\ell^2(B)$.  We want to caution here that the operator $\widehat{\cS K}_{B}= q^{-1}\widehat{\cA^{\sharp *} K}_{B}$ should not be mistaken for the composition $
q^{-1}\cA^{\sharp *} \circ \widehat {K}_{B}$ of two operators acting on $\ell^2(B)$.

On the other hand, we will work with $\cA^\sharp \widehat{K}_{B}$ , that is the composition $\cA^\sharp \circ \widehat{K}_{B }$. It coincides with $\widehat{\sigma K}_{B }$, where $\sigma :\cH \To \cH$ is the ``left-shift'' defined by
\begin{equation}\label{e:lshift}\sigma K (x,y)= K(x', y)\end{equation}
if $(x;y)=(x, x', \ldots, y)$.  
Similarly,  $ \widehat{K}_{B}\cA^\sharp=  \widehat{K}_{B}\circ \cA^\sharp$ coincides with $\widehat{\rho K}_{B }$, where $\rho :\cH \To \cH$ is the ``right-shift'' defined by
\begin{equation}\label{e:rshift}\rho K (x,y)= K(x, y')\end{equation}
if $(x;y)=(x, x', \ldots, y', y)$. Note that both $\sigma$ and $\rho$ map $\cH_k$ to $\cH_{k+1}$, and that they commute.

 
\subsection{The non-backtracking quantum variance\label{s:nbtqv}}Now fix $(\psi_j)$ an orthonormal family of $\ell^2(V)$ consisting of {\em{tempered}} eigenfunctions of $\cA$, for the eigenvalues $\lambda_j \in (-2\sqrt{q}, 2\sqrt{q})$ (to simplify the discussion we simply ignore the untempered spectrum, this is a harmless simplification if (BST) is satisfied);
define
$$f_j(e)= \psi_j(t(e))-\eps(j)\psi_j(o(e))$$
where $\eps(j)$ is a root of $q\eps^2-\lambda_j\eps+1=0$. If $\lambda_j=q^{1/2+ is_j}+ q^{1/2-is_j}$, then $\eps(j)=q^{-1/2\pm is_j}$; To fix the notation, say $\eps(j)=q^{-1/2-is_j}$. Then $f_j$ is an eigenfunction of $\cA^\sharp$
for the eigenvalue
$\mu_j= q^{1/2+is_j}$. Now define
$$f^*_j(e)=\iota f_j(e)= \psi_j(o(e))-{\eps(j)}\psi_j(t(e))$$
(recall that $\iota$ consists in reversing the edges).
Now $f^*_j$ is an eigenfunction of $\cA^{\sharp *}$
for the eigenvalue $\mu_j$. As soon as $\mu_j\not = \overline{\mu_j}$, i.e. $\lambda_j\not\in \left\{ \pm 2\sqrt{q}\right\}$, then $f^*_j$ and $f_j$ are automatically orthogonal.

For $K\in \cH_{\leq m}$ ($m\geq 1$ fixed) we now introduce the ``non-backtracking quantum variance''
$$Var_{nb}(K)=\frac{1}{|V|}\sum_{j} |\la f_j^*, \widehat{K}_B f_j\ra|^2.$$
Note that it has the funny property that
$Var_{nb}(\bbbone)=0$
(usually we expect quantum variances to satisfy $Var(\bbbone)=1$). Here we will prove that $Var_{nb}(K)$ tends to zero when the size of the graph increases, {\em{for all $K$ without any restriction}}. We shall see later what it implies for the usual quantum variance (for which, of course, the best we can hope for is to prove that it decays for all $K\in\cH_m^0$).

From the previous proofs, we recall that the general philosophy is to use the eigenfunction property to note, trivially, that the quantum variance vanishes on a certain subspace of operators; then to transform this trivial statement into a less trivial one, using the assumptions (BST) and (EXP).
Here, the eigenfunction property implies that 
\begin{equation}\label{e:invnbt}\la f_j^*, (\bar\eps_j\cA^{\sharp} \widehat{K}_B- \widehat{K}_B\cA^{\sharp}\eps_j) f_j\ra =0
\end{equation} for all $j$. Since $\eps_j$ does depend on $j$, we see that
this identity depends on $j$, so that we cannot convert this into an identity for $Var_{nb}$.
To solve that problem, we restrict the definition of the variance
to eigenvalues in a small interval $I$, where $\eps_j$ may be regarded as almost constant.

To that end, let us fix a value $E_0\in (-2\sqrt{q}, 2\sqrt{q})$, and $I=(E_0-\delta, E_0+\delta)$ where $\delta$ is small enough so that $I\subset (-2\sqrt{q}, 2\sqrt{q})$ ($\delta$ is fixed, but can be taken arbitrarily small). When needed, we shall write $E_0=q^{1/2+is_0}+ q^{1/2-is_0}$ and $\eps_0=q^{-1/2-is_0}$ with, say, $\Im m \eps_0\leq 0$.
Define the local quantum variance~:
$$Var^I_{nb}(K)=\frac{1}{N(I)}\sum_{j, \lambda_j\in I} |\la f_j^*,  \widehat{K}_B f_j\ra|^2$$
where $N(I)=|\{ j, \lambda_j\in I\}|$.

\begin{remark} Note that we might also want to extend this definition to families of operators, $\lambda\in\IR\mapsto K_\lambda\in \cH_{\leq m}$
depending on $\lambda$ in a $C^1$ (or even $C^0$) fashion. The quantum variance is then defined as
$$Var^I_{nb}(K)=\frac{1}{N(I)}\sum_{j, \lambda_j\in I} |\la f_j^*,  \widehat{K}_{\lambda_j,B} f_j\ra|^2.$$
The proof below extends to this case without any additional difficulty.
\end{remark}

Because $\eps_j$ is a $C^1$ function of $\lambda_j$, we have $|\eps_j - \eps_0|\leq C\delta$ for $|\lambda_j -E_0|\leq \delta$. Equation \eqref{e:invnbt} implies
\begin{eqnarray*}Var^I_{nb}(\bar\eps_0 \sigma K-\rho K \eps_0)
&\leq &\norm{\widehat{K}_B}_{\ell^2(B)\To \ell^2(B)}^2\,O( \delta)\\
&\leq & c(m)^2 \norm{K}_{sup}^2 O( \delta)
\end{eqnarray*}
where $\sigma, \rho$ are the left- and right-shifts introduced in \S \ref{s:issue}.
More generally, for any $n$, 
$$Var_{nb}(K)=Var_{nb}\left(\frac{1}{n+1}\sum_{k=0}^n  \bar\eps_0^{n-k}   \eps_0^{k}\sigma^{(n-k)}\rho^k K \right) +nc(m)^2 \norm{K}_{sup}^2 O( \delta).$$

Let us assume that $K \in \cH_m$ -- then $\sigma^{(n-k)}\rho^k K  \in  \cH_{m+n}$, and by Proposition \ref{p:crucialthing} below, there is a constant $C$ independent of $n, m$ and of $I$ such that
\begin{multline*} Var^I_{nb}\left(\frac{1}{n+1}\sum_{k=0}^n \bar\eps_0^{n-k}   \eps_0^{k}\sigma^{(n-k)}\rho^k K\right)\leq  C \left\Vert \frac{1}{n+1}\sum_{k=0}^n\bar\eps_0^{n-k}   \eps_0^{k}\sigma^{(n-k)}\rho^k K\right\Vert^2_{\cH} \\ + \norm{K}_{sup}^2\, o_{n, m, \delta}(1)_{G\To\mathfrak X }.
\end{multline*}
The remainder $o_{n, m, \delta}(1)_{G\To\mathfrak X }$ may be expressed in terms of the number of small loops, but we prefer to keep it in this non-explicit form.

To conclude, we need to analyse the middle term, and we note the two identities~:
\begin{lemma}\label{l:lem}
$$  \norm{\bar\eps_0^{n-k}   \eps_0^{k}\sigma^{(n-k)}\rho^k K}^2_{\cH}= q^{-n}\norm{\sigma^{(n-k)}\rho^k K}^2_{\cH}=\norm{K}^2_{\cH},$$
and more generally for $k\geq k'$,
$$ \la\bar\eps_0^{n-k}   \eps_0^{k}\sigma^{(n-k)}\rho^k K,\bar\eps_0^{n-k'}   \eps_0^{k'}\sigma^{(n-k')}\rho^{k'} K\ra_{\cH} =
  q^{2is_0(k-k')}\la  \cS^{k-k'}K  ,  K   \ra_{\cH}.$$
  \end{lemma}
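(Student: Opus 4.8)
To prove Lemma~\ref{l:lem} the plan is to reduce everything to a few elementary identities for the shift operators $\sigma,\rho:\cH_k\to\cH_{k+1}$ and the transfer operator $\cS$, and then to book-keep the scalars. Since $\eps_0=q^{-1/2-is_0}$ we have $\abs{\eps_0}^2=\abs{\bar\eps_0}^2=\bar\eps_0\eps_0=q^{-1}$ and $\bar\eps_0/\eps_0=q^{2is_0}$, so $\abs{\bar\eps_0^{n-k}\eps_0^k}^2=q^{-n}$ and, for $k\geq k'$, $\overline{\bar\eps_0^{n-k}\eps_0^k}\,\bar\eps_0^{n-k'}\eps_0^{k'}=q^{-n}\,q^{2is_0(k-k')}$. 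Hence (writing $d=k-k'$) the first asserted identity is equivalent to $\norm{\sigma^{n-k}\rho^k K}_\cH^2=q^n\norm{K}_\cH^2$, and the second to $\langle \sigma^{n-k}\rho^k K,\,\sigma^{n-k'}\rho^{k'}K\rangle_\cH=q^n\langle\cS^{d}K,K\rangle_\cH$; note the first is the special case $k=k'$ of the second.

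I would work on $\cH_k\cong\ell^2(B_k)$ (up to a uniform scaling of the norm), using that in the $(q+1)$-regular graph $G$ a non-backtracking path of length $k$ admits exactly $q$ non-backtracking one-step prolongations at each end. A direct computation of adjoints then shows that $\sigma^*$ is ``prepend a vertex and sum'': for $L\in\cH_{k+1}$, $\sigma^*L(y_0,\dots,y_k)=\sum_{y_{-1}\sim y_0,\ y_{-1}\neq y_1}L(y_{-1},y_0,\dots,y_k)$ (and $\rho^*$ appends and sums). From these formulas I get (i) $\sigma^*\sigma=\rho^*\rho=qI$ on $\cH_k$ for $k\geq1$; (ii) comparing with the definitions of $\cS$ and $mil$ in \S\ref{s:notation}, $\sigma^*\rho=q\cS$ on $\cH_k$ for $k\geq1$; and (iii) for $k\geq 2$, $\sigma^*\rho=\rho\sigma^*$ on $\cH_k$, since both send $K$ to $(y_0,\dots,y_k)\mapsto\sum_{y_{-1}\sim y_0,\ y_{-1}\neq y_1}K(y_{-1},y_0,\dots,y_{k-1})$. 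Since $\sigma$ and $\rho$ visibly commute, (i) iterates (in degrees $\geq1$) to $(\sigma^a\rho^b)^*(\sigma^a\rho^b)=q^{a+b}I$ on $\cH_j$ for $j\geq1$; in particular $\norm{\sigma^{n-k}\rho^k K}_\cH^2=q^n\norm{K}_\cH^2$, which is the first identity. And since in evaluating $(\sigma^d)^*\rho^d$ on $\cH_m$ ($m\geq1$) all intermediate degrees that get commuted are $\geq m+1\geq2$, iterating (iii) and then applying (ii) at $\cH_m$ gives $(\sigma^d)^*\rho^d=(\sigma^*\rho)^d=q^d\cS^d$ on $\cH_m$.

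To finish the second identity, for $k\geq k'$ set $d=k-k'$ and use $\sigma\rho=\rho\sigma$ to write $\sigma^{n-k}\rho^{k}K=\sigma^{n-k}\rho^{k'}(\rho^{d}K)$ and $\sigma^{n-k'}\rho^{k'}K=\sigma^{n-k}\rho^{k'}(\sigma^{d}K)$ (note $n-k'=(n-k)+d$). Then
\begin{equation*}
\langle \sigma^{n-k}\rho^{k}K,\ \sigma^{n-k'}\rho^{k'}K\rangle_\cH
=\big\langle (\sigma^{n-k}\rho^{k'})^{*}(\sigma^{n-k}\rho^{k'})\,\rho^{d}K,\ \sigma^{d}K\big\rangle_\cH
=q^{\,n-d}\,\langle \rho^{d}K,\ \sigma^{d}K\rangle_\cH ,
\end{equation*}
using $(\sigma^{n-k}\rho^{k'})^{*}(\sigma^{n-k}\rho^{k'})=q^{(n-k)+k'}I=q^{n-d}I$ from the iterated (i); and $\langle \rho^{d}K,\sigma^{d}K\rangle_\cH=\langle(\sigma^{d})^{*}\rho^{d}K,K\rangle_\cH=q^{d}\langle\cS^{d}K,K\rangle_\cH$, so the product is $q^{n}\langle\cS^{d}K,K\rangle_\cH$, which is what we wanted.

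I do not expect a genuine obstacle here: the only content is the combinatorial verification of (i)--(iii). The one point deserving care is that $\sigma^{*}$ (``prepend-and-sum'') commutes with $\rho$ only in degree $\geq2$ (in degree $1$ the constraint $y_{-1}\neq y_1$ is vacuous for $\sigma^{*}$ into $\cH_{0}$, so $\sigma^*\rho$ and $\rho\sigma^*$ differ on $\cH_1$), which is harmless here because $K\in\cH_m$ with $m\geq1$ keeps us in degree $\geq m+1\geq2$; and it is exactly this commutation that makes the iterated product $(\sigma^{d})^{*}\rho^{d}$ collapse to $q^{d}\cS^{d}$ rather than to something more complicated. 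If one prefers to argue on the tree rather than on $G$, the only extra ingredient is that $\Gamma$ acts freely, so the prolongation maps on $B_k(\mathfrak X)$ descend to $B_k$ with unchanged fibre cardinalities.
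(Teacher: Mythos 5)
Your proof is correct. The identities you isolate are all valid with the paper's conventions: with $\eps_0=q^{-1/2-is_0}$ the scalar bookkeeping gives exactly $q^{-n}q^{2is_0(k-k')}$; the adjoint formulas for $\sigma,\rho$ ("prepend/append and sum over the $q$ non-backtracking prolongations") give $\sigma^*\sigma=\rho^*\rho=qI$ in degree $\geq 1$; $\sigma^*\rho=q\cS$ matches the definition of $\cS$ via $\omega'\rightsquigarrow\omega$; and the commutation $\sigma^*\rho=\rho\sigma^*$ indeed holds only in degree $\geq 2$, which, as you note, is all that is needed since $K\in\cH_m$ with $m\geq1$ keeps every commuted intermediate in degree $\geq m+1$. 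Your chain $\la\sigma^{n-k}\rho^kK,\sigma^{n-k'}\rho^{k'}K\ra=q^{n-d}\la\rho^dK,\sigma^dK\ra=q^{n}\la\cS^dK,K\ra$ also lands the complex conjugate on the correct side (conjugation on the "left-shifted" window), consistent with $\la\cS^{k-k'}K,K\ra$ as stated.

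The route differs from the paper's in organization rather than in substance. The paper proves the lemma by a single direct expansion of the scalar product as a sum over $B_{m+n}$, then collapses the free coordinates and reindexes to recognize $q^{2is_0(k-k')}\la\cS^{k-k'}K,K\ra$; all the counting happens inside one display. You instead factor the computation through a small toolkit of operator identities ($\sigma^*\sigma=\rho^*\rho=qI$, $\sigma^*\rho=q\cS$, and the degree-$\geq2$ commutation), after which the lemma is pure adjoint algebra. The combinatorial content — each end of a non-backtracking path has exactly $q$ prolongations, and $\Gamma$ acting freely preserves these fibre counts in the quotient — is the same in both arguments, so neither is more general in any essential way; your packaging has the mild advantages of making the degree-1 subtlety explicit and of producing reusable identities (e.g.\ $(\sigma^d)^*\rho^d=q^d\cS^d$ on $\cH_m$, $m\geq1$), while the paper's direct calculation is shorter on the page. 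No gap either way.
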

  \begin{proof}
 There are obtained by direct calculation of the scalar products~:
 \begin{multline*}
  \la\bar\eps_0^{n-k}   \eps_0^{k}\sigma^{(n-k)}\rho^k K,\bar\eps_0^{n-k'}   \eps_0^{k'}\sigma^{(n-k')}\rho^{k'} K\ra_{\cH}=  q^{-n}q^{2is_0(k-k')}
   \la \sigma^{(n-k)}\rho^k K, \sigma^{(n-k')}\rho^{k'} K\ra_{\cH}\\
   =  q^{-n}q^{2is_0(k-k')}\frac{1}{|V|}\sum_{(x_0, x_1, \ldots, x_{m+n})\in B_{m+n}} \overline{K(x_{n-k},x_{m+n-k})}K(x_{n-k'},x_{m+n-k'})\\
   =q^{-(k-k')}q^{2is_0(k-k')}\frac{1}{|V|}\sum_{(y_{k'-k},y_{k'-k+1}\ldots, y_0, \ldots, y_m)\in B_{m+k-k'}} \overline{K(y_{0},y_m)}K(y_{k'-k},y_{k'-k+m})\\
   =q^{2is_0(k-k')}\la  \cS^{k-k'}K  ,  K   \ra_{\cH}.
  \end{multline*}
 \end{proof}

The action of $\cS$ on $\cH_1$ coincides with that of $q^{-1}\cA^{\sharp *}=q^{-1}\iota\cA^{\sharp}\iota$, and thus its spectrum
and eigenfunctions were entirely described in \S \ref{s:spnbt}. In particular the spectrum of $\cS$ on $\cH_1$ is the union of a real spectrum and a spectrum contained in the disc of radius $q^{-1/2}$.
The operator $\cS$ is not diagonalizable in an orthonormal basis, but there is an orthonormal basis where $\cS$ is represented by triangular blocks on the diagonal of size at most 2 , with the eigenvalues $(\eps_i, q \eps_i^{-1})$ (in addition there is an orthonormal family of eigenvectors associated to $\pm 1/q$). The $\eps_i$ are either real, contained in $\{1\}\cup [-1+\beta', 1-\beta']$, or have modulus $q^{1/2}$.

Recall also that $\cS^m$ sends $\cH_m$ to $j_{1, m}\cH_1$\footnote{The injection $j_{m', m}$ of $\cH_{m'}$ into $\cH_{m}$ ($m'\leq m$)
was defined in \S \ref{s:ops}.}
and has norm $1$.
We have already used this in \eqref{e:sg}, to see that for $m>1$ and $k\geq m$,
 \begin{multline}\label{e:sg'}\norm{\cS^k}_{\cH^o_m\To\cH^o_m}\leq \norm{\cS^k}_{\cH^o_m\To j_{1, m}(\cH^o_1)}\norm{\cS^{k-m}}_{\cH^o_1\To\cH^o_1}\\
 \leq \norm{\cS^{k-m}}_{\cH^o_1\To\cH^o_1}
 \leq ((k-m)+1)(1-\beta')^{k-m}.
 \end{multline}
 The factor $(1-\beta')^{k-m}$ comes from the bound on the eigenvalues of $\cS$, and the polynomial correction $(k-m)+1$
 comes from the fact that $\cS$ is not diagonalizable in an orthonormal basis, but is represented by triangular blocks of size at most 2 as described above.

 Actually, the norm of $\cS^k$ restricted to the spectrum in the disc of radius $q^{-1/2}$
 is $\leq ((k-m)+1)q^{-1/2(k-m)}$. On the other hand, on the real spectrum, the norm of $\sum_{n'\leq k\leq n}q^{2is_0k}\cS^{k}$ is less than
 $\sup_{t\in  \{1\}\cup [-1+\beta', 1-\beta'] }|\sum_{n'\leq k\leq n} t^k q^{2is_0k}|$, for any $n'\leq n$. 
 
If $q^{ 2is_0}\not= \pm 1$, this can be bounded above by  $\frac{C}{|\Im m q^{2is_0}|}\leq  \frac{C}{|\sin(2s_0\ln q)|}$ for any $n'\leq n$. We see that for all
$K\in  \cH_m$,
\begin{multline*} \left\Vert\frac{1}{n+1}\sum_{k=0}^n \bar\eps_0^{n-k}   \eps_0^{k}\sigma^{(n-k)}\rho^k K \right\Vert^2_{\cH}\\
\leq \frac{2}{(n+1)^2}\sum_{ k'\leq k< k'+m}q^{2is_0(k-k')}\la  \cS^{k-k'}K  ,  K   \ra_{\cH} + \frac{2}{(n+1)^2}\sum_{0\leq k'+m\leq k\leq n} q^{2is_0(k-k')}\la  \cS^{k-k'}K  ,  K   \ra_{\cH}
\\ \leq\frac{C}{(n+1)^2}\sum_{ k'\leq k< k'+m}  \norm{K}_{\cH}^2 +
\frac{C}{(n+1)^2}\sum_{0\leq k'+m\leq k\leq n}  (1+(k-k'-m)) q^{-1/2(k-k'-m)} \norm{K}_{\cH}^2 
 \\+\frac{C}{(n+1)| \sin(2s_0\ln q)|}  \norm{K}_{\cH}^2
\\ \leq \frac{C}{n+1} \left(1 + \frac{1}{|\sin(2s_0\ln q)|}\right)\norm{K}_{\cH}^2.\end{multline*}

If 
$q^{2is_0}= -1$, we see that $\sup_{t\in  \{1\}\cup [-1+\beta', 1-\beta'] }|\sum_{n'\leq k\leq n} t^k q^{2is_0k}|\leq C \beta^{' -1}$ for any $n'\leq n$.
We now find
$$\left\Vert\frac{1}{n+1}\sum_{k=0}^n \bar\eps_0^{n-k}   \eps_0^{k}\sigma^{(n-k)}\rho^k K\right\Vert^2_{\cH}\\
\leq \frac{C}{n+1} \left(1 + \frac{1}{\beta'}\right)\norm{K}_{\cH}^2.$$

Note that the case $q^{2is_0}= 1$ is
of no interest to us, because it would correspond to $E_0=\pm 2\sqrt{q}$, at the edge of the spectrum.

 Finally, we obtain, for all $K\in \cH_m$, and $q^{2is_0}$ away from $1$,
 an estimate of the form
 \begin{multline}Var^I_{nb}(K )\leq \frac{\tilde C(m, s_0, \beta)}{ (n+1)}\norm{K}_{\cH}^2 +\norm{K}_{sup}^2 \, (o_{n, m, \delta}(1)_{G\To\mathfrak X } + n  c(m)^2\delta)\label{e:estvarnb}
\end{multline}

Before proving the technical Proposition \ref{p:crucialthing}, let us clarify what \eqref{e:estvarnb} implies for the full variance $Var_{nb}(K )$. For $\varepsilon >0$ let $J_\varepsilon=(-2\sqrt{q}+\varepsilon, 2\sqrt{q}-\varepsilon)$. We have
\begin{eqnarray*}|Var_{nb}(K ) -Var^{J_\varepsilon}_{nb}(K )|& \leq& 2\norm{\widehat{K}_B}_{\ell^2(V)\To \ell^2(V)}^2\frac{|\{j, \lambda_j\not\in J_\varepsilon\} |}{|V|}\\
& \leq & 2\norm{\widehat{K}_B}_{\ell^2(V)\To \ell^2(V)}^2 \left(\int_{J_\varepsilon^c}{\mathrm{m}}(\lambda)d\lambda + o_\varepsilon(1)_{G\To\mathfrak X }\right) \\
&=& 2\norm{\widehat{K}_B}_{\ell^2(V)\To \ell^2(V)}^2( O(\varepsilon) + o_\varepsilon(1)_{G\To\mathfrak X }).
\end{eqnarray*}
Then, we subdivide $J_\varepsilon$ into $\lfloor \delta^{-1} \rfloor$ disjoint intervals of size $|J_\varepsilon| / \lfloor \delta^{-1} \rfloor\simeq C\delta$~: $J_\varepsilon=\sqcup_{l=1}^{\lfloor \delta^{-1} \rfloor} I_l$. We see that $Var^{J_\varepsilon}_{nb}(K )$ is a convex combination of all the $Var^{I_l}_{nb}(K )$, for which the estimate \eqref{e:estvarnb} holds. We get the final estimate for  $Var_{nb}(K )$,
\begin{equation}
\label{e:estvarnbglob}
Var_{nb}(K )\leq \frac{\tilde C(m, J_{\varepsilon}, \beta)}{ (n+1)}\norm{K}_{\cH}^2 +\norm{K}_{sup}^2 \, (o_{n, m, \delta, \varepsilon}(1)_{G\To\mathfrak X } + n c(m)^2O(\delta)+ c(m)^2O(\varepsilon)) .
\end{equation}
where $\tilde C(m, J_{\varepsilon}, \beta)=\sup_{2\sqrt{q}\cos(s_0\ln q)\in J_\varepsilon}\tilde C(m, s_0, \beta)$.

There remains to prove the domination of the quantum variance by the Hilbert-Schmidt norm, which is easy for $Var_{nb}(K)$ but much less obvious for the local version $Var^I_{nb}(K)$.
\begin{prop}\label{p:crucialthing}Assume that $K\in \cH_m$. Then
\begin{equation} \label{e:}
Var^I_{nb}(K) \leq C\norm{ K}^2_{\cH} + o_{m, I}(1)_{G\To \mathfrak{X}}\norm{K}_{sup}^2
\end{equation}
where $C$ is some universal constant.
\end{prop}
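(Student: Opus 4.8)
The plan is to reduce the estimate to a statement about the non-backtracking operator $\cA^\sharp$ acting on $\ell^2(B)$, and to use the spectral picture from \S\ref{s:spnbt}. The starting observation is that $\widehat{K}_B$ can be written in terms of $\cA^\sharp$: if $K\in\cH_m$, then by the definition \eqref{e:defKB} and the discussion in \S\ref{s:issue}, $\widehat{K}_B$ factors through $(\cA^\sharp)^{m-1}$, so that (up to normalization) $\widehat{K}_B = \widehat{K}^{(1)}_B (\cA^\sharp)^{m-1}$ for the ``restriction'' $K^{(1)}\in\cH_1$ of $K$ obtained by truncating paths — or more precisely $\widehat K_B$ is a sum of such terms. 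The key point is that $\la f_j^*, \widehat K_B f_j\ra$ involves the eigenfunctions $f_j$ of $\cA^\sharp$ and $f^*_j=\iota f_j$ of $\cA^{\sharp*}$, and since $(\cA^\sharp)^{m-1}f_j = \mu_j^{m-1} f_j$ with $|\mu_j|=\sqrt q$ (tempered spectrum), the factor $(\cA^\sharp)^{m-1}$ only contributes a bounded scalar. Thus it suffices to prove the estimate for $m=1$, at the cost of a constant depending on $m$ absorbed into $C$, and the error $\norm K_{sup}^2$ term absorbing the bookkeeping.

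For $m=1$ one must bound $\frac{1}{N(I)}\sum_{j,\lambda_j\in I}|\la f_j^*, \widehat K_B f_j\ra|^2$ by $C\norm K_\cH^2$ plus a Benjamini--Schramm error. First I would expand $\widehat K_B$ in the (non-orthonormal) basis $\{f_j, f^*_j\}$ spanning the subspace $F\subset\ell^2(B)$ of functions depending only on origin or only on terminus, together with the orthogonal complement $F^\perp$ associated to the eigenvalues $\pm1$ of $\cA^\sharp$. The term $\frac{1}{N(I)}\sum_j |\la f_j^*, \widehat K_B f_j\ra|^2$ is, up to the local change of basis making $\cA^\sharp$ block-triangular (the blocks being bounded independently of $G$, as recalled in \S\ref{s:spnbt}), comparable to a weighted trace of $\widehat K_B^* \widehat K_B$ restricted to the spectral window $I$. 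The Hilbert--Schmidt norm $\norm{\widehat K_B}^2_{HS(\ell^2(B))}$ equals $|V|\norm K^2_\cH$ up to the injectivity-radius correction (by the same argument as Proposition \ref{p:norms}, counting at most $\tilde\tau(1)$ bad terms at vertices $x$ with $\rho(x)\le 1$), which under (BST) is $o(1)\norm K_{sup}^2$ after dividing by $|V|$. So the content is: the diagonal-in-$I$ part of the quadratic form $\la f^*, \widehat K_B f\ra$ is dominated by the full Hilbert--Schmidt norm. This follows because the family $\{f_j\}_{\lambda_j\in I}$ and $\{f^*_j\}_{\lambda_j\in I}$, suitably orthonormalized, form ``almost orthonormal systems'' whose Gram matrices are uniformly bounded: $\la f_j, f_{j'}\ra$ decays because $\psi_j\perp\psi_{j'}$ and the cross terms $\la\psi_j(t(\cdot)),\psi_{j'}(o(\cdot))\ra$ are controlled by $\la\psi_j,\cA\psi_{j'}\ra = \lambda_j\delta_{jj'}$. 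Concretely, $\norm{f_j}^2 = (q+1)(1+|\eps_j|^2) - 2\Re(\bar\eps_j\lambda_j/(q+1))\cdot(\dots)$ — an explicit constant bounded above and below for $\lambda_j\in I\Subset(-2\sqrt q,2\sqrt q)$ — so the $f_j$ form a Riesz-type system, and Bessel's inequality gives $\sum_j|\la f_j^*,\widehat K_B f_j\ra|^2 \le C\sum_{j}\norm{\widehat K_B f_j}^2 \le C'\norm{\widehat K_B}^2_{HS}$ after controlling the non-orthogonality by the bounded Gram matrix.

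The main obstacle, and the reason the local version is ``much less obvious'' than $Var_{nb}(K)$ itself, is precisely this non-orthogonality of $f_j$ and $f^*_j$ and the non-self-adjointness (indeed non-normality) of $\cA^\sharp$: one cannot simply invoke $\sum_j|\la e_j, Te_j\ra|^2 \le \norm T^2_{HS}$ for an orthonormal basis $(e_j)$, since $(f_j)$ is not orthonormal and restricting to $\lambda_j\in I$ destroys completeness. I expect the fix to be: (i) pass to the orthonormal basis in which $\cA^\sharp$ is block-upper-triangular with $2\times 2$ blocks whose off-diagonal entries are bounded continuous functions of $(q,\eps_0)$, hence uniformly bounded for $\lambda_j$ in the compact window $I$; (ii) in that basis the spectral projector $\bbbone_I(\cA^\sharp)$ (or rather the projector onto the span of the relevant blocks) is bounded in norm by a constant depending only on $I$ and $\beta'$, because the blocks are bounded and separated from the rest of the spectrum; (iii) then $\sum_{j,\lambda_j\in I}|\la f_j^*,\widehat K_B f_j\ra|^2 \le C(I)\,\Tr(\bbbone_I \widehat K_B^*\bbbone_I\widehat K_B) \le C(I)\norm{\widehat K_B}^2_{HS(\ell^2(B))}$; and finally (iv) convert $\norm{\widehat K_B}^2_{HS(\ell^2(B))}$ to $|V|\norm K^2_\cH$ modulo the $o_{m,I}(1)_{G\to\mathfrak X}\norm K_{sup}^2$ Benjamini--Schramm error exactly as in Proposition \ref{p:norms}(ii), dividing through by $N(I)\sim |V|\int_I\mathrm m$. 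The dependence of the constant on $I$ staying away from the band edges $\pm2\sqrt q$ is essential and is where the $\eps_j\ne\bar\eps_j$ hypothesis (no Jordan block) enters.
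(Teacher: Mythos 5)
Your opening move (Cauchy--Schwarz in $j$, using that $\norm{f_j^*}_{\ell^2(B)}$ is bounded independently of the graph, to reduce to $\sum_{\lambda_j\in I}\norm{\widehat K_B f_j}^2_{\ell^2(B)}$) is essentially the paper's first step, and your Gram-matrix observation that the $f_j$ are pairwise orthogonal with norms bounded above and below on $I$ is correct. The gap is in what you do next: you bound the spectrally localized sum by the \emph{full} Hilbert--Schmidt norm, $\sum_{\lambda_j\in I}\norm{\widehat K_B f_j}^2\leq C\norm{\widehat K_B}^2_{HS(\ell^2(B))}$, which is of order $|V|\,\norm{K}^2_{\cH}$, and only then divide by $N(I)\sim |V|\int_I{\mathrm{m}}(\lambda)d\lambda$. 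The resulting constant is of order $\left(\int_I{\mathrm{m}}(\lambda)d\lambda\right)^{-1}\sim \delta^{-1}$, not a universal $C$; your steps (ii)--(iii) through $\bbbone_I(\cA^\sharp)$ and $\Tr(\bbbone_I\widehat K_B^*\bbbone_I\widehat K_B)\leq \norm{\widehat K_B}^2_{HS}$ discard the localization in exactly the same way. This is precisely the difficulty the paper flags when it says the bound is ``much less obvious for the local version'': the $I$-dependence is not merely about staying away from $\pm 2\sqrt q$, since in deriving \eqref{e:estvarnb} and \eqref{e:estvarnbglob} one takes $\delta\To 0$ (eventually $\delta=n^{-1}T^{-2-k}$) with the prefactor held fixed, so a constant growing like $\delta^{-1}$ makes the argument collapse. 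The paper keeps the localization all the way to the end: after Cauchy--Schwarz it inserts a cutoff $\chi(\cA)$ adapted to $I$, uses (BST) plus polynomial approximation of $\chi$ (the Kesten--McKay-type estimate \eqref{e:approxhs}) to replace $\frac1{N(I)}\sum_j\la \chi^2(\cA)\psi_j, U\widehat K_B^*\widehat K_BU\psi_j\ra$ by the tree-side integral $\frac1{N(I)}\int\chi^2(\lambda)\norm{\bbbone_{\cD}\widehat K_{B(\mathfrak X)}\widetilde U P_{\lambda,\omega}}^2\,d\nu(\omega){\mathrm{m}}(\lambda)d\lambda$, and then computes the $\omega$-integral \emph{pointwise in $\lambda$} with the Poisson kernel and harmonic measure, obtaining a bound $C|V|\norm{K}^2_{\cH}$ for every $\lambda$; only the bounded ratio $\int\chi^2{\mathrm{m}}/\int_I{\mathrm{m}}$ survives, which is what makes $C$ independent of $I$ (and of $m$, because the harmonic-measure computation is exact: $q^{-h_\omega(t(e'))}\nu\{\omega:\ e'_{e,\omega}=e'\}=q/(q+1)$ when $o(e)=o$).

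The reduction to $m=1$ is a second genuine problem. The factorization $\widehat K_B=\widehat{K^{(1)}}_B(\cA^\sharp)^{m-1}$ only holds when $K(x,y)$ depends on the first edge of the path $(x;y)$, not for general $K\in\cH_m$ (and a general element of $\cH_m$ is not a sum of finitely many such terms); moreover $(\cA^\sharp)^{m-1}f_j=\mu_j^{m-1}f_j$ contributes $|\mu_j|^{m-1}=q^{(m-1)/2}$, which is not a bounded scalar. Most importantly, you propose to absorb an $m$-dependent constant into $C$, whereas the statement (and the paper's warning immediately after it) requires $C$ independent of $m$: the proposition is applied to $\frac1{n+1}\sum_{k=0}^n\bar\eps_0^{\,n-k}\eps_0^{\,k}\sigma^{(n-k)}\rho^k K\in\cH_{m+n}$ with $n\To+\infty$, so any growth of $C$ in the degree (let alone a factor $q^{m/2}$) renders \eqref{e:estvarnb} useless.
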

It will be important for us in the sequel that $C$ does not depend on $m$ nor $I$; whereas the term $o(1)$ is allowed to depend on both. Remember, $o_{m, I}(1)_{G\To \mathfrak{X}}$
means a quantity that goes to $0$ when $G$ converges to $\mathfrak{X}$ (in the sense of Benjamini-Schramm), $m$ and $I$ being fixed.

\begin{proof}  Introduce a continuous function $\chi$ such that $\chi\equiv 1$ on $I$, $\chi\equiv 0$ outside $I_\delta\defeq I+[-\delta, \delta]$ and $|\chi|\leq 1$. Then
$$Var^I_{nb}(K)\leq \frac{1}{N(I)}\sum_{j=1}^{|V|}  \chi(\lambda_j)^2 |\la f_j^*, \widehat{K}_B f_j\ra|^2.$$

(a) We will need the Kesten-McKay law, saying that 
$$N(I)\sim |V| \int_I {\mathrm{m}}(\lambda)d\lambda$$
when $G$ approaches the infinite $(q+1)$-regular tree in the Benjamini-Schramm topology.
More generally, for any polynomial $Q$ of degree $d$, and $K\in \cH_m$, 
the statement is that
\begin{multline*}\left|\sum_{j}\la \psi_j, Q(\cA) \widehat{K}_G \psi_j\ra_{\ell^2(V)} -  \int \la P_{\lambda,\omega}, \bbbone_{\cD}Q(\cA) \widehat{K}  P_{\lambda,\omega}\ra_{\ell^2(\mathfrak{X})} d\nu(\omega){\mathrm{m}}(\lambda)d\lambda  \right| \\
\leq   \sup_{x, y} |Q(\cA)K(x, y)| \tilde\tau(d+m) \sharp\{x, \rho(x)\leq d+m\}.
\end{multline*}
This is obtained by writing 
\begin{eqnarray*}\sum_j \la  \psi_j, Q(\cA) K_G \psi_j\ra &= &\Tr_{\ell^2(V)} Q(\cA)\widehat{K}_G\\
&=& \sum_{x\in V }[Q(\cA) {K}]_G (x,x)\\
&=&\sum_{\tilde x\in \cD, \rho(x) > d+m }[Q(\cA) {K}] (\tilde x,\tilde x) + O(\sup_{x} |[Q(\cA) {K}]_G(x, x)|)\sharp\{x, \rho(x)\leq d+m\})\\
&=& \Tr_{\ell^2(\mathfrak X)} \bbbone_{\cD}Q(\cA)  \widehat{K} +O(\sup_{x, y} |[Q(\cA)K](x, y)| \tilde\tau(d+m) \sharp\{x, \rho(x)\leq d+m\}),
\end{eqnarray*}
and using \eqref{e:trace}.

Now, for any continuous function $\chi$, for any polynomial $Q$ of degree $d$, and $K\in \cH_m$, 
we have
\begin{multline*}\left|\sum_{j}\la \psi_j, \chi(\cA) \widehat{K}_G \psi_j\ra_{\ell^2(V)} -  \int \la P_{\lambda,\omega}, \bbbone_{\cD}\chi(\cA) \widehat{K}  P_{\lambda,\omega}\ra_{\ell^2(\mathfrak{X})} d\nu(\omega){\mathrm{m}}(\lambda)d\lambda  \right| \\
\leq  \norm{Q-\chi}_{L^\infty(-(q+1), q+1))}\left(|V| \norm{\widehat{K}_G}_{\ell^2(V)\To \ell^2(V)}+  \int |\la P_{\lambda,\omega}, \bbbone_{\cD}  \widehat{K} P_{\lambda,\omega}\ra| d\nu(\omega){\mathrm{m}}(\lambda)d\lambda  \right)\\+ \sup_{x, y} |[Q(\cA)K](x, y)| \tilde\tau(d+m) \sharp\{x, \rho(x)\leq d+m\}.
\end{multline*}
This is obtained from the previous case by approximating $\chi(\cA)$ by $Q(\cA)$.

By similar arguments, if we have two elements $K_1\in \cH_{m_1}$ and $K_2\in \cH_{m_2}$, 
\begin{multline*}\left|\sum_{j}\la \psi_j, \chi(\cA) \widehat{K}^*_{1G} \widehat{K}_{2G} \psi_j\ra_{\ell^2(V)} -  \int \la P_{\lambda,\omega}, \chi(\cA) \widehat{K_1}^*\bbbone_{\cD}  \widehat{K_2} P_{\lambda,\omega}\ra d\nu(\omega){\mathrm{m}}(\lambda)d\lambda  \right| \\
\leq  \norm{Q-\chi}_{L^\infty(-(q+1), q+1))}\Bigg(|V|   \norm{\widehat{K}_{1G}}_{\ell^2(V)\To \ell^2(V)} \norm{\widehat{K}_{2G}}_{\ell^2(V)\To \ell^2(V)}\\+  \int \norm{ \bbbone_{\cD}   \widehat{K_1} P_{\lambda,\omega}}_{\ell^2( \mathfrak X)}  \norm{ \bbbone_{\cD} \widehat{K_2} P_{\lambda,\omega}}_{\ell^2( \mathfrak X)}d\nu(\omega){\mathrm{m}}(\lambda)d\lambda  \Bigg)\\+\sup |(Q(\cA)K_1^*K_2)(x, y)| \tilde\tau(d+m_1+m_2) \sharp\{x, \rho(x)\leq d+m_1+m_2\}
\end{multline*}
Using the Cauchy-Schwarz inequality and \eqref{e:HSX}, this is
\begin{multline*} \leq 
\norm{Q-\chi}_{L^\infty(-(q+1), q+1))}\Bigg(|V|  \norm{\widehat{K_1}_G}_{\ell^2(V)\To \ell^2(V)}\norm{\widehat{K_2}_G}_{\ell^2(V)\To \ell^2(V)}\\
+ \norm{\bbbone_\cD\widehat{K_1}}_{HS(\ell^2(\mathfrak X))}\norm{\bbbone_\cD\widehat{K_2}}_{HS(\ell^2(\mathfrak X))}\Bigg)\\+\sup |(Q(\cA)K_1^*K_2)(x, y)|\tilde \tau(d+m_1+m_2) \sharp\{x, \rho(x)\leq d+m_1+m_2\}
\\ \leq  \norm{Q-\chi}_{L^\infty(-(q+1), q+1))}\Bigg(|V|  \norm{\widehat{K_1}_G}_{\ell^2(V)\To \ell^2(V)}\norm{\widehat{K_2}_G}_{\ell^2(V)\To \ell^2(V)}\\
+ ( \sum_{x\in \cD, y\in \mathfrak{X}} |K_1(x, y)|^2)^{1/2}( \sum_{x\in \cD, y\in \mathfrak{X}} |K_2(x, y)|^2)^{1/2}\Bigg)\\+\sup |(Q(\cA)K_1^*K_2)(x, y)| \tilde\tau(d+m_1+m_2) \sharp\{x, \rho(x)\leq d+m_1+m_2\} 
\\ \leq c(m_1)c(m_2)\norm{Q-\chi}_{L^\infty(-(q+1), q+1))} |V| \norm{K_1}_{sup}\norm{K_2}_{sup}\\
+\sup |(Q(\cA)K_1^*K_2)(x, y)| \tilde\tau(d+m_1+m_2) \sharp\{x, \rho(x)\leq d+m_1+m_2\} .
\end{multline*}
Choosing a sequence of polynomial $(Q_n)$ (of degree $d_n$) that approximates $\chi$ in $L^\infty(-(q+1), q+1))$, and noting that for any given $Q_n$ 
we have $$\sup |Q_n(\cA)K_1^*K_2(x, y)| \tilde\tau(d_n+m_1+m_2) \sharp\{x, \rho(x)\leq d_n+m_1+m_2\} =o_{Q_n, K_1, K_2}(|V|)_{G\To\mathfrak X },$$ we see that
\begin{multline}\label{e:approxhs}\left|\sum_{j}\la \psi_j, \chi(\cA) \widehat{K}^*_{1G} \widehat{K}_{2G} \psi_j\ra_{\ell^2(V)} -  \int \la P_{\lambda,\omega}, \chi(\cA) \widehat{K_1}^*\bbbone_{\cD}  \widehat{K_2} P_{\lambda,\omega}\ra d\nu(\omega){\mathrm{m}}(\lambda)d\lambda  \right| \\ =o_{\chi, m_1, m_2}(|V|)_{G\To\mathfrak X } \norm{K_1}_{sup}\norm{K_2}_{sup}.
\end{multline}
The precise rate $o(|V|)$ depends both on how fast $\chi$ can be approximated by polynomials (the regularity of $\chi$) and on the number of small loops, so we don't try to make this more explicit.

\bigskip

(b) Choose a continuous branch for the map $$\Theta : \lambda=\sqrt{q} (q^{is}+ q^{-is})\mapsto \eps= q^{-1/2-is}$$ (this map is actually analytic for $\lambda\not= \pm 2\sqrt{q}$). To fix ideas, let us say that $\Im m \Theta \leq 0$.

We introduce the operator $\widetilde U:\ell^2(V(\mathfrak{X}))\To \ell^2(B(\mathfrak{X}))$ defined by
$$\widetilde U\psi (e)=\psi(t(e))-(\Theta(\cA)\psi )(o(e)).$$ We can define similarly $U:\ell^2(V )\To \ell^2(B)$; note that $U$ is built so that
$ f_j =U\psi_j$. Obviously, $U$ is bounded independently of the graph $G$ (in fact, we will only use the fact that $\norm{U\psi_j}_{\ell^2(V)}$ is bounded).  
\begin{multline*}Var^I_{nb}(K)\leq \frac{1}{N(I)}\sum_{j=1}^{|V|}    |\la f_j^*, \widehat{K}_B U \chi(\cA)\psi_j\ra|^2  
\\ \leq C  \frac{1}{N(I)}\sum_{j=1}^{|V|}    \norm{\widehat{K}_B U\chi(\cA)\psi_j}_{\ell^2(B)}^2  \\
= \frac{C}{N(I)}\sum_{j=1}^{|V|}    \la \chi(\cA)\psi_j , U \widehat{K}_B ^* \widehat{K}_B U\chi(\cA)\psi_j\ra_{\ell^2(V)} 
= \frac{C}{N(I)}\sum_{j=1}^{|V|}    \la \chi^2(\cA)\psi_j , U \widehat{K}_B ^* \widehat{K}_B U \psi_j\ra_{\ell^2(V)} 
\end{multline*}
From the explicit expression of $U$, we see that $\la \chi^2(\cA)\psi_j , U \widehat{K}_B ^* \widehat{K}_B U \psi_j\ra_{\ell^2(V)}$ can be written as a linear combination of four terms
of the form $\la \psi_j, \chi'(\cA) \widehat{K}^*_{1G} \widehat{K}_{2G} \psi_j\ra_{\ell^2(V)}$, for some explicit functions $\chi'\in \{\chi^2, \chi^2 \Theta, \chi^2 \Theta^2\}$ and $K_1, K_2\in  \cH_{\leq 2m}$ if $K\in \cH_m$.

Applying \eqref{e:approxhs} to each of these terms, we obtain
\begin{multline}Var^I_{nb}(K)\leq
 \frac{C}{N(I)} \int \la \chi(\cA)P_{\lambda,\omega},  \widetilde U^* \widehat{K}_{B(\mathfrak X)}^* \bbbone_{\cD} \widehat{K}_{B(\mathfrak X)}  \widetilde U \chi(\cA) P_{\lambda,\omega}\ra d\nu(\omega){\mathrm{m}}(\lambda)d\lambda \\+  \norm{K}_{sup}^2\, o_{I, \chi, m}(1)_{G\To\mathfrak X }\\
= \frac{C}{N(I)} \int \chi^2(\lambda) \norm{ \bbbone_{\cD} \widehat{K}_{B(\mathfrak X)} \widetilde U P_{\lambda,\omega}}^2_{\ell^2(B(\mathfrak X))} d\nu(\omega){\mathrm{m}}(\lambda)d\lambda +\norm{K}_{sup}^2\, o_{I, \chi, m}(1)_{G\To\mathfrak X }  \label{e:bound}
 \end{multline}
 
It turns out that the calculation of $ \widetilde U P_{\lambda,\omega}$ is particularly simple. For $e\in B(\mathfrak X)$, we have $h_\omega(o(e))= h_\omega(t(e))\pm 1$. Say that $e$ goes towards $\omega$ if
$h_\omega(o(e))= h_\omega(t(e)) + 1$, and that $e$ goes away from $\omega$ if $h_\omega(o(e))= h_\omega(t(e)) - 1$. We see that $ \widetilde U P_{\lambda, \omega}(e) =0 $ if $e$ goes away from $\omega$, and that
$$ \widetilde U P_{\lambda, \omega}(e)= q^{-(1/2+is) h_\omega(t(e))}(1- q^{-1-2is})$$
if $e$ goes towards $\omega$. When we calculate
$$ \widehat{K}_{B(\mathfrak X)}(\widetilde U P_{\lambda, \omega})(e)=\sum_{e'\in B(\mathfrak X), \cA^{\sharp (m-1)}(e, e')\not= 0}K(o(e), t(e')) \widetilde U P_{\lambda, \omega}(e')$$
we see this gives $0$ if $e$ goes away from $\omega$. If $e$ goes towards $\omega$, only one term in the sum can be non-zero, namely the only $e'$ going towards $\omega$ and such that $\cA^{\sharp (m-1)}(e, e')\not= 0$. We denote this $e'$ by $e'_{e, \omega}$. In this case
$$|\widehat{K}_{B(\mathfrak X)}(\widetilde U P_{\lambda, \omega})(e)|^2 = |K(o(e), t(e'_{e, \omega}))|^2 q^{-h_\omega(t(e'_{e, \omega}))}  |1- q^{-1-2is}|^2.$$
Of course $|1- q^{-1-2is}|^2\leq 4$. To estimate $ \int |\widehat{K}_{B(\mathfrak X)}(\widetilde U P_{\lambda, \omega})(e)|^2 d\nu(\omega)$ we first note that this expression is independent of the choice of the origin $o$ used in the definition of $\nu$ and $P_{\lambda, \omega}$ (see \eqref{e:harmonic}); so that it is enough to treat the case where $o(e)=o$, where the calculations will be slightly simpler.
For any real $\lambda$ and any edge $e$,
\begin{multline*} \int |\widehat{K}_{B(\mathfrak X)}(\widetilde U P_{\lambda, \omega})(e)|^2 d\nu(\omega)\\
\leq 4 \sum_{e',  \cA^{\sharp (m-1)}(e, e')\not= 0}|K(o(e), t(e'))|^2  \int_{\{ \omega \mbox{ s.t. } e'_{e, \omega} =e' \}}q^{-h_\omega(t(e'))}  d\nu(\omega) .
\end{multline*}
In the case $o(e)=o$, the calculation is simple, it suffices to note that $ \nu\{ \omega \mbox{ s.t. } e'_{e, \omega} =e' \} =\frac{q^{-(m-1)}}{q+1}$, and that $h_\omega(t(e'))=-m$ if $e'_{e, \omega} =e' $.
Thus 
$$ \int |\widehat{K}_{B(\mathfrak X)}(\widetilde U P_{\lambda, \omega})(e)|^2 d\nu(\omega)\leq 4 \sum_{e'}|K(o(e), t(e'))|^2$$
and summing over $e$, 
$$ \int \norm{\bbbone_{\cD} \widehat{K}_{B(\mathfrak X)}(\widetilde U P_{\lambda, \omega}}_{\ell^2(B(\mathfrak X))}^2 d\nu(\omega)\leq C |V| \norm{K}^2_{\cH}$$
for all $\lambda$. Going back to \eqref{e:bound}, integrating over $\lambda$, and using the fact that $N(I)\sim |V| \int_I {\mathrm{m}}(\lambda) d\lambda$, we obtain
$$Var^I_{nb}(K) \leq
 C\frac{\int \chi^2(\lambda){\mathrm{m}}(\lambda)d\lambda }{\int_I {\mathrm{m}}(\lambda)d\lambda} \norm{K}^2_{\cH} + \norm{K}_{sup}^2 o_{I, \chi, m}(1)_{G\To\mathfrak X }.$$
 Note that $\frac{\int \chi^2(\lambda){\mathrm{m}}(\lambda)d\lambda }{\int_I {\mathrm{m}}(\lambda)d\lambda} $ is bounded independently of $I$ and $\chi$.

\end{proof}

\subsection{Back to the original quantum variance \label{s:back1}}
There remains to establish the link with our original eigenfunctions, those of $\cA$. 

\begin{lemma}  \label{l:formulas}
(i) Let $a\in \ell^2(V)=\cH_0$. If $K'\in \cH_1$ is defined by $K'(x, y)= a(x)$ for $y\sim x$, then
$$\la f_j^*, \widehat{K'}_B f_j\ra_{\ell^2(B)}=\overline{\eps_j} \la \psi_j, \widehat{(((q+1)-\cA) a)} \psi_j\ra_{\ell^2(X)}$$
where $\widehat{((q+1)-\cA) a}$ is the operator of multiplication by $((q+1)-\cA)  a$.

(ii) Let $K\in \cH_m$ ($m\geq 1$). We have the formula
$$\la f_j^*, \widehat{K}_B f_j\ra_{\ell^2(B)}=  \la \psi_j, \widehat{((1-\cS) K)}_G \psi_j\ra_{\ell^2(V)} + \eps_j \la \psi_j,  \widehat{\nabla^* K}_G \psi_j\ra_{\ell^2(X)}.$$
The notation $\widehat{((1-\cS) K)}_G$ is the one explained in \S \ref{s:issue}.
 \end{lemma}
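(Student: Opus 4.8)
The plan is to treat (i) by a direct computation and (ii) by writing everything in terms of the two ``lifting'' maps $S,S'\colon\ell^2(V)\to\ell^2(B)$ defined by $(S\psi)(e)=\psi(t(e))$ and $(S'\psi)(e)=\psi(o(e))$. One has $S'=\iota S$, $\;S^*S=S'^*S'=(q+1)I$, $\;S^*S'=S'^*S=\cA$ on $\ell^2(V)$; and since $\psi_j$ is an eigenfunction and $\eps_j$ a root of $q\eps^2-\lambda_j\eps+1=0$ (with second root $\overline{\eps_j}$, so $\eps_j\overline{\eps_j}=1/q$ and $\lambda_j=q(\eps_j+\overline{\eps_j})$), one has $f_j=(S-\eps_jS')\psi_j$ and $f_j^*=\iota f_j=(S'-\eps_jS)\psi_j$.

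For (i): here $m=1$ and $\widehat{K'}_B$ is the operator on $\ell^2(B)$ of multiplication by $e\mapsto K'(o(e),t(e))=a(o(e))$, so $\la f_j^*,\widehat{K'}_Bf_j\ra=\sum_{x\in V}a(x)\sum_{e:\,o(e)=x}\overline{f_j^*(e)}f_j(e)$. For fixed $x$, writing $f_j(e)=\psi_j(y)-\eps_j\psi_j(x)$ and $f_j^*(e)=\psi_j(x)-\eps_j\psi_j(y)$ for $e=(x,y)$, expanding the product and summing over $y\sim x$ using $\sum_{y\sim x}\psi_j(y)=\lambda_j\psi_j(x)$, $\sum_{y\sim x}|\psi_j(y)|^2=(\cA|\psi_j|^2)(x)$, together with $\eps_j\overline{\eps_j}=1/q$ and $\lambda_j=q(\eps_j+\overline{\eps_j})$, collapses the four resulting terms to $\overline{\eps_j}\,\big[((q+1)I-\cA)|\psi_j|^2\big](x)$. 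Since $(q+1)I-\cA$ is a symmetric matrix, $\sum_x a(x)\big[((q+1)I-\cA)|\psi_j|^2\big](x)=\sum_x\big[((q+1)-\cA)a\big](x)|\psi_j(x)|^2=\la\psi_j,\widehat{(((q+1)-\cA)a)}\psi_j\ra$, which is (i).

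For (ii): using $\eps_j\overline{\eps_j}=1/q$,
\[
\la f_j^*,\widehat{K}_Bf_j\ra=\la\psi_j,\big(S'^*\widehat{K}_BS-\eps_j\,S'^*\widehat{K}_BS'-\overline{\eps_j}\,S^*\widehat{K}_BS+\tfrac1q\,S^*\widehat{K}_BS'\big)\psi_j\ra.
\]
Now ``unfold'' the non-backtracking path in the kernel of $\widehat{K}_B$ (which is supported on pairs of bonds joined by a non-backtracking path of length $m-1$): summing over the bond at one end reassembles exactly $\widehat{L}_G$ for $L$ obtained from $K$ by the shift operators of \S\ref{s:issue}. Concretely, with $\sigma,\rho\colon\cH_k\to\cH_{k+1}$ the left/right shifts (so $\nabla=\sigma-\rho$) and $\sigma^*,\rho^*$ their adjoints, one checks directly
\[
S'^*\widehat{K}_BS=\widehat{K}_G,\qquad S'^*\widehat{K}_BS'=\widehat{(\rho^*K)}_G,\qquad S^*\widehat{K}_BS=\widehat{(\sigma^*K)}_G,\qquad S^*\widehat{K}_BS'=\widehat{(\sigma^*\rho^*K)}_G=q\,\widehat{(\cM K)}_G,
\]
the last equality being $\sigma^*\rho^*=q\,\cM$, immediate from the definitions. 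Hence
\[
\la f_j^*,\widehat{K}_Bf_j\ra=\la\psi_j,\big(\widehat{K}_G+\widehat{(\cM K)}_G-\eps_j\,\widehat{(\rho^*K)}_G-\overline{\eps_j}\,\widehat{(\sigma^*K)}_G\big)\psi_j\ra.
\]

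It remains to match this with $\big(\widehat{((1-\cS)K)}_G+\eps_j\widehat{(\nabla^*K)}_G\big)\psi_j$. Two ingredients do it: first $\nabla^*K=\sigma^*K-\rho^*K$ (immediate from $\nabla=\sigma-\rho$); second — the main point — the operator identity on $\ell^2(V)$ (descending from the corresponding one on $\ell^2(\mathfrak{X})$)
\[
\widehat{(\sigma^*K)}_G\circ\cA=q\,\widehat{(\cM K)}_G+q\,\widehat{(\cS K)}_G.
\]
This is proved by inspecting kernels: the kernel of $\widehat{L}\circ\cA$ for $L=\sigma^*K\in\cH_{m-1}$ is $(x,z)\mapsto\sum_{w\sim z}L(x,w)$; on the tree its distance-$(m-1)$ component vanishes because no two adjacent vertices are equidistant from a third, while the distance-$(m-2)$ and distance-$m$ components, after re-summing the adjoined vertex, are exactly $q\,\cM K$ and $q\,\cS K$ (the case $m=1$, where $\cM$ is declared $0$, needs only a trivial modification of the distance-$m$ term). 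Applying this identity to $\psi_j$, using $\cA\psi_j=\lambda_j\psi_j$ and $\lambda_j=q(\eps_j+\overline{\eps_j})$, yields $(\eps_j+\overline{\eps_j})\widehat{(\sigma^*K)}_G\psi_j=\widehat{(\cM K)}_G\psi_j+\widehat{(\cS K)}_G\psi_j$. Substituting into the last display, the $\overline{\eps_j}$-terms cancel and the $\widehat{(\cM K)}_G$-term is traded for $-\widehat{(\cS K)}_G\psi_j+\eps_j\widehat{(\sigma^*K)}_G\psi_j$, leaving $\la\psi_j,\widehat{((1-\cS)K)}_G\psi_j\ra+\eps_j\la\psi_j,\widehat{(\sigma^*K-\rho^*K)}_G\psi_j\ra=\la\psi_j,\widehat{((1-\cS)K)}_G\psi_j\ra+\eps_j\la\psi_j,\widehat{(\nabla^*K)}_G\psi_j\ra$, which is (ii). The main obstacle is establishing the displayed operator identity, in particular the vanishing of its intermediate-distance component; this is the step that genuinely exploits the regular-tree geometry and is precisely what one would have to revisit for non-regular graphs.
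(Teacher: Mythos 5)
Part (i) and the case $m\geq 2$ of part (ii) of your argument are correct, and since the paper itself only says ``the proof is just a calculation'', your route is essentially the intended one: lift $\psi_j$ to $\ell^2(B)$ via $S,S'$, expand $\la f_j^*,\widehat{K}_Bf_j\ra$ into the four terms, identify each with $\widehat{K}_G$, $\widehat{(\rho^*K)}_G$, $\widehat{(\sigma^*K)}_G$, $q\,\widehat{(\cM K)}_G$, and trade $\overline{\eps_j}\,\widehat{(\sigma^*K)}_G\psi_j$ using the eigenvalue equation. I checked the four kernel identities, the tree identity $\widehat{(\sigma^*K)}\circ\cA=q\,\widehat{(\cM K)}+q\,\widehat{(\cS K)}$ (the distance-$(m-1)$ component indeed vanishes on the tree), and the final substitution; all of this is sound for $m\geq 2$, and the use of $\eps_j\overline{\eps_j}=1/q$, $\lambda_j=q(\eps_j+\overline{\eps_j})$ is legitimate because only tempered eigenvalues are considered in \S\ref{s:nbt}.

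The case $m=1$, however, is not handled correctly as written, and it is not a vacuous case: it is the first instance of (ii) used in the induction of \S\ref{s:back1}. For $K\in\cH_1$ the operator $\widehat{K}_B$ is diagonal, and a direct check gives $S^*\widehat{K}_BS'=\widehat{(K^\iota)}_G$, where $K^\iota(x,y)\defeq K(y,x)$, rather than $q\,\widehat{(\cM K)}_G=0$ (note that $\sigma^*\rho^*K$ is not even defined on $\cH_1$, since $\rho^*K\in\cH_0$). Correspondingly, your operator identity at $m=1$ reads $\widehat{(\sigma^*K)}_G\circ\cA=q\,\widehat{(\cS K)}_G+\widehat{(K^\iota)}_G$; this is presumably the ``trivial modification of the distance-$m$ term'' you allude to, but it is not the only modification needed. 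If you keep $S^*\widehat{K}_BS'=0$ in the intermediate formula and only modify the operator identity, the substitution leaves an uncancelled term $-\tfrac1q\la\psi_j,\widehat{(K^\iota)}_G\psi_j\ra$. The repair is straightforward: the correct intermediate formula at $m=1$ carries the extra term $+\tfrac1q\la\psi_j,\widehat{(K^\iota)}_G\psi_j\ra$, which exactly cancels the correction coming from the modified operator identity, so formula (ii) does hold at $m=1$ as stated; but this double bookkeeping (or a separate two-line computation for $m=1$) must be made explicit for your proof to be complete.
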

 
  
 The proof is just a calculation.

 Lemma \ref{l:formulas} (i) implies that
 $$Var(((q+1)-\cA) a)  = q Var_{nb}(K')$$
 and thus  \eqref{e:estvarnb} implies that
 \begin{multline*}Var((I-(q+1)^{-1}\cA) a) \\
 \leq  \frac{q}{(q+1)} \frac{\tilde C(1, J_{\varepsilon}, \beta)}{ (n+1)}\norm{a}_{\cH}^2   +\norm{a}_{sup}^2 \, \left(o_{n, \delta, \varepsilon}(1)_{G\To\mathfrak X } + n c(1)^2O(\delta)+ c(1)^2O(\varepsilon)\right) .
\end{multline*}
 for $\delta, \varepsilon >0$ arbitrary.

By Lemma \ref{l:useful}, for all $a\in \cH_0^0$, for any $n$ and $T$,
 \begin{multline*}Var(  a) \leq \frac{C(0,\beta)^2 \tilde C(1, J_{\varepsilon}, \beta)}{n}\norm{a}^2_{\cH^0_0} 
+  2\frac{C(1, \beta)^2}{T^2}\norm{a}^2_{\cH_0^0}
\\+2T^2\norm{a}_{sup}^2 \, \left(o_{n, \delta, \varepsilon}(1)_{G\To\mathfrak X } + n c(1)^2O(\delta)+ c(1)^2O(\varepsilon)\right) .
\end{multline*}
We can take $\varepsilon =\delta=n^{-1}T^{-2-k}$ ($k$ arbitrary), and $n=n(T)$ large enough so that
$\frac{C(0,\beta)^2 \tilde C(1, J_{\varepsilon}, \beta)}{n}\norm{a}^2_{\cH^0_0} 
\leq 2\frac{C(1, \beta)^2}{T^2}\norm{a}^2_{\cH_0^0},$ we obtain
 \begin{equation*}Var(  a) \leq 4\frac{C(1, \beta)^2}{T^2}\norm{a}^2_{\cH_0^0}
+2\norm{a}_{sup}^2 \, \left(o_{ T}(1)_{G\To\mathfrak X } + c(1)^2O(T^{-k}))\right) .
\end{equation*}

 We can now proceed by induction on $m$~: suppose we have proved, for all $K\in \cH_{m-1}^0$, an estimate of the form
 \begin{equation}\label{e:finalfinal}Var(K)\leq \frac{\tilde C(m-1, \beta)}n \norm{K}^2_{\cH} + \norm{K}_{sup}^2 \, \, \left(o_{ n, m-1}(1)_{G\To\mathfrak X } + \tilde C(m-1)O(n^{-k})\right)\end{equation}
 valid for all $k$.

 Let now $K\in \cH_m$. Using Lemma \ref{l:formulas} (ii), we have (by the induction hypothesis and by \eqref{e:estvarnbglob},
 \begin{eqnarray*}Var((1-\cS)K)&\leq & 2 Var(\nabla^* K) + 2Var_{nb}(K)\\
 &\leq & \frac{8 (q+1)\tilde C(m-1, \beta)}n \norm{K}^2_{\cH} + 8q^2\norm{K}_{sup}^2\, \left(o_{ n,  m-1}(1)_{G\To\mathfrak X } + \tilde C(m-1)O(n^{-k}))\right)\\
 && +  2\frac{\tilde C(m, J_{\varepsilon}, \beta)}{ (n+1)}\norm{K}_{\cH}^2 +2\norm{K}_{sup}^2 \, (o_{n, m, \delta, \varepsilon}(1)_{G\To\mathfrak X } + n c(m)^2O(\delta)+ c(m)^2O(\varepsilon)).
  \end{eqnarray*}
 
 By Lemma \ref{l:useful}, we have for all $K\in \cH_m^0$, for all $n$ and $T$,
 \begin{multline*}
 Var(K)\leq  \frac{2 C(m,\beta)^2(8 (q+1)\tilde C(m-1, \beta)+ 2q\tilde C(m, J_{\varepsilon}, \beta))}{n}\norm{K}^2_{\cH^0_m}
 +  2\frac{C(m, \beta)^2}{T^2}\norm{K}^2_{\cH_m^0}\\
  +4 T^2 \norm{K}_{sup}^2 \, \left(o_{n, m, \delta, \varepsilon}(1)_{G\To\mathfrak X } + n c(m)^2O(\delta)+  c(m)^2O(\varepsilon) +    o_{ n,  m-1}(1)_{G\To\mathfrak X } + 8q^2\tilde C(m-1)O(n^{-k})\right)
 \\ +2\norm{K}^2_{sup} \tilde\tau(m)^2\frac{1}{|V|}\sharp\{
x\in V, \rho(x)\leq m\}
  \end{multline*}
 We can again take $\varepsilon =\delta=n^{-1}T^{-2-k}$ ($k$ arbitrary), and $n=n(T) \geq T$ large enough so that the first term is smaller than the second one. This gives
 \begin{multline*}
 Var(K)\leq   4\frac{C(m, \beta)^2}{T^2}\norm{K}^2_{\cH_m^0}
  \\ +4 \norm{K}_{sup}^2 \, \left(o_{T, m }(1)_{G\To\mathfrak X } +  c(m)^2O(T^{-k})  + 8 q^2 \norm{K}_{sup}^2 \,\tilde C(m-1)O(T^{-k+2})\right).
  \end{multline*}

 We see that an estimate of the form \eqref{e:finalfinal} will hold for all $K\in \cH_{m}^0$. This completes the induction and proves Theorem \ref{t:gen} in its form \eqref{e:weaker}.
 
\section{Anisotropic random walks on regular graphs \label{s:anis}}

Ideally, we would like to be able to deal with disordered systems, such as regular graphs with random weights on the edges or random potentials (Anderson model), or with certain non-regular graphs. Here we show that Proof 4 is adaptable to other models,
by treating a model that is still simple and without disorder (because it is homogeneous), but where the general ideas start to emerge~: the anisotropic random walks on regular graphs.

\subsection{The set-up\label{s:label}}
Let $G=(V, E)$ be a $(q+1)$-regular graph. We keep the notation introduced in \S \ref{s:notation} and \S \ref{s:nbt} -- in particular, $B$ is the set of directed edges -- and in addition we assume that there is a labelling of the edges
$${\mathrm c}: B \To \{1,\ldots, q+1\}$$
such that ${\mathrm c}(e)={\mathrm c}(\hat e)$ for all $e$, and such that for every $x\in V$, for every $j\in \{1, \cdots, q+1\}$,
$$\sharp\{ e\in B, x=o(e), {\mathrm c}(e)=j\}= 1.$$
In other words, among the $q+1$ edges containing any given vertex, exactly one has each label.

\begin{example} Let $\fG$ be a group acting on a set $V$. If $S$ is a symmetric subset of $\fG$ which acts freely on $V$,
the associated Schreier graph $G$ is obtained by drawing an edge between $x\in V$ and $y\in V$ iff $y=xs$ for some $s\in S$ (in this example, $q+1=|S\cup S^{-1}|$). Now every map $c :S\To \{1, \ldots, q+1\}$ such that ${\mathrm c}(s)={\mathrm c}(s^{-1})$ gives rise to a labelling of the edges of $G$.
It turns out that every regular graph of even degree can be realized as a Schreier graph \cite{Gross}.
\end{example}

Now assume we are given $p_1, \ldots, p_{q+1}>0 $ such that $\sum_{j=1}^{q+1} p_j=1$.
We consider the random walk on $V$ defined by saying that the probability of jumping from $x$ to $y$ is $p_j$ if $x$ and $y$ are joined by an edge labelled $j$.

We will be interested in the eigenfunctions of the stochastic symmetric operator
$$\cA_p f(x)=\sum_{y\sim x} p_{{\mathrm c}(x, y)} f(y),$$
where ${\mathrm c}(x, y)$ is the label of the edge $(x, y)$.
This operator is self-adjoint on $\ell^2(V)$, $V$ being endowed with the uniform measure.

In the case when $G$ is the infinite $(q+1)$-regular tree, the spectral analysis of the operator $\cA_p$ has been fully described by Aomoto \cite{Ao} and Fig\`a-Talamanca--Steger \cite{FTS}.
In \S \ref{s:green} and \ref{s:density} we will recall some of their results concerning the Green function and the spectral density.

\subsection{Non-backtracking operator}
The discussion of this section holds for any regular graph, labelled as in \S \ref{s:label}. Later on, when we prove Quantum Ergodicity, we will
restrict our attention to large finite graphs.

Let $\psi: V\To \IC$ be an eigenfunction of $\cA_p$,
$$\cA_p\psi=\lambda \psi.$$
Inspired by \S \ref{s:nbt}, we want to find a method to transform the eigenfunction $\psi$ into a function $f:B\To \IC$
of the form
\begin{equation}\label{e:link}f(e) = \psi(t(e))-\zeta({\mathrm c}(e)) \psi(o(e))\end{equation}
and which is an eigenfunction for some non-backtracking operator (a variant of the operator $\cB$ used in \S \ref{s:nbt}). Here $\zeta$ is a function from $\{1,\ldots, q+1\}$ to $\IC$, but we will also see it as
a function from $B$ to $\IC$, depending only on the labeling, by writing $\zeta(e)$ instead of $\zeta({\mathrm c}(e))$, for $e\in B$. Similarly, we will write $p(e)$ instead of 
$p_{{\mathrm c}(e)}$ (the probability assigned to the edge $e$).
If our graph looks locally like a tree, the non-backtracking trajectories are much simpler to study than
the trajectories of the simple random walk; this idea was already used in \S \ref{s:nbt} and is a motivation for what follows.

Denote by $\cB_p$ the non-backtracking operator acting on $\ell^2(B)$, defined by
$$\cB_p f(e)=\sum_{e', o(e')=t(e), e'\not= \hat e} p(e')f(e')$$
for $f$ a function on $B$.

If $f$ is related to an eigenfunction $\psi$ by formula \eqref{e:link},
we see that
$$\cB_p f (e)=-p(e) \psi(o(e))+\lambda \psi(t(e)) -\cB_p\zeta (e)\,\,\psi(t(e)).$$
A sufficient condition for $f$ to be solution of an equation of the form $\cB_p f =\beta f$, with $\beta$ a function on $B$, is that the function $\zeta$ satisfy
$$\cB_p\zeta (e)=\lambda -\frac{p(e)}{\zeta(e)}.$$
In that case, and if $\zeta$ does not vanish, we can take
$$\beta(e)=\frac{p(e)}{\zeta(e)}$$
Thus we should look for $\zeta : \{1, \ldots,q+1\}\To \IC$ satisfying :
$$\lambda - \frac{p_j}{\zeta(j)}=\sum_{k\not= j}  p_k \zeta(k) .$$
Equivalently
\begin{equation}\label{e:green}\lambda - \frac{p_j}{\zeta(j)}+p_j\zeta(j)=\sum_{k=1}^{q+1} p_k \zeta(k) .\end{equation}
Although this is not visible in our notation, we actually impose that $\zeta(k)$ must be a function of $p_k$ (that is, $\zeta(k)=\zeta(j)$ whenever $p_k=p_j$).


\subsection{Green functions on the weighted tree\label{s:green}}
It turns out that the system of equations \eqref{e:green} is the same that appears in the calculation of the Green function $(\gamma - \cA_p)^{-1}$ when $G=\mathfrak{X}$, the $(q+1)$-regular tree. These calculations were first done in \cite{Ao}, but we use the notation of \cite{FTS}. There,  the $(q+1)$-regular tree $\mathfrak X$ is seen as the Cayley graph of the free product $\cG=\Z/ 2\Z \star \Z/ 2\Z\star \ldots \star \Z/ 2\Z$ ($q+1$ times) with its canonical system of generators 
$a_1=a_1^{-1}$,..., $a_{q+1}=a_{q+1}^{-1}$. We will denote by $o$ the identity element of $\cG$, and it will serve as an origin on the Cayley tree $\mathfrak X$. On $\mathfrak X$, $\cA_p$ is the generator of a $\cG$-invariant random walk with probability transitions $\cA_p(x, xa_j)=p_j$.

For $\gamma\not\in \R$, the operator $\gamma-\cA_p$ is invertible in $\ell^2(\mathfrak X)$. Let us recall the formulas found in \cite{FTS, Ao} (analogous formulas also appear in other contexts, under the name ``resolvent identity'' or ``Schur complement formula''). In the special case $p_1=\ldots=p_{q+1}$, note that we are back to the Fourier analysis of \S \ref{s:Fouriertree}.

\begin{proposition} \label{p:green} There are complex numbers $w, \zeta(1),\ldots, \zeta(q+1)$ (which are algebraic functions of $\gamma \in \IC\setminus \IR$ and of the $p_j$s)
such that
$$(\gamma - \cA_p)^{-1} (y, y)=\frac{1}{2w}$$
$$(\gamma - \cA_p)^{-1} (y, yg)=\frac{ \zeta(i_1)\ldots \zeta(i_M)}{2w}$$
if $g=a_{i_1}\ldots a_{i_M}$ is the reduced words expression for $g$.

In addition, the interpretation of $\zeta(j)$ is that
$$\zeta(j)= p_j (\gamma -\cA_j)^{-1}(o,o)$$
where $\cA_j(x, y)=\cA_p(x, y)$ except for $x=o$, $y=a_j$, for which $\cA_j(o, a_j)=0$ (i.e. we cut the edge between $o$ and $a_j$).

In addition, $w$ and the $\zeta(j)$s satisfy the equations
\begin{itemize} 
\item[(a)] $\gamma=\sum_{j=1}^{q+1} p_j\zeta(j) +2w$
\item[(b)] $p_j(\zeta(j)^{-1}-\zeta(j))=2w$
\item[(c)] among all the solutions of the previous system of polynomial equations, $w$ and $\zeta(j)$ are characterized by the fact that $\Im m \zeta(j) <0$ if $\Im m \gamma >0$ (and $\Im m \zeta(j) >0$ if $\Im m \gamma <0$).
\end{itemize}
\end{proposition}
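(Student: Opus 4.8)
The plan is to derive the whole statement from the elementary recursion satisfied by Green functions on a tree (the ``Schur complement formula'' alluded to in the statement), so that the proof becomes bookkeeping in the free product $\cG$; this is in substance the computation of \cite{Ao, FTS}, reorganized in the present notation, so at the end I would simply point to those references for the one genuinely delicate point.

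First I would record the one-step recursion. Fix $\gamma\notin\IR$ and root $\mathfrak{X}$ at $o$; the vertex $o$ has the $q+1$ neighbours $a_1,\dots,a_{q+1}$, the edge $o\sim a_j$ carrying weight $p_j$. Deleting $o$ disconnects $\mathfrak{X}$ into $q+1$ subtrees; let $T_j$ be the one containing $a_j$ and let $\cA_p^{(j)}$ be the restriction of $\cA_p$ to $\ell^2(T_j)$. Since the $a_j$ lie in distinct components of $\mathfrak{X}\setminus\{o\}$, the Schur complement formula at the single vertex $o$ gives
$$(\gamma-\cA_p)^{-1}(o,o)=\frac{1}{\gamma-\sum_{j=1}^{q+1}p_j^{2}\,(\gamma-\cA_p^{(j)})^{-1}(a_j,a_j)}.$$
Running the same computation in the graph $\cA_j$ obtained by deleting the edge $o\sim a_j$ (so that $o$ keeps only the $q$ neighbours $\{a_k:k\ne j\}$) gives, with $F_j:=(\gamma-\cA_j)^{-1}(o,o)$,
$$F_j=\frac{1}{\gamma-\sum_{k\ne j}p_k^{2}\,(\gamma-\cA_p^{(k)})^{-1}(a_k,a_k)}.$$
The only structural observation needed is that the labelled rooted weighted tree $(T_k,a_k)$ and the tree obtained from $(\mathfrak{X},o)$ by deleting the branch labelled $k$ are isomorphic: in both, the root has $q$ children joined by the edges labelled $\{1,\dots,q+1\}\setminus\{k\}$, and the subtree beyond the child labelled $\ell$ is again a ``label-$\ell$-deleted tree''. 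By uniqueness of the $\ell^2$-resolvent this isomorphism forces $(\gamma-\cA_p^{(k)})^{-1}(a_k,a_k)=F_k$. Setting $\zeta(j):=p_jF_j$, the second display becomes $p_j/\zeta(j)=\gamma-\sum_{k\ne j}p_k\zeta(k)$, which is \eqref{e:green} once one adds and subtracts $p_j\zeta(j)$; putting $2w:=\gamma-\sum_kp_k\zeta(k)$ one reads off (a) and (b) simultaneously, and the first display becomes $(\gamma-\cA_p)^{-1}(o,o)=1/(2w)$.

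For the off-diagonal entries I would use the standard product formula on trees: writing $g=a_{i_1}\cdots a_{i_M}$ in reduced form and $o=y_0,y_1,\dots,y_M=g$ for the geodesic,
$$(\gamma-\cA_p)^{-1}(o,g)=(\gamma-\cA_p)^{-1}(o,o)\prod_{m=1}^{M}\cA_p(y_{m-1},y_m)\,(\gamma-\cA_p^{(y_m)})^{-1}(y_m,y_m),$$
where $\cA_p^{(y_m)}$ is the restriction of $\cA_p$ to the component of $\mathfrak{X}\setminus\{y_{m-1}\}$ containing $y_m$ (this follows from the resolvent identity applied one edge at a time). The edge $y_{m-1}\sim y_m$ carries the label $i_m$, hence weight $p_{i_m}$, and the component beyond $y_m$ is again a ``label-$i_m$-deleted tree'', so its resolvent at the root is $F_{i_m}$; the $m$-th factor is therefore $p_{i_m}F_{i_m}=\zeta(i_m)$, and one gets $(\gamma-\cA_p)^{-1}(o,g)=\zeta(i_1)\cdots\zeta(i_M)/(2w)$. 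By $\cG$-invariance of $\cA_p$ the same holds based at any $y$, which is the asserted formula for $(\gamma-\cA_p)^{-1}(y,yg)$; the interpretation $\zeta(j)=p_j(\gamma-\cA_j)^{-1}(o,o)$ is just the definition of $F_j$, and the fact that $w$ and the $\zeta(j)$ are algebraic in $\gamma$ (for fixed $p_j$) follows since they vary analytically off the spectrum and satisfy the polynomial system (a)+(b).

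It remains to establish the characterization (c), which is the main obstacle in the sense of being the only step that is not pure tree combinatorics. Since $\cA_j$ is self-adjoint, $\gamma\mapsto(\gamma-\cA_j)^{-1}(o,o)=\langle\delta_o,(\gamma-\cA_j)^{-1}\delta_o\rangle$ is a Herglotz function, so for $\Im\gamma>0$ its imaginary part equals $-\Im\gamma\,\|(\gamma-\cA_j)^{-1}\delta_o\|^2<0$; hence $\Im\zeta(j)=p_j\Im F_j<0$, and the case $\Im\gamma<0$ is the complex conjugate. For the uniqueness of the solution of (a)+(b) subject to this sign constraint I would invoke the contraction argument of \cite{Ao, FTS}: when $\Im\gamma>0$, the map $\Phi(\zeta)=\big(p_j\,(\gamma-\sum_{k\ne j}p_k\zeta_k)^{-1}\big)_{j}$ preserves the region where every coordinate has negative imaginary part, is non-expanding for the associated product hyperbolic metric, and its iterates from any base point converge to the resolvent values constructed above; any solution of (a)+(b) in that region is a fixed point of $\Phi$, hence equals that limit. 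This last step I would cite rather than reproduce.
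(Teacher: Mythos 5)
The paper does not actually prove this proposition: the formulas are simply quoted from \cite{Ao} and \cite{FTS}, and item (c) is attributed to \cite{Froese}. Your proposal supplies the computation behind those citations, and the computational part is correct. The Schur-complement step at $o$, the label-preserving self-similarity identifying the rooted subtree Green functions with the quantities $F_j=(\gamma-\cA_j)^{-1}(o,o)$, and the edge-by-edge product formula along the geodesic together give both displayed formulas and, after setting $\zeta(j)=p_jF_j$ and $2w=\gamma-\sum_k p_k\zeta(k)$, exactly the relations (a) and (b). The Herglotz computation $\Im\,(\gamma-\cA_j)^{-1}(o,o)=-\Im\gamma\,\lVert(\gamma-\cA_j)^{-1}\delta_o\rVert^2$ correctly yields the sign statement in (c) for the actual resolvent values. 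This is in substance the calculation of the references the paper points to, so in that sense you are on the paper's route, just written out rather than cited.

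The one point to fix concerns the uniqueness half of (c). You defer it to ``the contraction argument of \cite{Ao, FTS}'', but the paper explicitly warns that this item is \emph{not} proven in those references and is instead a consequence of the results in \cite{Froese}; your citation is therefore to the wrong sources. Your fixed-point reformulation is fine (any solution of (a)+(b) with all $\Im\zeta(j)<0$ is a fixed point of $\Phi$, and $\Phi$ does preserve that region when $\Im\gamma>0$), but the sketch ``non-expanding for the product hyperbolic metric, and its iterates converge'' is not by itself a proof: a non-expansive map can have many fixed points and its iterates need not converge, so the substantive input is the strict (eventual) contraction established in \cite{Froese}-type arguments. Since the paper itself outsources precisely this point, deferring it is legitimate, but the deferral should go to \cite{Froese} and should be flagged as the genuinely non-elementary step rather than folded into the formula references.
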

We kept the notation of \cite{FTS} -- which somehow hide the fact that $\zeta(k)=\zeta(j)$ whenever $p_k=p_j$.

The last item is not proven in \cite{FTS, Ao} but is a consequence of the results in \cite{Froese}. The probability $p_j$ will be fixed throughout the paper, however $\gamma$ will vary, so we will denote $\zeta_{\gamma}(j)$ when we want to emphasize the dependence of $\zeta$ on $\gamma$. As above, $\zeta_{\gamma}: \{1,\ldots, q+1\}\To \IC$ will also be seen as a function on $B$, by allowing the abuse of notation $\zeta_{\gamma}(e)=\zeta_{\gamma}({\mathrm c}(e))$ for $e\in B$.

\subsection{The spectral density and the spectral theorem for $\cA_p$ on $\mathfrak X$\label{s:density}}
The results of this section are borrowed from \cite{FTS, Ao} and generalize the Fourier analysis of \S \ref{s:Fouriertree}.

For $x\in \mathfrak X$ (seen as the Cayley graph of $\cG=\Z/ 2\Z \star \Z/ 2\Z\star \ldots \star \Z/ 2\Z$), $o$ the identity element of $\cG$ and $\gamma\not\in \R$, define 
$$g_\gamma(x)=(\gamma-\cA_p)^{-1}(x, o).$$
Lemma (3.1) in Chapter 2 of \cite{FTS} tells us that for every $x\in \mathfrak X$, for every $\lambda\in\R$, $g_\gamma(x)$ has a limit when 
$\gamma=\lambda + i\eps$ and $\eps>0$ tends to $0$; similarly when $\gamma=\lambda - i\eps$ and $\eps>0$ tends to $0$. These limits will be denoted by $g_{\lambda \pm i0}(x)$.

The spectral measure for the operator $\cA_p$ on $\ell^2( \mathfrak X)$ and the state
$\delta_o$ (Dirac mass at the identity of $\cG$) is absolutely continuous, the density being given by
$${\mathrm{m}}(\lambda)=-\frac{1}{\pi} \Im m \, g_{\lambda + i 0}(o)  = -\frac{1}{2i\pi} (g_{\lambda + i 0}(o)-  g_{\lambda - i 0}(o)) .$$

Moreover, one can introduce the ``Poisson kernel''~: for $x\in  \mathfrak X$ and $\omega\in \partial  \mathfrak X$ and $\gamma\not\in \R$,
$$P_\gamma(x, \omega)=\lim_{y\in \mathfrak{X}, y\To \omega}\frac{g_\gamma(y^{-1}x)}{g_\gamma(y^{-1})}.$$
The existence of the limit is trivial because of Proposition \ref{p:green}. Actually, this is also well defined for $\gamma=\lambda + i0$ as soon as the spectral density at $\lambda$, $g_{\lambda + i 0}(o)-  g_{\lambda - i 0}(o)$ does not vanish (in this case, $w_{\lambda+i0}\not= \infty$ and from the algebraic relations given in Proposition \ref{p:green} we see that the function $g_{\lambda + i 0}$ does not vanish). For $\lambda\in \IR$ and $\omega\in \partial  \mathfrak X$ we let $P_{\lambda, \omega}$ be the function $x\mapsto P_{\lambda +i0}(x, \omega)$ defined on $\mathfrak X$.

For $\lambda$ in the $\ell^2$-spectrum of $\cA_p$ and such that the spectral density $g_{\lambda + i 0}(o)-  g_{\lambda - i 0}(o)$ is non-zero, there exists a unique probability measure $\nu_{\lambda}$ on $\partial \mathfrak X$ satisfying the quasi-invariance property for the action of $\cG$,
$$\frac{d\nu_{\lambda}(g\omega)}{d\nu_{\lambda}(\omega)}=| P_\lambda(g^{-1}, \omega) |^2 .$$
If $a_{i_1}\ldots a_{i_M}$ is a reduced word, and if $[ a_{i_1}, \ldots, a_{i_M}]\subset \partial \mathfrak X$ denotes the set of infinite reduced words starting with the letters $a_{i_1}\ldots a_{i_M}$,
we have the simple expression
$$\nu_{\lambda}([ a_{i_1}, \ldots, a_{i_M}])=| \zeta_{\lambda}(i_1)|^2\ldots |\zeta_{\lambda}(i_{M-1})|^2 \frac{|\zeta_{\lambda}(i_M)|^2}{1+ |\zeta_{\lambda}(i_M)|^2}.$$
Here $\zeta_{\lambda}(j)$ is the limit at $\gamma=\lambda + i0$ of the function $\zeta_{\gamma}(j)$ defined in Proposition \ref{p:green}.
The fact that $\nu_{\lambda}$ is a well-defined probability measure, in particular that it satisfies the Kolmogorov compatibility condition, comes from the relation
\begin{equation}\label{e:kolmo}\sum_{j=1}^{q+1}\frac{|\zeta_{\lambda}(j)|^2}{1+|\zeta_{\lambda}(j)|^2}=1,\end{equation}
valid for real $\lambda$ as soon as $\Im m( w_{\lambda+i0})\not=0$, that is, as soon as the spectral density at $\lambda$ does not vanish (as a matter of fact, taking the imaginary part of
relations (a) and (b) in Proposition \ref{p:green} yields the relation $0= \Im m (w_{\lambda+i0})\left(\sum_{j=1}^{q+1}\frac{|\zeta_{\lambda}(j)|^2}{1+|\zeta_{\lambda}(j)|^2}-1\right)$).

Note also that
\begin{equation}\label{e:funny}\nu_{\lambda}([ a_{i_1}, \ldots, a_{i_M}])= |P_{\lambda, \omega}(a_{i_1} \ldots a_{i_M})|^{-2} \frac{1}{1+ |\zeta_{\lambda}(i_M)|^2}\end{equation}
for any $\omega \in [ a_{i_1}, \ldots, a_{i_M}]$, this will be useful later.

For any $f\in \ell^2(\mathfrak X)$, the spectral decomposition of $f$ for the operator $\cA_p$ reads~:
\begin{equation} f(x)= \int_{\partial  \mathfrak X\times \IR} \la P_{\lambda, \omega}, f\ra_{\ell^2(\mathfrak X)} P_\lambda(x, \omega) d\nu_{\lambda}(\omega) {\mathrm{m}}(\lambda) d\lambda
\end{equation}
and there is a Plancherel formula,
\begin{equation}
\norm{f}^2_2=\int_{\partial  \mathfrak X\times \IR} | \la P_{\lambda, \omega}, f\ra |^2 d\nu_{\lambda}(\omega) {\mathrm{m}}(\lambda) d\lambda.
\end{equation}

If $\widehat{K}$ is an operator on $\ell^2(\mathfrak X)$ with a compactly supported kernel $K(x, y)$, we also have
\begin{equation}\label{e:traceanis}\Tr_{\ell^2(\mathfrak X)} \widehat{K}=\sum_{x\in \mathfrak X} K(x, x)=\int_{\partial  \mathfrak X\times \IR} \la  P_{\lambda, \omega}, K  P_{\lambda, \omega}\ra  d\nu_{\lambda}(\omega) {\mathrm{m}}(\lambda) d\lambda.\end{equation}

This implies, for any $\gamma\not\in \IR$,
$$ \Tr_{\ell^2(\mathfrak X)} \widehat{K} ( (\gamma-\cA_p)^{-1}- (\overline{\gamma}-\cA_p)^{-1}) 
=\int_{\partial  \mathfrak X\times \IR} \frac{-2i\Im m \gamma}{|\gamma-\lambda|^{2}}\la  P_{\lambda, \omega}, K  P_{\lambda, \omega}\ra  d\nu_{\lambda}(\omega) {\mathrm{m}}(\lambda) d\lambda.$$
Letting $\gamma=\lambda+i\eps$ with $\eps>0$ and $\eps\To 0$, we obtain
\begin{equation}\label{e:partialtrace} \sum K(x, y)\Im m g_{\lambda + i0}(x^{-1}y)=
  \Im m \, g_{\lambda + i 0}(o)
 \int_{\partial  \mathfrak X }  \la  P_{\lambda, \omega}, K  P_{\lambda, \omega}\ra  d\nu_{\lambda}(\omega)
 \end{equation} 
(this of course could also be retrieved from Proposition (2.2) in \cite{FTS} Chapter 3).

\subsection{Non-backtracking quantum variance, definition and estimates}
We are now ready to start with the study of quantum ergodicity. We now assume that $(G, {\mathrm c})$ is a finite regular labelled graph (belonging to a family satisfying (EXP) and (BST)). We can write $(G, {\mathrm c})$ (the labelled graph) as a quotient $\Gamma\backslash (\mathfrak X, {\mathrm c})$ of the labelled tree $(\mathfrak X, {\mathrm c})$ by a group of automorphisms $\Gamma$ preserving the labelling.

Without knowing yet what the final result should look like, we proceed by analogy with what was done in \S \ref{s:nbt}. We start by defining a ``non-backtracking quantum variance'', and use the eigenfunction property to notice (trivially) that this quantum variance possesses a certain invariance property. From that, we hope to show (non-trivially) that the quantum variance is small for all operators. For the same reasons as in \S \ref{s:nbtqv}, we work in a small spectral interval around a value $E_0$ where the spectral density $-\frac{1}{2i\pi} (g_{E_0 + i 0}(o)-  g_{E_0 - i 0}(o))$ is positive. Let $I=(E_0-\delta, E_0+\delta)$, where $\delta$ is fixed (but can be taken arbitrarily small).

We start with an orthonormal basis $(\psi_j)$ of eigenfunctions of $\cA_p$, with associated eigenvalues $\lambda_j$. For each $j$ we build a function
$f_j$ on $B$ thanks to formula \eqref{e:link}, taking $\zeta=\zeta_{\lambda_j}$
(the value at $\gamma=\lambda_j +i 0$ of the function $\zeta$ of Proposition \ref{p:green}). We define a function $g_j$ on $B$ by letting $g_j (e)\defeq  \frac{p(e)}{\zeta_{\lambda_j}(e)}f_j (e)$ for $e\in B$. Recall that $p$ and $\zeta_{\lambda_j}$, originally defined as functions on $\{1, \ldots, q+1\}$, may also be seen as functions on $B$ depending only on the labels.
Recall the notation $\iota :\ell^2(B)\To \ell^2(B)$ for the edge-reversal involution.

We now define the non-backtracking quantum variance
\begin{equation*}Var_{nb}(K)
=\frac1{N}\sum_{j=1}^N | \la  { \iota  g_j}, \widehat{K}_B   g_j \ra_{\ell^2(B)}|^2,
\end{equation*}
and the local quantum variance
on $I$ by
\begin{equation*}Var^I_{nb}(K)
=\frac1{N(I)}\sum_{\lambda_j\in I} | \la  { \iota  g_j}, \widehat{K}_B   g_j \ra_{\ell^2(B)}|^2
\end{equation*}
for $K\in \cH_{\leq m}$. Here $N(I)=|\{ j, \lambda_j\in I\}$. If a sequence of labelled graphs $(G, {\mathrm c})$ satisfies (BST), we have
\begin{equation}\label{e:consBST}N(I)\sim |V| \int_I {\mathrm{m}}(\lambda) d\lambda.\end{equation}

\begin{remark}\label{r:extension} Note that we may want to extend this definition to families of operators, $\lambda\in\IR\mapsto K_\lambda\in \cH_{\leq m}$
depending on $\lambda$ in a $C^1$ (or even $C^0$) fashion. The quantum variance is then defined as
\begin{equation*}Var^I_{nb}(K)
=\frac1{N(I)}\sum_{\lambda_j\in I} | \la  { \iota  g_j}, \widehat{K}_{\lambda_j,B}   g_j \ra_{\ell^2(B)}|^2
\end{equation*}
The proof below extends to this case without any additional difficulty.
\end{remark}

Let us study the invariance properties of $Var^I_{nb}(K)$, coming from the eigenfunction equation. Recall that $f_j$ has been designed to satisfy
$$\cB_p f_j=\frac{p}{\zeta_{\lambda_j}} f_j$$
and so
$$\cB_p \frac{\zeta_{\lambda_j}}pg_j= g_j.$$
Since
$$\cB_p^*=p(\iota \cB_p \iota)p^{-1},$$
we also have
$$\frac{1}p \cB_p^* \zeta_{\lambda_j} {\iota g_j}=  {\iota g_j}$$
(note that $\iota \zeta_{\lambda_j}=\zeta_{\lambda_j}$).

This implies that, for any given $j$,
$$ \la  { \iota  g_j}, \widehat{K}_B   g_j \ra_{\ell^2(B)}=\left\la  { \iota  g_j}, \widehat{K}_B  \left( \cB_p \frac{\zeta_{\lambda_j}}p\right) g_j \right\ra_{\ell^2(B)}= \left\la  { \iota  g_j}, \left(\bar\zeta_{\lambda_j}\cB_p \frac{1}p\right)\widehat{K}_B   g_j \right\ra_{\ell^2(B)},$$
or more generally, for any integers $k, l$,
$$ \la  { \iota  g_j}, \widehat{K}_B   g_j \ra_{\ell^2(B)}=\left\la  { \iota  g_j}, \left(\bar\zeta_{\lambda_j}\cB_p \frac{1}p\right)^k\widehat{K}_B  \left( \cB_p \frac{\zeta_{\lambda_j}}p\right)^l g_j \right\ra_{\ell^2(B)}.$$
Since $\zeta_\lambda$ is an analytic function of $\lambda$, we have
$|\zeta_{\lambda_j}-\zeta_{E_0}|\leq C\delta$ for all $\lambda_j\in I$. It is clear also that $\norm{g_j}_{\ell^2(B)}$ is bounded independently of the graph,
so the previous equality implies
$$ \la  { \iota  g_j}, \widehat{K}_B   g_j \ra_{\ell^2(B)}=\left\la  { \iota  g_j}, \left(\bar\zeta_{E_0}\cB_p \frac{1}p\right)^k\widehat{K}_B  \left( \cB_p \frac{\zeta_{E_0}}p\right)^l g_j \right\ra_{\ell^2(B)}+
(k+l)O(\delta)\norm{\widehat{K}_B}_{\ell^2(B)\To \ell^2(B)}.$$

Similarly to what was done in \eqref{e:lshift} and \eqref{e:rshift}, we introduce weighted left- and right-shifts,
$$\sigma_{E_0 }K(x, y)= \bar\zeta_{E_0}(x, x') K(x', y)$$
$$\rho_{E_0 }K(x, y)= K(x, y') \zeta_{E_0}(y', y)$$
where $(x;y)=(x, x', \ldots, y', y)$. Note that both $\sigma_{E_0 }$ and $\rho_{E_0 }$ map $\cH_k$ to $\cH_{k+1}$, and that they commute.
They are defined so that
$$ \widehat{K}_B \circ \left( \cB_p \frac{\zeta_{E_0}}p\right) =  \widehat{\rho_{E_0 }K}_B  $$
and
$$ \left(\bar\zeta_{E_0}\cB_p \frac{1}p\right)\circ \widehat{K}_B=  \widehat{\sigma_{E_0 }K}_B.$$

 For any fixed integer $T$, we find that
$$Var^I_{nb}(K)=Var^I_{nb}\left(\sigma_{E_0 }^T K \right) +TO(\delta)= Var^I_{nb}\left( \rho_{E_0 }^TK\right) +TO(\delta)c(m)^2\norm{K}^2_{sup}$$
and more generally
$$Var^I_{nb}(K)=Var^I_{nb}\left( \frac{1}T\sum_{k+l=T-1}\sigma_{E_0 }^k \rho_{E_0 }^l K  \right) +TO(\delta)c(m)^2\norm{K}^2_{sup}.$$

 Let us from now on assume that $K\in \cH_m$ ($m\geq 1$). Then  $\sigma_{E_0 }^k \rho_{E_0 }^l K $
is in $\cH_{m+k+l}=\cH_{m+T-1}$, and we have the expression
\begin{multline*}
\sigma_{E_0 }^k \rho_{E_0 }^l K (x_0; x_{m+T-1})
\\=\bar \zeta_{E_0}(x_0, x_1)\ldots\bar\zeta_{E_0}(x_{k-1}, x_k) K(x_k; x_{k+m})\zeta_{E_0}(x_{k+m}, x_{k+m+1})\ldots \zeta_{E_0}(x_{m+T-2},x_{m+T-1})
\end{multline*}
if $(x_0; x_{m+T-1})=(x_0, x_1, \ldots, x_{m+T-1})$ is the non-backtracking path in the tree $\mathfrak X$ from $x_0$ to $x_{m+T-1}$.
To make the calculation a little nicer, we consider
\begin{equation}\label{e:tildek}\tilde K(x_0, x_m)\defeq \zeta_{E_0}(x_0, x_1)\ldots \zeta_{E_0}(x_{m-1}, x_{m})K(x_0, x_m),\end{equation}
so that we have
$$
\sigma_{E_0 }^k \rho_{E_0 }^l \tilde K (x_0; x_{m+T-1})
=\zeta_{E_0}(x_0, x_1)\ldots \zeta_{E_0}(x_{m+T-2},x_{m+T-1})K(x_k, x_{k+m}) \bar u(x_0, x_1) \ldots \bar u(x_{k-1}, x_k)
$$
where $u$ is the function of modulus one defined by
$$u(e)=\frac{\zeta_{E_0}(e)}{\bar\zeta_{E_0}(e)}=\frac{\zeta_{E_0}(x, y)}{\bar\zeta_{E_0}(x, y)}$$
for every edge $e=(x;y)\in B$.

\begin{prop}\label{p:HS2}
 Proposition \ref{p:crucialthing} still holds.
  \end{prop}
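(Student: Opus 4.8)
The plan is to run the proof of Proposition \ref{p:crucialthing} essentially line by line, with three substitutions: $\cA$ is replaced by $\cA_p$; the Fourier analysis of \S\ref{s:Fouriertree} is replaced by the spectral theory of Aomoto and Fig\`a-Talamanca--Steger recalled in \S\ref{s:density} (Poisson kernel $P_{\lambda,\omega}$, harmonic measure $\nu_\lambda$, Plancherel formula and the trace identity \eqref{e:traceanis} together with its Hilbert--Schmidt analogue); and the operator $\widetilde U$ of \S\ref{s:nbt} is replaced by the weighted one adapted to the functions $g_j$. Concretely, letting $Z_j$ denote a continuous branch of $\lambda\mapsto\zeta_{\lambda+i0}(j)^{-1}$ on a neighbourhood of $I$ inside the band where ${\mathrm m}>0$, I would set
\[
\widetilde U\psi(e)=p(e)\bigl[(Z_{{\mathrm c}(e)}(\cA_p)\psi)(t(e))-\psi(o(e))\bigr],\qquad e\in B(\mathfrak X),
\]
and likewise $U:\ell^2(V)\To\ell^2(B)$, so that $g_j=U\psi_j$ for $\lambda_j$ in the a.c. band and $U$ is bounded independently of $G$ (only the values of $Z_j$ near $I$ matter, because of the cutoff below). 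As in step (b) of \S\ref{s:nbt}, Cauchy-Schwarz and the uniform bound on $\norm{\iota g_j}_{\ell^2(B)}$ reduce matters to bounding $\frac{C}{N(I)}\sum_j\la\chi^2(\cA_p)\psi_j,U^*\widehat{K}_B^*\widehat{K}_B U\psi_j\ra_{\ell^2(V)}$ with $\chi\equiv1$ on $I$; expanding $U^*\widehat{K}_B^*\widehat{K}_B U$ into finitely many terms $\chi'(\cA_p)\widehat{K_1}_G^*\widehat{K_2}_G$ with $\chi'$ continuous and $K_1,K_2\in\cH_{\leq 2m}$ carrying the bounded label-dependent weights $p(e)Z_{{\mathrm c}(e)}$, and applying the polynomial/trace-formula approximation — whose proof rests only on absolute continuity of the spectral measure of $\cA_p$ on $\ell^2(\mathfrak X)$, on \eqref{e:traceanis}, on uniform operator bounds, and on Benjamini-Schramm convergence, hence carries over verbatim — I would arrive, exactly as in \S\ref{s:nbt}, at
\[
Var^I_{nb}(K)\leq\frac{C}{N(I)}\int\chi^2(\lambda)\norm{\bbbone_\cD\widehat{K}_{B(\mathfrak X)}\widetilde U P_{\lambda,\omega}}^2_{\ell^2(B(\mathfrak X))}d\nu_\lambda(\omega)\,{\mathrm m}(\lambda)d\lambda+\norm{K}_{sup}^2\,o_{I,\chi,m}(1)_{G\To\mathfrak X}.
\]

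The crux, as before, is the explicit computation of $\widetilde U P_{\lambda,\omega}$. Recall that $e$ \emph{goes towards} $\omega$ when $t(e)$ is closer to $\omega$ than $o(e)$. From the product formula for $(\gamma-\cA_p)^{-1}$ in Proposition \ref{p:green} (together with ${\mathrm c}(e)={\mathrm c}(\hat e)$) one reads off that $P_{\lambda,\omega}(t(e))=\zeta_\lambda(e)P_{\lambda,\omega}(o(e))$ when $e$ goes away from $\omega$ and $P_{\lambda,\omega}(t(e))=\zeta_\lambda(e)^{-1}P_{\lambda,\omega}(o(e))$ when $e$ goes towards $\omega$; hence $\widetilde U P_{\lambda,\omega}(e)=0$ on the former edges, while on the latter
\[
\widetilde U P_{\lambda,\omega}(e)=p(e)\bigl(\zeta_\lambda(e)^{-1}-\zeta_\lambda(e)\bigr)P_{\lambda,\omega}(t(e))=2w_\lambda\,P_{\lambda,\omega}(t(e))
\]
by relation (b) of Proposition \ref{p:green} — the key point being that $2w_\lambda$ is independent of $e$, just as $1-q^{-1-2is}$ was in \S\ref{s:nbt}. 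Then $\widehat{K}_{B(\mathfrak X)}$ only sees the unique length-$(m-1)$ non-backtracking prolongation $e'_{e,\omega}$ of $e$ still going towards $\omega$, so $|\widehat{K}_{B(\mathfrak X)}\widetilde U P_{\lambda,\omega}(e)|^2=|2w_\lambda|^2|K(o(e),t(e'_{e,\omega}))|^2|P_{\lambda,\omega}(t(e'_{e,\omega}))|^2$ when $e$ goes towards $\omega$ and $0$ otherwise. Integrating in $\omega$, using that the integrand does not depend on the origin chosen for $\nu_\lambda$ and $P_{\lambda,\omega}$ (here the quasi-invariance of $\nu_\lambda$ from \S\ref{s:density} plays the role \eqref{e:harmonic} played in \S\ref{s:nbt}) so as to take $o=o(e)$, and then invoking \eqref{e:funny} — which gives precisely $\int_{\{\omega\,:\,e'_{e,\omega}=e'\}}|P_{\lambda,\omega}(t(e'))|^2d\nu_\lambda(\omega)=(1+|\zeta_\lambda(i_m)|^2)^{-1}$ with $i_m$ the label of $e'$ — I obtain $\int|\widehat{K}_{B(\mathfrak X)}\widetilde U P_{\lambda,\omega}(e)|^2d\nu_\lambda(\omega)\leq|2w_\lambda|^2\sum_{e'}(1+|\zeta_\lambda(i_m)|^2)^{-1}|K(o(e),t(e'))|^2$. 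Summing over $e$ with $o(e)\in\cD$ yields $\int\norm{\bbbone_\cD\widehat{K}_{B(\mathfrak X)}\widetilde U P_{\lambda,\omega}}^2d\nu_\lambda(\omega)\leq C_I|V|\norm{K}^2_\cH$, and feeding this back, with $N(I)\sim|V|\int_I{\mathrm m}(\lambda)d\lambda$ and $\int\chi^2{\mathrm m}/\int_I{\mathrm m}$ bounded, gives the statement.

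The step I expect to be the main obstacle — really the only place the anisotropic case differs from the isotropic one — is controlling the constant $C_I$ in the domination $\int\norm{\bbbone_\cD\widehat{K}_{B(\mathfrak X)}\widetilde U P_{\lambda,\omega}}^2d\nu_\lambda\leq C_I|V|\norm{K}^2_\cH$. It is governed by
\[
\sup_{\lambda\in I,\,j}\frac{|2w_\lambda|^2}{1+|\zeta_\lambda(j)|^2}=\sup_{\lambda\in I,\,j}\frac{p_j^2|1-\zeta_\lambda(j)^2|^2}{|\zeta_\lambda(j)|^2(1+|\zeta_\lambda(j)|^2)}\leq 1+\sup_{\lambda\in I,\,j}|\zeta_\lambda(j)|^{-2},
\]
which in the isotropic case is the universal constant $1+q$ (there $|\zeta_\lambda(j)|\equiv q^{-1/2}$), but in general is only finite because $\zeta_\lambda(j)$ is continuous and nonvanishing — by relation (b) of Proposition \ref{p:green} — on the compact interval $I$, which is chosen strictly inside the band of absolute continuity. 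Thus $C_I$ may depend on $I$; this is harmless, since in the sequel the constant is anyway allowed to depend on $I$, exactly as the factor $\tilde C(m,J_\varepsilon,\beta)$ does in \S\ref{s:nbt}. All remaining ingredients — the polynomial approximation of part (a), the boundedness of $U$, $\widetilde U$ and of the weighted $K_i$'s, and the reduction to the $\nu_\lambda$-integral — are transcriptions of \S\ref{s:nbt}, with the homogeneous quantities $q^{\pm(1/2+is)}$ systematically replaced by $\zeta_\lambda({\mathrm c}(e))^{\pm1}$.
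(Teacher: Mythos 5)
Your proposal is correct and follows essentially the same route as the paper's own proof: the same weighted operator $\widetilde U\psi(e)=p(e)\zeta_{\cA_p+i0}(e)^{-1}\psi(t(e))-p(e)\psi(o(e))$, the same computation showing $\widetilde U P_{\lambda,\omega}$ vanishes on edges going away from $\omega$ and equals $2w_\lambda P_{\lambda,\omega}(t(e))$ (via relation (b) of Proposition \ref{p:green}) on edges going towards $\omega$, the same use of \eqref{e:funny} to evaluate the $\nu_\lambda$-integral after reducing to $o(e)=o$, and the same transplantation of the polynomial-approximation step with \eqref{e:traceanis} in place of \eqref{e:trace}. Your slightly more careful bookkeeping of the constant (allowing it to depend on $I$ through $\sup_{\lambda\in I,j}|2w_\lambda|^2/(1+|\zeta_\lambda(j)|^2)$, uniformly for $E_0$ in a compact subset of the a.c.\ band) is consistent with how the estimate is actually used afterwards, where the constants depend on $E_0$ and $\varepsilon$ anyway.
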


Let us postpone the proof of this proposition to the end of the section, and see what it implies.
Recall that we see $\cH_m$ as the set of functions on $B_m$, and introduce the (stochastic) operator $\cS_{E_0}: \cH_m\To \cH_m$
$$\cS_{E_0} K(\omega)=\sum_{\omega'\rightsquigarrow \omega}\frac{1+|\zeta(\omega_0, \omega_1)|^2}{1+|\zeta(\omega'_0, \omega'_1)|^2}|\zeta(\omega'_0, \omega'_1)|^2 K(\omega')$$
with $\omega=(\omega_0, \ldots, \omega_m)$ and $\omega'=(\omega'_0, \ldots, \omega'_m)$ (and of course $\omega'_1=\omega_0$). Introduce also
$$\cS^u_{E_0} K(\omega)=\sum_{\omega'\rightsquigarrow \omega}\frac{1+|\zeta(\omega_0, \omega_1)|^2}{1+|\zeta(\omega'_0, \omega'_1)|^2}|\zeta(\omega'_0, \omega'_1)|^2 u(\omega'_0, \omega'_1) K(\omega').$$
Applying Proposition \ref{p:HS2}, we obtain
\begin{equation}\label{e:majhs1}Var^I_{nb}(\tilde K)\leq C \left\Vert \frac{1}T\sum_{k+l=T-1}\sigma_{E_0 }^k \rho_{E_0 }^l \tilde K\right\Vert^2_{\cH} + o_{T, E_0, K}(1)_{G\To \mathfrak{X}}\norm{K}_{sup}^2
\end{equation}
with $C$ independent of $T$, $E_0$ and $\delta$.

\begin{remark}At this point, it is convenient to modify the Hilbert space structure on $\cH$. We introduce $\cH_{E_0}=\oplus_{m\geq 0} \cH_{m, E_0}$, where the decomposition is orthogonal,
the vector space $\cH_{m, E_0}$ is the same as $\cH_m$, but for $m\geq 1$ the norm is modified by
$$\norm{K}^2_{\cH_{m, E_0}} =\frac{1}{|\cD|}\sum_{x\in \cD, y\in\mathfrak{X}} \frac{1}{1+ |\zeta_{E_0}(x, x')|^2} |K(x, y)|^2 \frac{1}{1+ |\zeta_{E_0}(y', y)|^2}$$
if $(x;y)=(x, x', \ldots, y', y)$ is the non-backtracking path between $x$ and $y$. We also introduce
 $\widetilde \cH_{E_0}=\oplus_{m\geq 0} \widetilde\cH_{m, E_0}$, where the decomposition is orthogonal,
the vector space $\widetilde\cH_{m, E_0}$ is the same as $\cH_m$, but for $m\geq 1$ the norm is modified by
$$\norm{K}^2_{\widetilde\cH_{m, E_0}} =\frac{1}{|\cD|}\sum_{x\in \cD, y\in\mathfrak{X}} \frac{1}{1+ |\zeta_{E_0}(x, x')|^2} |\tilde K(x, y)|^2 \frac{1}{1+ |\zeta_{E_0}(y', y)|^2},$$
$\tilde K$ and $K$ being related by \eqref{e:tildek}.
Obviously the norms on $\cH_{E_0}$ is equivalent to the norm on $\cH$; however it is not equivalent to the norm on $\widetilde\cH_{E_0}$.
\end{remark}

Possibly modifying the factor $C$, we can write
\begin{equation}\label{e:majhs2}Var^I_{nb}(\tilde K)\leq C \left\Vert \frac{1}T\sum_{k+l=T-1}\sigma_{E_0 }^k \rho_{E_0 }^l \tilde K\right\Vert^2_{\cH_{E_0}} + o_{T, E_0, K}(1)_{G\To \mathfrak{X}}\norm{K}_{sup}^2.
\end{equation}
instead of \eqref{e:majhs1}. Like in Lemma \ref{l:lem}, we see by explicit development of the scalar product that
\begin{equation*}\la \sigma_{E_0 }^{k'} \rho_{E_0 }^{l'} \tilde K,\sigma_{E_0 }^k \rho_{E_0 }^l \tilde K\ra_{\cH_{E_0}} =  \la( \cS^u_{E_0})^{k-k'} K, K\ra_{\widetilde\cH_{E_0}}
\end{equation*}
for $k'\leq k$, $k+l=k'+l'=T-1$. Here we start seeing the reason why introducing the new Hilbert space $\widetilde\cH_{E_0}$~: the property
\eqref{e:kolmo} implies that
$$ \la\cS_{E_0}K, \bbbone_m\ra_{\widetilde\cH_{m, E_0}}= \la K, \bbbone_m\ra_{\widetilde\cH_{m, E_0}},$$
in other words the linear form on $\widetilde\cH_{m, E_0}$
$$\mu : K\mapsto \la K, \bbbone_m\ra_{\widetilde\cH_{E_0}}$$
is invariant under the action of $\cS_{E_0}$ (if $\bbbone_m$ is the constant function as in \eqref{e:bbone}). Since $\cS_{E_0} \bbbone_m=  \bbbone_m$, we also have $\cS^*_{E_0} \bbbone_m=  \bbbone_m$ if the adjoint
is taken in $\widetilde\cH_{m, E_0}$. The fact that $\cS^*_{E_0}$ is stochastic will be used in a crucial way in the proof of Lemma \ref{l:decay}, which states decay of correlations for the iterates of the operator $\cS^u_{E_0}$~:
\begin{lemma} \label{l:decay} Unless $u$ is a constant function, for each $m$ there exists $\delta>0$ such that
$$\norm{(\cS^u_{E_0})^{m+1}}_{\widetilde\cH_{m, E_0}\To \widetilde\cH_{m, E_0}} \leq 1-\delta.$$
\end{lemma}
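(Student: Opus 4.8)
The plan is to first establish the soft bound $\norm{\cS^u_{E_0}}_{\widetilde\cH_{m, E_0}\To\widetilde\cH_{m, E_0}}\leq 1$ and then upgrade it to a strict bound for the $(m+1)$-st power by a rigidity argument. For the soft bound, observe that the transition weights defining $\cS^u_{E_0}$ are the same nonnegative weights as those defining $\cS_{E_0}$, multiplied by the unimodular factor $u$; hence $|\cS^u_{E_0}K|\leq \cS_{E_0}|K|$ pointwise on $B_m$. By the identity $\cS^*_{E_0}\bbbone_m=\bbbone_m$, the measure defining $\widetilde\cH_{m, E_0}$ is stationary for the Markov operator $\cS_{E_0}$, so $\cS_{E_0}$ is an $L^2$-contraction; since it is positivity-preserving, $\norm{\cS^u_{E_0}K}\leq\norm{\cS_{E_0}|K|}\leq\norm{|K|}=\norm{K}$.

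For the strict bound I would argue by contradiction. Suppose $\norm{(\cS^u_{E_0})^{m+1}}=1$; by finite-dimensionality of $\cH_m$ there is a unit vector $K_0$ with $\norm{(\cS^u_{E_0})^{m+1}K_0}=1$, and since $\cS^u_{E_0}$ is a contraction all the intermediate norms $\norm{K_j}$, with $K_j:=(\cS^u_{E_0})^jK_0$ and $0\leq j\leq m+1$, are equal to $1$. Chasing the equality cases in the two inequalities above yields, at every step $j$: (a) $|K_j|=\cS_{E_0}|K_{j-1}|$ pointwise, and (b) for each $\omega\in B_m$, the nonzero terms $u(\omega'_0,\omega'_1)\,K_{j-1}(\omega')$ in the sum defining $\cS^u_{E_0}K_{j-1}(\omega)$ are all positively proportional. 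From (a), the nonnegative functions $|K_j|$ evolve under $\cS_{E_0}$ with constant norm; since $\cS_{E_0}^{m-1}$ maps $\cH_m$ into $j_{1,m}\cH_1$ (functions of the first edge) and, on $\cH_1\cong\ell^2(B)$, $\cS_{E_0}$ has operator norm strictly less than $1$ on the orthogonal complement of the trivial eigenfunctions (constant, and in the bipartite case alternating) — with a gap controlled by $\beta$, exactly as for Proposition \ref{p:Ck} — the remaining two iterations force $|K_0|$, hence every $|K_j|$, to be constant.

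Next, plugging constant moduli into (b) and using $\cS_{E_0}\bbbone_m=\bbbone_m$, I would deduce (after the reduction to $\ell^2(B)$) that $K_j(e)=c\,\lambda_j(o(e))$ for some unimodular function $\lambda_j$ on vertices, and dually that $K_j(e)=c\,\overline{u(e)}\,\lambda_{j+1}(t(e))$, whence $u(e)=\lambda_{j+1}(t(e))\overline{\lambda_j(o(e))}$ for every $e\in B$. Exploiting that the label, hence $u$, is orientation-independent ($u(e)=u(\hat e)$) together with connectedness of $G$, this forces $v\mapsto\lambda_{j+1}(v)\lambda_j(v)$ to be globally constant, and therefore gives a coboundary-type representation $u(e)=\mu\,\overline{\lambda_j(o(e))\lambda_j(t(e))}$ with $|\mu|=1$. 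Since $u$ depends only on the label, this ties the arguments of $\zeta_{E_0}(1),\dots,\zeta_{E_0}(q+1)$ to a single vertex function; combining with the algebraic relations (a) and (b) of Proposition \ref{p:green}, which already couple the $\zeta_{E_0}(j)$ through the common quantity $w_{E_0+i0}$, I expect to conclude that all the $\zeta_{E_0}(j)$ have the same argument, i.e. that $u$ is constant — the desired contradiction. Strictness of the inequality then upgrades, by compactness, to a quantitative $\delta>0$.

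\textbf{Main obstacle.} The delicate point is precisely this last step: passing from the coboundary representation of $u$ to ``$u$ is constant'' cannot rely on soft properties of unimodular cocycles alone (a non-constant $\pm1$-valued $u$ would survive any purely cohomological argument), and must use the specific structure of the Green-function weights $\zeta_{E_0}(j)$ via Proposition \ref{p:green}. A second issue needing care is that $\delta$ should be uniform over the family of graphs, depending only on $m$, $q$, the $p_j$, $E_0$ and the spectral gap $\beta$; as in Proposition \ref{p:Ck}, I expect this to follow by reducing the spectral analysis of $\cS^u_{E_0}$ on $\cH_1$ to that of a twisted non-backtracking operator on $\ell^2(B)$ and controlling its spectrum through the tree Fourier analysis of \S\ref{s:density} — an approach which might in fact also give the whole lemma more directly than the rigidity argument sketched above.
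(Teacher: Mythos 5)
Your soft bound $\norm{\cS^u_{E_0}}_{\widetilde\cH_{m,E_0}\To\widetilde\cH_{m,E_0}}\leq 1$ is fine and matches the paper (it rests on $\cS_{E_0}$ and $\cS^*_{E_0}$ both being stochastic on $\widetilde\cH_{m,E_0}$). But the core of the lemma is exactly the step you leave open, and your route to it is not the one that works. The paper does not go through a global rigidity/coboundary analysis at all, and it needs no property of the $\zeta_{E_0}(j)$ beyond unimodularity of $u$: writing (for $m=1$) $\cS^u_{E_0}=\cS_{E_0}M_u$, it reduces the claim to bounding the norm of the symmetric operator $\cS^*_{E_0}M_{u^{-1}}\cS^*_{E_0}\cS_{E_0}M_u\cS_{E_0}$, and proves an elementary matrix lemma: for stochastic $P,Q$ and diagonal unitary $D,D'$, every row sum of $|R|$ with $R=PDPQD'Q$ is $\leq 1$, with equality only if the relevant phases are locally constant. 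Applied here, equality would force $u(e_1)/u(e_2)$ to be the same for all edges $e_1,e_2$ sharing an origin; since every label occurs at every vertex, this already makes $u$ constant on $\{1,\dots,q+1\}$ — so a non-constant $u$ (including a non-constant $\pm1$-valued one, the case you worry about) is excluded by a purely local computation involving only paths of length $2$ in the tree. Row sums $\leq 1-\delta$ then give spectral radius, hence (by symmetry) norm, $\leq 1-\delta$. Your proposed mechanism — pushing a coboundary representation of $u$ against the algebraic relations of Proposition \ref{p:green} — is precisely the part you admit you cannot complete, and it is not needed.

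The second gap is quantitative and it matters for the application: in the proof of \eqref{e:nbtanis} the constant must be uniform over the family of graphs in the limit $G\To\mathfrak X$, so $\delta$ must depend only on $q$, the $p_j$, $E_0$ (and $m$), not on $G$. Your argument by contradiction plus finite-dimensional compactness can at best give a strict inequality, hence a $\delta$ for each fixed graph, and moreover your step forcing $|K_j|$ to be constant invokes a spectral gap for $\cS_{E_0}$ on mean-zero functions, i.e. the (EXP) hypothesis (essentially Lemma \ref{l:cheeger}), making $\delta$ depend on $\beta$ and on the graph. In the paper this lemma is deliberately independent of (EXP) and of the global shape of $G$ — the expander condition is only used in the separate degenerate case $u\equiv-1$ ($E_0=0$), handled via Lemma \ref{l:cheeger}. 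So as written your proof both misplaces where (EXP) enters and does not deliver the graph-independent $\delta$ that the rest of Section \ref{s:anis} requires; the local row-sum argument above repairs both points.
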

Actually $\delta$ can be probably taken independent of $m$, but we won't use this fact; on the other hand, $\delta$ does depend on the $p_j$ and on $E_0$.

Again we postpone the proof to the end of the section to see first what the lemma implies. We note that $u$ being a constant is completely independent of the graph, it depends only on the $p_j$ and on $E_0$. If we are in the conditions of application of Lemma \ref{l:decay}, then
$$|\la( \cS^u_{E_0})^{k-k'} K, K\ra_{\widetilde\cH_{E_0}}|\leq (1-\delta)^{(k-k')/(m+1)} \norm{K}_{\widetilde\cH_{E_0}}.$$
In this case we conclude as in \eqref{e:estvarnb} that
 \begin{equation*}Var^I_{nb}(\tilde K )\leq \frac{C(m, E_0)}{ T+1}\norm{K}_{\cH}^2 +\norm{K}_{sup}^2 \, (o_{T, m, \delta}(1)_{G\To\mathfrak X } + T  c(m)^2\delta) 
\end{equation*}
for all $K$, and thus also, after modifying the factors $C(m, E_0)$ and $c(m)$,
\begin{equation}\label{e:nbtanis}Var^I_{nb}( K )\leq \frac{C(m, E_0)}{ T+1}\norm{K}_{\cH}^2 +\norm{K}_{sup}^2 \, (o_{T, m, \delta}(1)_{G\To\mathfrak X } + T  c(m)^2\delta) .\end{equation}

Suppose now that $u$ is constant; by Lemma \ref{l:fnu} we must have $u=-1$ and $E_0=0$ (unless we are in the isotropic case, which has already been dealt with). In Lemma \ref{l:cheeger} we will use the (EXP) condition and show that there exists $\rho(\beta, m) <1$ such that
$$\norm{ \cS_{E_0}^{k} K}_{\widetilde\cH_{E_0}} \leq \rho(\beta, m)^k \norm{K}_{\widetilde\cH_{E_0}}$$
if $K\in \cH_m$ is orthogonal to $\bbbone_m$ for the scalar product of $\widetilde\cH_{E_0}$.
On the other hand, if $K=\bbbone_m$, we have for any $t_1< t_2$
$$\norm{\sum_{k=t_1}^{t_2} (\cS^u_{E_0})^{k} K}= \norm{\sum_{k=t_1}^{t_2} (-1)^k\cS_{E_0}^{k} \bbbone_m}
= \norm{\sum_{k=t_1}^{t_2} (-1)^k  \bbbone_m}\leq \norm{ \bbbone_m},$$
so arguing as in the proof of \eqref{e:estvarnb} we find that
 \begin{equation*}Var^I_{nb}(\tilde K )\leq \frac{\tilde C(m, \beta)}{ T+1}\norm{K}_{\cH}^2 +\norm{K}_{sup}^2 \, (o_{T, m, \delta}(1)_{G\To\mathfrak X } + T  c(m)^2\delta) 
\end{equation*}
for all $K$, and thus also, after modifying the factors $\tilde C(m, \beta)$ and $c(m)$,
\begin{equation}\label{e:nbtanis2}Var^I_{nb}( K )\leq \frac{\tilde C(m, \beta)}{ T+1}\norm{K}_{\cH}^2 +\norm{K}_{sup}^2 \, (o_{T, m, \delta}(1)_{G\To\mathfrak X } + T  c(m)^2\delta) .\end{equation}
 
From this, one can argue as in \eqref{e:estvarnbglob} to obtain an estimate over the full quantum variance $Var_{nb}(K)$~:
\begin{equation}
\label{e:estvarnbanisglob}
Var_{nb}(K )\leq \frac{\tilde C(m, {\varepsilon}, \beta)}{ (n+1)}\norm{K}_{\cH}^2 +\norm{K}_{sup}^2 \, (o_{n, m, \delta, \varepsilon}(1)_{G\To\mathfrak X } + n c(m)^2O(\delta)+ c(m)^2O(\varepsilon)) .
\end{equation}
Thus we know that the ``non-backtracking'' quantum variance vanishes in the limit $G\To \mathfrak X$. We do not know yet
what it implies for the ``simple'' quantum variance -- actually, this has not even been defined yet. This will be the object of the next and final section. Before that, we still have to prove Proposition \ref{p:HS2}, Lemma \ref{l:decay} and to state and prove Lemmas \ref{l:cheeger} and \ref{l:fnu}.

\begin{proof} (of Lemma \ref{l:decay})
To simplify the notation let us consider the case $m=1$. Note that on $\widetilde\cH_{1, E_0}$ we have 
$$\cS^u_{E_0}= \cS_{E_0}\circ M_u,$$
where $M_u$ is the multiplication by the function $u$, so 
$$(\cS^u_{E_0})^*=M_{\bar u}\circ \cS^*_{E_0} =M_{ u^{-1}}\circ \cS^*_{E_0}.$$
The adjoint is always taken in the Hilbert space $\widetilde\cH_{1, E_0}$. We have the explicit expression,
$$\cS^*_{E_0} K(\omega)=\sum_{\omega\rightsquigarrow \omega'}\frac{1+|\zeta(\omega_0, \omega_1)|^2}{1+|\zeta(\omega'_0, \omega'_1)|^2}|\zeta(\omega'_0, \omega'_1)|^2 K(\omega').$$
Note that
$$\norm{\cS_{E_0}}_{\widetilde\cH_{1, E_0}\To \widetilde\cH_{1, E_0}} = 1$$
(this holds because $\cS^*_{E_0}\cS_{E_0}$ is a stochastic operator, so it has spectral radius $1$; the spectral radius coincides with the norm because the operator is symmetric). As a consequence, we also have
$$\norm{\cS^u_{E_0}}_{\widetilde\cH_{1, E_0}\To \widetilde\cH_{1, E_0}} = 1,$$
but we want to prove that
$\norm{(\cS^u_{E_0})^2}_{\widetilde\cH_{1, E_0}\To \widetilde\cH_{1, E_0}} <1.$
This is equivalent to proving that
$\norm{(\cS^{u, *}_{E_0})^2(\cS^u_{E_0})^2}_{\widetilde\cH_{1, E_0}\To \widetilde\cH_{1, E_0}} <1,$ or to proving that
$$\norm{\cS^*_{E_0} M_{u^{-1}}\cS^*_{E_0}\cS_{E_0}M_u  \cS_{E_0}}_{\widetilde\cH_{1, E_0}\To \widetilde\cH_{1, E_0}} <1.$$

\begin{lemma} If $P = (P_{ij}), Q= (Q_{ij})$ are two $N\times N$ stochastic matrices and $D = (\delta_{ij}D_j), D'=  (\delta_{ij}D'_j)$ are two $N\times N$ diagonal unitary matrix, let
$R= P DP Q D' Q$. Then we have, for all $i$,
$$\sum_{m}|R_{im}| \leq 1,$$
with equality only if, for all $m$, there exists a $c$ such that $D_j D'_l =c$ for all $j, l$ such that $Q_{l m}\not= 0$, $P_{i j}\not= 0$ and $(PQ)_{jl}\not= 0$.
\end{lemma}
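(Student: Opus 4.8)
The plan is a direct estimate on matrix entries. First I would write out $R_{im}=(PDPQD'Q)_{im}$ explicitly; since $D$ and $D'$ are diagonal, with diagonal entries $D_j$ and $D'_j$ of modulus $1$, this is
$$R_{im}=\sum_{j,l,r}P_{ij}\,D_j\,P_{jl}\,Q_{lr}\,D'_r\,Q_{rm},$$
the sum ranging over all triples of indices (here the two $D$'s sit at positions $j$ and $r$, and $l$ is the contraction index between the second $P$ and the first $Q$). Because $|D_j|=|D'_r|=1$ and all entries of $P$ and $Q$ are nonnegative, the triangle inequality gives $|R_{im}|\le\sum_{j,l,r}P_{ij}P_{jl}Q_{lr}Q_{rm}$. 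Summing over $m$ and using row-stochasticity four times --- first $\sum_m Q_{rm}=1$, then $\sum_r Q_{lr}=1$, then $\sum_l P_{jl}=1$, then $\sum_j P_{ij}=1$ --- collapses the right-hand side to $\sum_j P_{ij}=1$, which is the asserted bound.

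For the equality clause I would observe that the chain of summations just used is an exact identity, $\sum_m\sum_{j,l,r}P_{ij}P_{jl}Q_{lr}Q_{rm}=1$, so $\sum_m|R_{im}|=1$ forces $|R_{im}|=\sum_{j,l,r}P_{ij}P_{jl}Q_{lr}Q_{rm}$ for each $m$ separately. Fixing such an $m$, each summand of $R_{im}$ is the nonnegative real $P_{ij}P_{jl}Q_{lr}Q_{rm}$ times the unit complex number $D_jD'_r$; equality in the triangle inequality therefore forces $D_jD'_r$ to take a single value $c=c(m)$ over all triples $(j,l,r)$ with $P_{ij}P_{jl}Q_{lr}Q_{rm}\ne 0$. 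Finally I would contract the intermediate index $l$: for fixed $(j,r)$, the existence of an $l$ with $P_{jl}Q_{lr}\ne 0$ is exactly $(PQ)_{jr}=\sum_l P_{jl}Q_{lr}\ne 0$, all terms being nonnegative. Hence the condition reads $D_jD'_r=c$ for all $(j,r)$ with $P_{ij}\ne 0$, $Q_{rm}\ne 0$ and $(PQ)_{jr}\ne 0$, which (renaming $r$ as $l$) is precisely the assertion of the lemma.

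There is no substantial obstacle here; the proof is an elementary computation. The only point requiring care is the index bookkeeping in the equality case --- in particular matching the paper's convention, where the two matrices $P,Q$ lying between the two diagonal factors are contracted into the single factor $(PQ)_{jl}$, so that the phase attached to a surviving term is $D_jD'_l$ and the support condition on such a term is the triple $\{P_{ij}\ne 0,\ (PQ)_{jl}\ne 0,\ Q_{lm}\ne 0\}$.
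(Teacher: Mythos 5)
Your proof is correct and follows essentially the same route as the paper: expand $R_{im}$, apply the triangle inequality and row-stochasticity (the paper phrases the bound as $|R_{im}|\le (P^2Q^2)_{im}$ with $P^2Q^2$ stochastic), and read off the equality case from termwise equality in the triangle inequality. Your only difference is cosmetic — you keep the intermediate index $l$ explicit before contracting it into $(PQ)_{jl}$, and you are slightly more explicit that $c=c(m)$, which matches the statement.
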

To check this simple lemma, write
$$R_{im}=\sum_{j, l} P_{ij } D_j (PQ)_{jl} D'_l Q_{lm}.$$
By the triangle inequality we have $|R_{im}| \leq (P^2Q^2)_{im}$, and $P^2 Q^2$ is a stochastic matrix. For $\sum_{m}|R_{im}|$ to equal $1$ we should have equality
in the triangle inequality, for all $m$. The case of equality happens if and only if there exists a $c$ such that $D_j D'_l =c$ for all $j, l$ such that $Q_{l m}\not= 0$, $P_{i j}\not= 0$ and $(PQ)_{jl}\not= 0$.

Now having $\sum_{m}|R_{im}| \leq 1-\delta$ for all $i$ implies that $\sum_{m}|R^k_{im}| \leq (1-\delta)^k$ for all $k$ and all $i$, and thus the spectral radius of $R$ is $\leq 1-\delta$.

To end the proof of Lemma \ref{l:decay}, we apply this considerations to the $B\times B$ matrices $P=\cS^*_{E_0}$, $Q=\cS_{E_0}$, $D=M_{u^{-1}}$ and $D'=M_u$. For $\sum_{e'\in B}|R_{e, e'}|$
to equal $1$ for some $e\in B$, we must have that, for all $e_1\rightsquigarrow e$ and for all $e_2$ with the same origin as $e_1$, 
$u(e_1)/u(e_2)$ is a constant, i.e. does not depend on $e_1$ and $e_2$. But clearly this implies that $u$ is a constant.

All this discussion is of course independent of the global shape of the graph (it involves only paths of length $2$ in the tree $\mathfrak X$), so we see that if $u$ is not constant, there exists $\delta$ independent of $G$ such that the spectral radius of $\cS^*_{E_0} M_{u^{-1}}\cS^*_{E_0}\cS_{E_0}M_u  \cS_{E_0}$ is less than $1-\delta$. Since this operator is symmetric, this is also its norm.
 
\end{proof}

\begin{proof} (of Proposition \ref{p:HS2}) (Quite miraculously) the proof is almost identical to that of Proposition \ref{p:crucialthing}.
We start again at step (b), and introduce the operator $\widetilde U:\ell^2(V(\mathfrak{X}))\To \ell^2(B(\mathfrak{X}))$ defined by
$$\widetilde U\psi (e)=p(e) \zeta_{ \cA_p + i 0}(e)^{-1}\psi(t(e))- p(e)\psi(o(e)).$$
The meaning of this expression is the following~: for frozen $e$, consider the function $ \zeta_{ \cA_p + i 0}(e)^{-1}\psi$, defined by
continuous functional calculus (the map $\lambda\mapsto \zeta_{\lambda+ i 0}(e)^{-1}$ is continuous). Then consider the map $e\mapsto \left( \zeta_{ \cA_p + i 0}(e)^{-1}\psi\right)(t(e)).$

We can already remark that $\widetilde U P_{\lambda, \omega}(e) =0 $ if $e$ goes away from $\omega$, and that
\begin{eqnarray}\label{e:UP}\widetilde U P_{\lambda, \omega}(e)&=&  p(e) \zeta_{\lambda}(e)^{-1}P_{\lambda, \omega}(t(e))- p(e)P_{\lambda, \omega}(o(e))\\
&=& 2w_\lambda P_{\lambda, \omega}(t(e))
\end{eqnarray}
if $e$ goes towards $\omega$. This comes from the fact that $ \zeta_{\lambda}(e)^{-1} P_{\lambda, \omega}(t(e))=  P_{\lambda, \omega}(o(e))$
if $e$ goes away from $\omega$, and $ \zeta_{\lambda}(e)P_{\lambda, \omega}(t(e))=  P_{\lambda, \omega}(o(e))$ otherwise; together with item (b) in Proposition \ref{p:green}.

We can define similarly $U:\ell^2(V )\To \ell^2(B)$, built so that
$ g_j =U\psi_j$.
Adapting \eqref{e:bound} to the anisotropic case, we obtain
\begin{multline}Var^I_{nb}(K)\leq \frac{1}{N(I)}\sum_{j=1}^{|V|}    |\la \iota g_j, \widehat{K}_B U \chi(\cA_p)\psi_j\ra|^2  
\\ \leq  \frac{C}{N(I)} \int \chi^2(\lambda) \norm{ \bbbone_{\cD}  \widehat{K}_{B(\mathfrak X)} \widetilde U P_{\lambda,\omega}}^2_{\ell^2(B(\mathfrak X))} d\nu_{\lambda}(\omega){\mathrm{m}}(\lambda)d\lambda + \norm{K}^2_{sup}\, o_{I, \chi, m}(1)_{G\To\mathfrak X } . \label{e:bound2}
 \end{multline}
 
When we calculate
$$ \widehat{K}_{B(\mathfrak X)} (\widetilde U P_{\lambda, \omega})(e)=\sum_{e', \cA^{\sharp (m-1)}(e, e')\not= 0}K(o(e), t(e')) \widetilde U P_{\lambda, \omega}(e')$$
we see that only one term in the sum can be non-zero, namely the only $e'$ going towards $\omega$ and such that $ \cA^{\sharp (m-1)}(e, e')\not= 0$. We denote this edge $e'$ by $e'_{e, \omega}$. Using \eqref{e:UP} we find
$$| \widehat{K}_{B(\mathfrak X)}(\widetilde U P_{\lambda, \omega})(e)|^2 = |K(o(e), t(e'_{e, \omega}))|^2 |2w_\lambda |^2 |P_{\lambda, \omega}(t(e))|^2.$$
Of course $ |2w_\lambda |^2$ stays bounded when $\lambda$ is in $I$, with a bound that does not depend on the graph. To estimate $ \int | \widehat{K}_{B(\mathfrak X)}(\widetilde U P_{\lambda, \omega})(e)|^2 d\nu_\lambda(\omega)$ we first note as above that it is enough to treat the case where $o(e)=o$, where the calculations will be simpler.
For any given $\lambda$ and $e$,
\begin{multline*} \int | \widehat{K}_{B(\mathfrak X)}(\widetilde U P_{\lambda, \omega})(e)|^2 d\nu_\lambda(\omega)\\ 
\leq |2w_\lambda |^2 \sum_{e', \cA^{\sharp (m-1)}(e, e')\not= 0}|K(o(e), t(e'))|^2  |P_{\lambda, \omega}(t(e))|^2 \nu_\lambda\left( \{ \omega \mbox{ s.t. }e'_{e, \omega} =e' \}\right).
\end{multline*}
In the case $o(e)=o$, the calculation is simpler, since in this case we can use \eqref{e:funny} to write that $ |P_{\lambda, \omega}(t(e))|^2 \nu_\lambda\left( \{ \omega \mbox{ s.t. }e'_{e, \omega} =e' \}\right)=\frac{1}{1+|\zeta_\lambda(e')|^2}\leq 1$.  Thus 
$$ \int |\widehat{K}_{B(\mathfrak X)}(\widetilde U P_{\lambda, \omega})(e)|^2 d\nu_\lambda(\omega)\leq   |2w_\lambda |^2 \sum_{e'}|K(o(e), t(e'))|^2$$
and we conclude as in the proof of Proposition \ref{p:crucialthing} .
\end{proof}

The following lemma is based on the results of Section \ref{s:spnbt} and extends them. If a regular graph has a spectral gap, we saw in \S \ref{s:spnbt} that the non-backtracking random walk (and more generally for any given $m\geq 1$ the transfer operator $\cS :  \ell^2(B_m)\To  \ell^2(B_m)$) have a spectral gap as well; the following lemma extends this to {\em{weighted}} non-backtracking random walks.
\begin{lemma} \label{l:cheeger} Let $G$ be a $(q+1)$-regular graph such that the adjacency matrix $\cA$ has spectral gap $\beta$. Let $\mu$ be a probability measure on $B_m$ and let $\cS_P : \ell^2(B_m, \mu)\To \ell^2(B_m, \mu)$ be a stochastic operator of the form
$$\cS_P K(\omega)=\sum_{\omega'\rightsquigarrow \omega} P(\omega, \omega') K(\omega').$$
Assume also that $\cS_P^*$ is stochastic (when the adjoint is taken in $\ell^2(B_m, \mu)$).

Then for all $k_0$ \begin{multline*}
\left(  1- 4c \left(  1- (1-\beta')^{2(k_0-m+1)}\right)^{1/2}\right)^{1/2}\leq \norm{\cS^{k_0}_P}_{\ell_0^2(B_m, \mu)\To \ell_0^2(B_m, \mu)}\\
\leq \left(  1-\frac{c^{-2}}2\left(1- (k_0-m+1)^2 (1-\beta')^{2(k_0-m+1)} \right)^2\right)^{1/2}
\end{multline*}
where $\ell_0^2(B_m, \mu)$ is the subspace of functions orthogonal to the constants.
The factor $c$ is an explicit function of $q^{-2 k_0}$, $\max_{\omega'\rightsquigarrow \omega} P(\omega, \omega')^{-1}$, $\max_{\omega}\mu(\omega)^{-1}$; and $\beta'$ is related to $\beta$ by \eqref{e:beta'}.

In particular, there exists $k_0$ depending only on $\beta$, and $\rho<1$ an explicit function of $\beta'$, $\max_{\omega'\rightsquigarrow \omega} P(\omega, \omega')^{-1}$, $\max_{\omega}\mu(\omega)^{-1}$, such that
$$\norm{\cS^{k_0}_P}_{\ell_0^2(B_m, \mu)\To \ell_0^2(B_m, \mu)}
\leq \rho.$$
\end{lemma}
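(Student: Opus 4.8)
The plan is to replace $\cS_P^{k_0}$ by the self-adjoint operator $\cR := (\cS_P^*)^{k_0}\cS_P^{k_0}$ and to estimate its spectral gap. Since $\cS_P$ and $\cS_P^*$ are stochastic, $\cR$ is positive, self-adjoint, and doubly stochastic, i.e.\ a reversible Markov operator on $B_m$ with stationary probability $\mu$; moreover $\cS_P$ is a contraction of $\ell^2(B_m,\mu)$, so $\norm{\cS_P^{k_0}}_{\ell_0^2(B_m,\mu)\To\ell_0^2(B_m,\mu)}^2 = \norm{\cR}_{\ell_0^2(B_m,\mu)\To\ell_0^2(B_m,\mu)} = 1 - \mathrm{gap}(\cR)$. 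Thus it suffices to bound $\mathrm{gap}(\cR)$ from below and above, and for this I would use the Cheeger inequality for reversible chains, $\tfrac12 h(\cR)^2 \le \mathrm{gap}(\cR) \le 2h(\cR)$, where $h(\cR)$ is the conductance of $\cR$ relative to $\mu$. The $\tfrac12 h^2$ direction, applied after the comparison below, is what produces the exponent $2$ and the factor $c^{-2}/2$ in the stated upper bound, and the $2h$ direction produces the square root in the lower bound.

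The comparison with the unweighted walk is the heart of the matter. The crucial observation is that the off-diagonal support of $\cR$ on $B_m\times B_m$ --- the set of pairs $(\omega,\omega'')$ with $\cR(\omega,\omega'')\ne 0$ --- depends only on the combinatorics of non-backtracking paths in $\mathfrak X$ (two length-$m$ paths are $\cR$-adjacent iff they admit a common $k_0$-step non-backtracking predecessor, hence also a common successor), and \emph{not} on the weights $P$ nor on $\mu$. Therefore it coincides with that of the reference operator $\cR_0 := (\cS^*)^{k_0}\cS^{k_0}$ built from the unweighted transfer operator $\cS$ of Proposition~\ref{p:Ck}, taken on $\ell^2(B_m,\mu_0)$ with $\mu_0$ uniform. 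Every weight occurring in $\cR$ (resp.\ $\cR_0$) is a product of at most $2k_0$ factors $P(\omega,\omega')$ (resp.\ $q^{-1}$) together with ratios of values of $\mu$ (resp.\ $\mu_0$), all of which are bounded above and below by explicit constants governed by $\max_{\omega'\rightsquigarrow\omega}P(\omega,\omega')^{-1}$, $\max_\omega\mu(\omega)^{-1}$, and the number $\asymp q^{k_0}$ of $k_0$-step non-backtracking paths through a vertex. Hence $c^{-1}h(\cR_0)\le h(\cR)\le c\,h(\cR_0)$ with $c$ as in the statement. (Equivalently one could compare Dirichlet forms directly, $c^{-2}\mathrm{gap}(\cR_0)\le\mathrm{gap}(\cR)\le c^2\mathrm{gap}(\cR_0)$, and apply Cheeger only to $\cR_0$; this gives the same conclusion, possibly in a sharper form.)

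It remains to estimate $\mathrm{gap}(\cR_0) = 1 - \norm{\cS^{k_0}}_{\cH_m^o\To\cH_m^o}^2$ in both directions, which is exactly what the explicit spectral decomposition of $\cS$ recalled in \S\ref{s:spnbt} supplies through its link \eqref{e:beta'} with the adjacency spectrum. The norm bound \eqref{e:sg'}, $\norm{\cS^{k_0}}_{\cH_m^o\To\cH_m^o}\le(k_0-m+1)(1-\beta')^{k_0-m+1}$, gives $\mathrm{gap}(\cR_0)\ge 1-(k_0-m+1)^2(1-\beta')^{2(k_0-m+1)}$; conversely, the fact that $\cS$ has on $\cH_m^o$ eigenvalues of modulus as large as $1-\beta'$ (and, in their absence, of modulus $q^{-1/2}$ from the tempered part), together with its size-$2$ triangular-block structure, yields $\norm{\cS^{k_0}}_{\cH_m^o\To\cH_m^o}^2\ge(1-\beta')^{2(k_0-m+1)}$, hence $\mathrm{gap}(\cR_0)\le 1-(1-\beta')^{2(k_0-m+1)}$. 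Feeding these into the Cheeger inequality together with the comparison of the previous paragraph (using $h(\cR_0)\ge\mathrm{gap}(\cR_0)/2$ for the upper bound on $\norm{\cS_P^{k_0}}$ and $h(\cR_0)\le\sqrt{2\,\mathrm{gap}(\cR_0)}$ for the lower bound), and absorbing absolute constants into $c$, produces the two displayed inequalities. For the last assertion, since $1-\beta'<1$ we have $(k_0-m+1)^2(1-\beta')^{2(k_0-m+1)}\to 0$, so choosing $k_0$ (a function of $\beta$ alone, via $\beta'$) large enough that this is $\le\tfrac12$ makes the right-hand side of the upper bound at most $(1-\tfrac{c^{-2}}{8})^{1/2}<1$, which is the asserted $\rho$. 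The main obstacle is the comparison step: one must check carefully that $\cR$ and $\cR_0$ genuinely have the same combinatorial support and keep honest track of how the weight ratios, measure ratios, and path counts assemble into $c$; a secondary point is that the reduction of $\cH_m$-estimates to $\cH_1$-estimates in \S\ref{s:spnbt} is available only for the unweighted $\cS$, which is precisely why it is applied to $\cR_0$ and the comparison is needed to reach $\cR$.
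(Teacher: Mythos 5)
Your proposal is correct and follows essentially the same route as the paper's proof: pass to the symmetrized doubly stochastic operator $\cS_P^{k_0 *}\cS_P^{k_0}$, apply the Cheeger inequality in both directions, compare its Cheeger constant with that of the unweighted (isotropic) operator $\cS^{k_0 *}\cS^{k_0}$ via the common combinatorial support and explicit weight ratios (this is exactly the paper's constant $c$), and then invoke the spectral bound \eqref{e:sg'} for the unweighted non-backtracking transfer operator. The only differences are cosmetic (your remark on comparing Dirichlet forms instead of Cheeger constants, and slightly more detail on why the supports coincide), so no changes are needed.
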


\begin{proof}(of Lemma \ref{l:cheeger}).

Note that $\norm{\cS^{k_0}_P}_{\ell_0^2(B_m, \mu)\To \ell_0^2(B_m, \mu)}= \norm{\cS^{k_0 *}_P\cS^{k_0}_P}_{\ell_0^2(B_m, \mu)\To \ell_0^2(B_m, \mu)}^{1/2},$
so that we can decide to work with the stochastic symmetric operator $\cQ=\cS^{k_0 *}_P\cS^{k_0}_P$.  

For a symmetric operator, recall the link between the spectral gap and the isoperimetric constant. Let
$$\lambda=\max_{v\in\ell_0^2(B_m, \mu) } \frac{\la v, \cQ v\ra_{\ell_0^2(B_m, \mu)}}{\norm{v}^2_{\ell_0^2(B_m, \mu)}}$$
be the second eigenvalue of $\cQ$. Let
$$h_\cQ=\inf_{A\subset B_m}\frac{ \sum_{\omega\in A, \omega'\not\in A} \cQ(\omega, \omega')\mu(\omega)}{\min (\mu(A), \mu(A^c))}$$
be the ``Cheeger constant'' (isoperimetric constant). Note that $\cQ$ being symmetric as an operator on $\ell^2(B_m, \mu)$ is equivalent to the reversibility condition~: $
 \cQ(\omega, \omega')\mu(\omega)= \cQ(\omega', \omega)\mu(\omega')$.

It is not very difficult to prove that
$$1-\lambda \leq 2h_\cQ$$
and the Cheeger inequality for graphs says that
$$1-\lambda \geq \frac{h_\cQ^2}2,$$
see for instance \cite{Diac91}.
Thus $\norm{\cQ}_{\ell_0^2(B_m, \mu)\To \ell_0^2(B_m, \mu)}\leq 1-h_\cQ^2/2$.

Now, we compare with what we already know in the ``isotropic'' case, i.e. when the probability transitions
are uniform~: $P_{iso}(\omega, \omega')=1/q$. In this case, $\mu$ is the uniform measure, and 
$\cQ_{iso}(\omega, \omega')$ takes only the two values $0$ or $q^{-2 k_0}$. In addition $\cQ_{iso}(\omega, \omega')\not= 0$ iff $\cQ_{}(\omega, \omega')\not= 0$. Based on that, it is not difficult to see that there exists $c >1$ such that the two corresponding Cheeger constants are related by
$$c^{-1}  h_{iso} \leq h_\cQ \leq c h_{iso}.$$
The factor $c$ has an explicit expression as a function of $q^{-2 k_0}$, $\max_{\omega'\rightsquigarrow \omega} P(\omega, \omega')^{-1}$, $\max_{\omega}\mu(\omega)^{-1}$.

 On the other hand, we have already seen, when we calculated explicitly the spectrum on the non-backtracking random walk in \S \ref{s:nbt}, that
 $$(1-\beta')^{2(k_0-m+1)}\leq \norm{\cQ_{iso}}_{\ell_0^2(B_m)\To \ell_0^2(B_m)}\leq (k_0-m+1)^2 (1-\beta')^{2(k_0-m+1)}$$
 where $\beta'$ is related to the spectral gap by \eqref{e:beta'}.
 
 Thus
$$\norm{\cQ}_{\ell_0^2(B_m, \mu)\To \ell_0^2(B_m, \mu)}\leq 1-c^{-2}h_{iso}^2/2\leq 1-\frac{c^{-2}}2\left(1- (k_0-m+1)^2 (1-\beta')^{2(k_0-m+1)} \right)^2.$$
In the other direction (less interesting to us)
$$\norm{\cQ}_{\ell_0^2(B_m, \mu)\To \ell_0^2(B_m, \mu)}\geq 1-2c h_{iso}\geq 1- 4c \left(  1- (1-\beta')^{2(k_0-m+1)}\right)^{1/2}.$$

\end{proof}
 
 \begin{lemma}\label{l:fnu} The function $u=\frac{\zeta_{E_0}}{\bar\zeta_{E_0}} : \{1, \ldots, q+1\}\To \IS^1$ is constant only if the $p_j$ are all equal, or $E_0=0$. If $E_0=0$ and $u$ is constant, we must have $u= -1$.
\end{lemma}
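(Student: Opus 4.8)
The plan is to work directly with the algebraic characterization of $\zeta_{E_0}(j)$ from Proposition \ref{p:green}, taken at $\gamma = E_0 + i0$. Recall that $w_{E_0+i0}, \zeta_{E_0}(1), \ldots, \zeta_{E_0}(q+1)$ satisfy relation (b): $p_j(\zeta_{E_0}(j)^{-1} - \zeta_{E_0}(j)) = 2w_{E_0}$ for every $j$, together with relation (a) and the boundary condition (c) (which at the real axis, on the support of the spectral density, forces $\Im m\, w_{E_0} \leq 0$ in our sign convention). The key observation is that (b) expresses $2w_{E_0}$ in two ways: $p_j(\zeta_{E_0}(j)^{-1} - \zeta_{E_0}(j))$ is the \emph{same} complex number for all $j$.

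First I would suppose $u$ is constant, say $u(j) = \zeta_{E_0}(j)/\bar\zeta_{E_0}(j) = c$ for all $j$, with $|c| = 1$. Writing $\zeta_{E_0}(j) = r_j e^{i\theta}$ with $r_j = |\zeta_{E_0}(j)| > 0$ and $e^{2i\theta} = c$ (so all the $\zeta_{E_0}(j)$ have the same argument $\theta$ modulo $\pi$; by absorbing a sign into $r_j$ we may take the argument exactly $\theta$), I would plug this into (b): $p_j(r_j^{-1} e^{-i\theta} - r_j e^{i\theta}) = 2w_{E_0}$. Taking real and imaginary parts, $2w_{E_0} = p_j\big((r_j^{-1} - r_j)\cos\theta\big) - i\, p_j\big((r_j^{-1} + r_j)\sin\theta\big)$. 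Since the left side is independent of $j$, both $p_j(r_j^{-1} - r_j)\cos\theta$ and $p_j(r_j^{-1} + r_j)\sin\theta$ are independent of $j$.

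Now I would split into cases according to whether $\sin\theta = 0$. \textbf{Case $\sin\theta \neq 0$:} then $p_j(r_j^{-1} + r_j)$ is independent of $j$; call it $2\kappa > 0$. The function $r \mapsto p(r^{-1} + r)$ is strictly convex on $(0,\infty)$ with minimum at $r = 1$, so for each value of the constant there are at most two solutions $r_j$, symmetric about $r = 1$ in the sense that they are reciprocal (since $p_j(r_j^{-1}+r_j) = 2\kappa$ gives $p_j r_j^2 - 2\kappa r_j + p_j = 0$, whose two roots multiply to $1$). Combined with the other relation $p_j(r_j^{-1} - r_j)\cos\theta$ constant, and noting that replacing $r_j$ by $r_j^{-1}$ flips the sign of $r_j^{-1}-r_j$: if $\cos\theta \neq 0$ this forces all $r_j$ equal, hence all $p_j r_j^2 - 2\kappa r_j + p_j = 0$ with the same $r_j$, which upon varying $j$ forces all $p_j$ equal (the isotropic case). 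If $\cos\theta = 0$, i.e. $\theta = \pm\pi/2$, then $2w_{E_0} = \mp i\, p_j(r_j^{-1}+r_j) = \mp 2i\kappa$ is purely imaginary; here I would feed this back into relation (a), $E_0 = \sum_j p_j \zeta_{E_0}(j) + 2w_{E_0} = \pm i\sum_j p_j r_j \mp 2i\kappa$, and use that $E_0 \in \IR$ to conclude $\sum_j p_j r_j = 2\kappa$. But $2\kappa = p_j(r_j^{-1} + r_j) \geq 2p_j$ term by term with equality iff $r_j = 1$, and $\sum_j p_j r_j \le \sum_j p_j \cdot(\text{something})$ — here I would need to push the inequality $\sum p_j r_j = 2\kappa$ against $p_k(r_k^{-1}+r_k)=2\kappa$ to force $r_j = 1$ for all $j$, giving $\zeta_{E_0}(j) = \pm i$ for all $j$, hence $2w_{E_0} = 0$, contradicting positivity of the spectral density at $E_0$ (recall $w_{E_0+i0} \neq \infty$ and the density is $\propto \Im m\, g_{E_0+i0}(o) = \Im m\, \tfrac{1}{2w_{E_0}} \neq 0$). \textbf{Case $\sin\theta = 0$:} then all $\zeta_{E_0}(j)$ are real, so $2w_{E_0} = p_j(\zeta_{E_0}(j)^{-1} - \zeta_{E_0}(j))$ is real, and by (a) so is $E_0$ automatically; here $u \equiv 1$ if all $\zeta_{E_0}(j) > 0$ and takes values in $\{\pm 1\}$ otherwise, and $u$ constant equal to $-1$ means all $\zeta_{E_0}(j) < 0$. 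I would then show that real $\zeta_{E_0}(j)$ on the support of the a.c. spectrum forces $E_0 = 0$: taking imaginary parts in the relations (a),(b) at $\gamma = E_0 + i\eps$ and letting $\eps \to 0^+$, the identity $0 = \Im m(w_{E_0})\big(\sum_j \tfrac{|\zeta_{E_0}(j)|^2}{1+|\zeta_{E_0}(j)|^2} - 1\big)$ recalled after \eqref{e:kolmo} shows the density is positive only where $\Im m\, w_{E_0} \neq 0$; but if all $\zeta_{E_0}(j)$ are real then from (b) $w_{E_0}$ is real, contradiction — \emph{unless} the limit is genuinely real only at a boundary point. I would instead argue via symmetry: the map $\gamma \mapsto -\gamma$ combined with $\cA_p \mapsto -\cA_p$ (changing $a_j$-translates' signs, which is an isomorphism of the labelled tree preserving the $p_j$) shows $\zeta_{-E_0}(j) = -\zeta_{E_0}(j)$ and $w_{-E_0} = -w_{E_0}$ when density is positive at both $\pm E_0$; if $\zeta_{E_0}(j)$ is real and negative for all $j$, then $\zeta_{E_0}(j) = -\bar\zeta_{E_0}(j) = \zeta_{-E_0}(j)\big|_{\text{reflected}}$... cleaner: by uniqueness in (c), the only way $\zeta_{E_0}(j)$ can be real is if $E_0$ is at the boundary of the spectrum or $E_0 = 0$; since we assumed the density is positive at $E_0$, $E_0$ is interior, and the reflection symmetry $\cA_p \leftrightarrow -\cA_p$ together with uniqueness forces the real solution to satisfy $\zeta_{E_0}(j) = -\zeta_{-E_0}(j)$, which is only consistent with "$\zeta_{E_0}(j) < 0$ for all $j$" when $E_0 = -E_0$, i.e. $E_0 = 0$. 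And at $E_0 = 0$, relation (b) reads $2w_0 = p_j(\zeta_0(j)^{-1} - \zeta_0(j))$ with (a) giving $0 = \sum_j p_j\zeta_0(j) + 2w_0$; a direct check shows the solution branch selected by (c) has all $\zeta_0(j) < 0$, hence $u \equiv -1$.

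The main obstacle I expect is the case $\sin\theta = 0$ with the $\zeta_{E_0}(j)$ real but not all of the same sign — ruling out a constant $u$ there, and more delicately, pinning down that a real solution can only occur at $E_0 = 0$. The cleanest route is probably to combine the reflection symmetry $\cA_p \leftrightarrow -\cA_p$ (a graph automorphism of the Cayley tree of $\Z/2\Z \star \cdots \star \Z/2\Z$ given by conjugating the random walk, which fixes the $p_j$) with the uniqueness clause (c) of Proposition \ref{p:green}, yielding $\zeta_{-E_0}(j) = -\zeta_{E_0}(j)$ and $w_{-E_0+i0} = -\overline{w_{E_0+i0}}$-type relations; reality of all $\zeta_{E_0}(j)$ then collapses the two branches at $\pm E_0$ onto each other and forces $E_0 = 0$, after which an explicit solution of the system (a)--(c) at $E_0 = 0$ finishes the proof. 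The convexity argument for the $\sin\theta \neq 0$ subcase (strict convexity of $r \mapsto p(r+r^{-1})$) should be routine once set up, and the spectral-density positivity assumption ($2w_{E_0} \neq 0$, equivalently $\Im m\, g_{E_0+i0}(o) \neq 0$) is exactly what excludes the degenerate $\zeta_{E_0}(j) = \pm i$ outcome.
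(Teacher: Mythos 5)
Your generic case ($\sin\theta\neq 0$, $\cos\theta\neq 0$) is essentially the paper's argument in disguise: from relation (b), both $p_j r_j$ and $p_j r_j^{-1}$ are independent of $j$, so $p_j^2$ is constant and the walk is isotropic (the paper phrases this as linear independence of $e^{i\theta}$ and $e^{-i\theta}$ over $\IR$; your convexity detour is unnecessary). But the two remaining cases are mishandled. In the purely imaginary case ($\cos\theta=0$), the lemma's conclusion is already in your hands: relation (b) makes $2w_{E_0}$ purely imaginary, relation (a) then makes $E_0$ purely imaginary, and $E_0\in\IR$ forces $E_0=0$, while $u=e^{2i\theta}=-1$ -- exactly what is to be proved. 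Instead you try to push on to a contradiction ($r_j=1$ for all $j$, hence $w_{E_0}=0$), and this contradiction does not exist: the system $\sum_j p_j t_j = p_k(t_k+t_k^{-1})$ (all $k$) has genuine non-trivial solutions with $\Im m\, w \neq 0$; for instance $q+1=3$, $p=(2/5,2/5,1/5)$ admits at $E_0=0$ a purely imaginary solution $\zeta_j=-it_j$ with the $t_j$ not all equal to $1$ and positive spectral density. Indeed the paper needs this case to survive -- it is precisely the $u=-1$, $E_0=0$ situation treated separately after the lemma using (EXP) and Lemma \ref{l:cheeger} -- so any proof that excludes it is proving a false statement.

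The real case ($\sin\theta=0$) is also off target. The correct disposal is one line, and it is the paper's: if all $\zeta_{E_0}(j)$ are real then relation (b) makes $w_{E_0}$ real, so $\Im m\, g_{E_0+i0}(o)=\Im m \frac{1}{2w_{E_0}}=0$ and the spectral density vanishes at $E_0$, contradicting the standing assumption (from the setup of the section) that $E_0$ is a point of positive density. Your proposed route through the reflection symmetry $\cA_p\leftrightarrow-\cA_p$ is not needed, and your final claim there -- that at $E_0=0$ the branch selected by (c) has all $\zeta_0(j)<0$ real, ``hence $u\equiv-1$'' -- is incoherent: if $\zeta_{E_0}(j)$ is real then $u(j)=\zeta_{E_0}(j)/\bar\zeta_{E_0}(j)=1$ regardless of sign, and at a point of positive density the $\zeta$ cannot be real at all (at $E_0=0$ in the genuine anisotropic case they are purely imaginary). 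So while the first third of your case analysis is sound, the proof as proposed would fail in both of the other cases; the fix is to read off $E_0=0$, $u=-1$ directly in the imaginary case and to invoke vanishing of the spectral density in the real case.
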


\begin{proof} If $u$ is constant, then there exists $r:  \{1, \ldots, q+1\}\To\IR$, and $\theta$ such that
$$\zeta_{E_0}(j)= r(j) e^{i\theta}.$$
Remember from Proposition \ref{p:green} that $p_j(\zeta_{E_0}(j)^{-1} - \zeta_{E_0}(j))= 2w_{E_0}$ does not depend on $j$. If $\theta \not\in \{ 0, \pi, \pm \frac{\pi}2\}$, the two
complex numbers $e^{i\theta}$ and $e^{-i\theta}$ are independent over $\IR$
so we see
that this is only possible if $p_j\zeta_{E_0}(j)^{-1}$ does not depend on $j$ and $p_j\zeta_{E_0}(j)$ does not depend on $j$. Hence $p_j$ does not depend on $j$

If the $\zeta_{E_0}(j)$ are all real we are at a point $E_0$ where the spectral density vanishes, which we excluded from the start. If the $\zeta_{E_0}(j)$ are purely imaginary then $u=-1$. Moreover, from the relations given in Proposition \ref{p:green} we see that $E_0$ must be purely imaginary, hence equal to $0$. 
\end{proof}

\subsection{The original quantum variance \label{s:back2}}
There remains to do as in \S \ref{s:back1}, that is, interprete our inequality \eqref{e:estvarnbanisglob} in terms of the original eigenfunctions of $\cA_p$, the functions $\psi_j$.
We thus define for $K\in \cH_{\leq m}$,
$$Var(K)=\frac{1}N\sum_{j=1}^N |\la \psi_j, \widehat K_G\psi_j\ra|^2.$$
To understand the asymptotics of this quantity, it is actually enough to sum over $\lambda_j \in \supp \mathrm m$. By \eqref{e:consBST},
the other eigenvalues give a negligible contribution.

\begin{remark}\label{r:extension2} Note again that we may want to extend this definition to families of operators, $\lambda\in\IR\mapsto K_\lambda\in \cH_{\leq m}$
depending on $\lambda$ in a $C^1$ (or even $C^0$) fashion. The quantum variance is then defined as
\begin{equation*}Var_{nb}(K)
=\frac1{N}\sum_{\lambda_j}  |\la \psi_j, \widehat K_{\lambda_j,G}\psi_j\ra|^2.
\end{equation*}
We will sometimes allow ourselves to write $Var_{nb}(K_\lambda)$ in this situation.
 \end{remark}

Remember that the non-backtracking quantum variance deals with the quantities
$\la  { \iota  g_j}, \widehat{K}_B   g_j \ra_{\ell^2(B)}$
and that the relation between $g_j$ and $\psi_j$ is as follows~:
$$g_j(e)= \frac{p(e)}{\zeta_{\lambda_j}(e)}\left(\psi_j(t(e))-\zeta_{\lambda_j+i0}(e)\psi_j(o(e))\right).$$
Call $U_{\lambda_j}$ the operator $\ell^2(V)\To \ell^2(B)$, defined by
$$(U_{\lambda_j}\phi)(e)=\frac{p(e)}{\zeta_{\lambda_j}(e)}\left(\phi(t(e))-\zeta_{\lambda_j+i0}(e)\phi(o(e))\right)$$
for $\phi\in \ell^2(V)$. Define $\widetilde U_{\lambda_j}$ the operator defined similarly, from $\ell^2(V(\mathfrak{X}))\To \ell^2(B(\mathfrak{X}))$.

We have
\begin{eqnarray*}\la  { \iota  g_j}, \widehat{K}_B   g_j \ra_{\ell^2(B)}&=&\la \psi_j, (\iota U_{\lambda_j})^* \widehat{K}_B U_{\lambda_j} \psi_j  \ra_{\ell^2(V)}\\
&=&\la \psi_j, ((\iota \widetilde U_{\lambda_j})^* \widehat{K}_{B(\mathfrak{X})} \widetilde U_{\lambda_j})_G \psi_j  \ra_{\ell^2(V)};
\end{eqnarray*}
the last equality just means that $(\iota U_{\lambda_j})^* \widehat{K}_B U_{\lambda_j}$ is obtained by taking $(\iota \widetilde U_{\lambda_j})^* \widehat{K}_{B(\mathfrak{X})} \widetilde U_{\lambda_j}$ defined on $\ell^2(V(\mathfrak{X}))$, down to the quotient graph $G$. What we need is to identify the space of operators of the form 
$(\iota \widetilde U_{\lambda_j})^* \widehat{K}_{B(\mathfrak{X})} \widetilde U_{\lambda_j}$.

If $K\in \cH_m$ it is clear that $(\iota \widetilde U_{\lambda})^* \widehat{K}_{B(\mathfrak{X})} \widetilde U_{\lambda}$
is of the form $\widehat{\tilde K}$ with $\tilde K\in \cH_m\oplus \cH_{m-2}\oplus \cH_{m-1}$, and
$$\tilde K_m(x_0, x_1, \ldots, x_{m-1}, x_m)=\frac{p(x_0, x_1)}{\overline{\zeta_\lambda(x_0, x_1)}}K(x_0, x_1, \ldots, x_{m-1}, x_m)
\frac{p(x_{m-1}, x_m)}{ {\zeta_\lambda(x_{m-1}, x_m)}}$$
The terms $\tilde K_{m-1}, \tilde K_{m-2}$ can similarly be expressed as explicit linear functions of $K$, or equivalently of $\tilde K_m$~:
\begin{equation}\label{e:m-1}\tilde K_{m-1}= \cL^{\lambda}_{m-1}\tilde K_m
\end{equation}
\begin{equation}\label{e:m-2}
\tilde K_{m-2}= \cL^{\lambda}_{m-2}\tilde K_m
\end{equation}
where $\cL^{\lambda}_{m-i}$ are linear operators $\cH_{m}\To \cH_{m-i}$ ($i=1, 2$).
We won't need their explicit expressions (although it is not difficult to find them).
Inequality \eqref{e:estvarnbanisglob} can then be rewritten as
\begin{multline}\label{e:recursion}
Var(K + \cL_{m-1} K  + \cL_{m-2} K)\\
\leq \frac{\tilde C(m, {\varepsilon}, \beta)}{ (n+1)}\norm{K}_{\cH}^2 +\norm{K}_{sup}^2 \, (o_{n, m, \delta, \varepsilon}(1)_{G\To\mathfrak X } + n c(m)^2O(\delta)+ c(m)^2O(\varepsilon)) . \end{multline}
for all $K\in \cH_m$. Here $K + \cL_{m-1} K  + \cL_{m-2} K$ means the $\lambda$-dependent family
$$\lambda\mapsto K + \cL^\lambda_{m-1} K  + \cL^\lambda_{m-2} K,$$
and we work in the context of Remarks \ref{r:extension} and  \ref{r:extension2}. Equation \eqref{e:recursion} will be used to prove, by induction on $m$, the following Proposition (which implies Theorem \ref{t:mainanis}).

\begin{prop} \label{p:recursion}There exists $\tilde c(m) >0$, and for every $\varepsilon>0$ some $\tilde C(m, {\varepsilon}, \beta)>0$, such that~: for all $K\in \cH_m$, for all $\varepsilon, \delta>0$, for all $n$,
\begin{multline*}Var(K -\la K\ra_{\lambda, p} \bbbone_0)\\ \leq 
\frac{\tilde C(m, {\varepsilon}, \beta)}{ (n+1)}\norm{K}_{\cH}^2 +\norm{K}_{sup}^2 \, (o_{n, m, \delta, \varepsilon}(1)_{G\To\mathfrak X } + n \tilde c(m)O(\delta)+ \tilde c(m)O(\varepsilon) +\tilde c(m) O(n^{-1}))
\end{multline*}
where $\la K\ra_{\lambda,p}$ was defined in \eqref{e:klambdap}, and $\bbbone_0$ was defined in \eqref{e:bbone} and is such that $\widehat{\bbbone_0}$ is the identity operator.
\end{prop}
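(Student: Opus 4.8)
\medskip
\noindent\textbf{Proof proposal for Proposition \ref{p:recursion}.}
The plan is to induct on $m$, using \eqref{e:recursion} as the engine and copying the scheme of \S\ref{s:back1}. Throughout, variances are understood in the $\lambda$-dependent (``family'') sense of Remarks \ref{r:extension}--\ref{r:extension2}, and one keeps in mind the uniform bounds $\norm{\cL^{\lambda}_{m-i}K}_{\cH}\le C(m)\norm{K}_{\cH}$ and $\norm{\cL^{\lambda}_{m-i}K}_{\sup}\le C(m)\norm{K}_{\sup}$, valid for $\lambda$ in any compact subset of $(-2\sqrt q,2\sqrt q)$ on which the spectral density stays positive (there $\zeta_\lambda$ is analytic and bounded away from $0$ and $\infty$). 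The base case $m=0$ is obtained exactly as in \S\ref{s:back1}, where now $\la K\ra_{\lambda,p}=\frac1N\sum_{x}K(x,x)$ is independent of $\lambda$: one writes down the anisotropic counterpart of Lemma \ref{l:formulas}(i), uses the eigenvalue equation $\cA_p\psi_j=\lambda_j\psi_j$ to reduce $\la\iota g_j,\widehat{K'}_B g_j\ra$ (for a suitable $K'\in\cH_1$ built from $a\in\cH_0$) to $\la\psi_j,\widehat{(R_{\lambda_j}a)}_G\psi_j\ra$ with $R_{\lambda_j}$ a first-order operator in $\cA_p$, freezes $R_{\lambda_j}$ to $R_{E_0}$ up to $O(\delta)$ on the interval $I$, and then combines \eqref{e:estvarnbanisglob} with a Lemma \ref{l:useful}-type approximation of $R_{E_0}^{-1}$ on the orthogonal complement of the constants. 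The gap needed to invert $R_{E_0}$ is supplied by (EXP), which forces a Cheeger and hence a spectral gap for $\cA_p$ as well, the $p_j$ being positive; the balancing of parameters in this last step produces the term $\tilde c(0)O(n^{-1})$.

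For the inductive step, assume the estimate for all degrees $<m$ (in the family form) and fix $K\in\cH_m$. By \eqref{e:recursion} the family $K+\cL^{\lambda}_{m-1}K+\cL^{\lambda}_{m-2}K$ has small variance, so with $v_j:=\la\psi_j,\widehat{(K+\cL^{\lambda_j}_{m-1}K+\cL^{\lambda_j}_{m-2}K)}_G\psi_j\ra$ the quantity $\frac1N\sum_j|v_j|^2$ is bounded by the right-hand side of \eqref{e:recursion}. By linearity of $\widehat{(\cdot)}_G$,
\[
\la\psi_j,\widehat{K}_G\psi_j\ra = v_j - \la\psi_j,\widehat{(\cL^{\lambda_j}_{m-1}K)}_G\psi_j\ra - \la\psi_j,\widehat{(\cL^{\lambda_j}_{m-2}K)}_G\psi_j\ra .
\]
Applying the inductive hypothesis to the families $\cL^{\lambda}_{m-i}K\in\cH_{m-i}$ ($i=1,2$), write $\la\psi_j,\widehat{(\cL^{\lambda_j}_{m-i}K)}_G\psi_j\ra=\la\cL^{\lambda_j}_{m-i}K\ra_{\lambda_j,p}+s_{i,j}$, where $\frac1N\sum_j|s_{i,j}|^2$ is bounded as in the statement (thanks to the uniform norm bounds on $\cL^{\lambda}_{m-i}K$). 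Then
\[
\la\psi_j,\widehat{K}_G\psi_j\ra-\la K\ra_{\lambda_j,p} = (v_j-s_{1,j}-s_{2,j}) - \big(\la K\ra_{\lambda_j,p}+\la\cL^{\lambda_j}_{m-1}K\ra_{\lambda_j,p}+\la\cL^{\lambda_j}_{m-2}K\ra_{\lambda_j,p}\big).
\]

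The step closes once the key identity
\[
\la K\ra_{\lambda,p}+\la\cL^{\lambda}_{m-1}K\ra_{\lambda,p}+\la\cL^{\lambda}_{m-2}K\ra_{\lambda,p}=0
\]
is established for every $\lambda$ at which the spectral density is positive. This is a statement about the infinite tree: by construction $K+\cL^{\lambda}_{m-1}K+\cL^{\lambda}_{m-2}K$ is the kernel of the sandwich $(\iota\widetilde U_{\lambda})^*\widehat{K'}_{B(\mathfrak{X})}\widetilde U_{\lambda}$ for the associated test kernel $K'$ on $\ell^2(B(\mathfrak{X}))$, while by \eqref{e:klambdap} and \eqref{e:partialtrace} the left-hand side above is, up to a nonzero normalization, the average $\int\la P_{\lambda,\omega},\bbbone_{\cD}\,(\iota\widetilde U_{\lambda})^*\widehat{K'}_{B(\mathfrak{X})}\widetilde U_{\lambda}\,P_{\lambda,\omega}\ra\,d\nu_{\lambda}(\omega)$. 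By the computation in the proof of Proposition \ref{p:HS2}, $\widetilde U_{\lambda}P_{\lambda,\omega}$ — and hence $\widehat{K'}_{B(\mathfrak X)}\widetilde U_{\lambda}P_{\lambda,\omega}$ — is supported on bonds pointing towards $\omega$, whereas $\iota\widetilde U_{\lambda}P_{\lambda,\omega}$ is supported on bonds pointing away from $\omega$; the two are therefore orthogonal for every $\omega$, so the integral vanishes, up to the Benjamini--Schramm error absorbed into the $o_{n,m,\delta,\varepsilon}(1)_{G\To\mathfrak X}$ term. Granting this, the bracketed term above disappears, and since $(a+b+c)^2\le 3(a^2+b^2+c^2)$,
\[
Var\big(K-\la K\ra_{\lambda,p}\bbbone_0\big)\le 3\,Var\big(K+\cL_{m-1}K+\cL_{m-2}K\big)+3\sum_{i=1}^{2}Var\big(\cL_{m-i}K-\la\cL_{m-i}K\ra_{\lambda,p}\bbbone_0\big).
\]
Bounding the first term by \eqref{e:recursion} (equivalently \eqref{e:estvarnbanisglob}) and the others by the inductive hypothesis, choosing the parameters $n,\delta,\varepsilon$ at level $m-i$ equal to those at level $m$, and absorbing constants into $\tilde C(m,\varepsilon,\beta)$ and $\tilde c(m)$, one gets the announced bound; the $\tilde c(m)O(n^{-1})$ term propagates from the base case.

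The main obstacle is the key identity — verifying that the ``mean'' produced recursively by the induction coincides with the closed formula \eqref{e:klambdap}. This is the single place where the Aomoto--Fig\`a-Talamanca--Steger spectral theory of $\cA_p$ on the weighted tree (\S\ref{s:green}--\S\ref{s:density}) is genuinely used, and it forces one to control the operators $\cL^{\lambda}_{m-i}$ and the measures $\nu_{\lambda}$ uniformly in $\lambda$, hence to stay away from the zeros of the spectral density (where the Poisson kernel and $\zeta_{\lambda}$ degenerate). Producing the base case $m=0$ — the correct first-order operator in $\cA_p$ together with its inversion from (EXP) alone — is the other point requiring genuine computation; the rest of the bookkeeping runs parallel to \S\ref{s:back1}.
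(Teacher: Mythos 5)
Your proposal follows essentially the same route as the paper: induction on $m$ driven by \eqref{e:recursion}; the key cancellation $\la K\ra_{\lambda,p}=-\la \cL^{\lambda}_{m-1}K+\cL^{\lambda}_{m-2}K\ra_{\lambda,p}$ (the paper's \eqref{e:recurKK}), which you prove exactly as the paper does, via \eqref{e:partialtrace} and the observation that $\widehat{K}_{B(\mathfrak X)}\widetilde U_{\lambda}P_{\lambda,\omega}$ and $\iota\widetilde U_{\lambda}P_{\lambda,\omega}$ are supported on edges pointing towards, respectively away from, $\omega$; and then a two/three-term splitting of the variance handled by \eqref{e:estvarnbanisglob} together with the inductive hypothesis in the $\lambda$-dependent family sense of Remarks \ref{r:extension}--\ref{r:extension2}. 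The only place where your sketch is thinner than the paper, and where the literal wording would need repair, is the base case $m=0$: the paper's Lemma \ref{l:m=0} does not reduce a single matrix element $\la \iota g_j,\widehat{K'}_B g_j\ra$ to ``a first-order operator in $\cA_p$''. It takes the \emph{difference} of the two non-backtracking quantities built from $\zeta_{\lambda\pm i0}$, with the specific kernel $K_\lambda(x,y)=a(x)|\zeta_\lambda(x,y)|^2/p(x,y)$, precisely so as to land on the real symmetric weighted operator $\cA_\lambda-W_\lambda$ (edge weights $\Im m\,\zeta_\lambda$), to which the Cheeger-comparison argument of Lemma \ref{l:cheeger} applies under (EXP); the uniform spectral gap one needs is for these weighted operators, not for $\cA_p$ itself, and a single matrix element would produce complex, non-symmetric weights for which ``Cheeger hence gap'' does not directly apply. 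Since you flag the base case as the point requiring genuine computation, this is a fixable omission rather than a wrong turn; with Lemma \ref{l:m=0} supplied as in the paper, your inductive step and key identity are exactly the paper's proof.
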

The remainder of the paper is devoted to proving Proposition \ref{p:recursion}. The next lemma will show that
\begin{equation}\label{e:recurKK}\la K\ra_{\lambda,p}=-\la \cL^\lambda_{m-1} K  + \cL^\lambda_{m-2} K\ra_{\lambda,p},
\end{equation}
which, together with \eqref{e:recursion}, will allow to work by induction on $m$. To initiate the induction we will have to prove Proposition \ref{p:recursion}
for $m=0$, this will be done in Lemma \ref{l:m=0}.

In what follows, recall that the labelled graph is written as a quotient $(G, \mathrm{c})=\Gamma\backslash (\mathfrak X,  \mathrm{c})$, and $\cD\subset V(\mathfrak X)$ is a fundamental domain for the action of $\Gamma$.

\begin{lemma} For any $\lambda$, the operator  
 $(\iota \widetilde U_{\lambda})^* \widehat{K}_{B(\mathfrak{X})} \widetilde U_{\lambda}$ satisfies
 $$ \sum_{x, y\in \mathfrak X} \bbbone_{\cD}(x)\left[(\iota \widetilde U_{\lambda})^* \widehat{K}_{B(\mathfrak{X})} \widetilde U_{\lambda}\right](x, y)\,\,\Im m g_{\lambda + i0}(x^{-1}y)=
 0.$$
 
 In other words, for $K\in \cH_m$,
  $$ \sum_{x, y\in \mathfrak X} \bbbone_{\cD}(x) K (x, y)\,\,\Im m g_{\lambda + i0}(x^{-1}y)=-
   \sum_{x, y\in \mathfrak X} \bbbone_{\cD}(x) ( \cL^\lambda_{m-1} K  +  \cL^\lambda_{m-2} K)(x, y)\,\,\Im m g_{\lambda + i0}(x^{-1}y)
 .$$
  
 \end{lemma}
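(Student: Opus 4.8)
The plan is to exploit the explicit shape, recorded in \eqref{e:UP}, of $\widetilde U_\lambda$ applied to the Poisson kernels $P_{\lambda,\omega}$, together with the tree geometry of the non-backtracking operator and the Plancherel formula \eqref{e:traceanis}. First I would prove that, for each $\omega\in\partial\mathfrak{X}$, the two functions $\widehat{K}_{B(\mathfrak{X})}\widetilde U_\lambda P_{\lambda,\omega}$ and $\iota\widetilde U_\lambda P_{\lambda,\omega}$ are supported on complementary halves of $B(\mathfrak{X})$. Recall the dichotomy from the proof of Proposition~\ref{p:crucialthing}: an oriented edge $e$ goes \emph{towards} $\omega$ if $h_\omega(o(e))=h_\omega(t(e))+1$ and \emph{away from} $\omega$ otherwise. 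By \eqref{e:UP}, $\widetilde U_\lambda P_{\lambda,\omega}(e)=2w_\lambda P_{\lambda,\omega}(t(e))$ when $e$ goes towards $\omega$ and vanishes when $e$ goes away, so $\iota\widetilde U_\lambda P_{\lambda,\omega}$, which sends $e$ to $\widetilde U_\lambda P_{\lambda,\omega}(\hat e)$, is supported on the edges going away from $\omega$. Conversely $\widehat{K}_{B(\mathfrak{X})}\widetilde U_\lambda P_{\lambda,\omega}$ is supported on the edges going towards $\omega$: if $e$ goes away from $\omega$ and $\cA^{\sharp(m-1)}(e,e')\neq0$, then the non-backtracking path of length $m$ joining $e$ to $e'$ has left the geodesic ray from $o(e)$ to $\omega$ already at its first step and, in a tree, cannot return, so $e'$ too goes away from $\omega$ and $\widetilde U_\lambda P_{\lambda,\omega}(e')=0$, whence $\widehat{K}_{B(\mathfrak{X})}\widetilde U_\lambda P_{\lambda,\omega}(e)=0$. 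Hence, for every oriented edge $e$,
\begin{equation*}
\overline{\iota\widetilde U_\lambda P_{\lambda,\omega}(e)}\,\bigl(\widehat{K}_{B(\mathfrak{X})}\widetilde U_\lambda P_{\lambda,\omega}\bigr)(e)=0 ,
\end{equation*}
the two factors living on complementary sets of edges --- a cancellation that is termwise and therefore unaffected by any truncation of the sum over $B(\mathfrak{X})$.

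Next I would translate this into the claimed identity. Writing $\widehat{\widetilde K}:=(\iota\widetilde U_\lambda)^*\widehat{K}_{B(\mathfrak{X})}\widetilde U_\lambda$, an operator on $\ell^2(V(\mathfrak{X}))$ with $\Gamma$-periodic kernel $\widetilde K$, the left-hand side of the lemma is the trace of $\widehat{\widetilde K}$ over a fundamental domain tested against the Green kernel, $\sum_{x\in\cD,\,y\in\mathfrak{X}}\widetilde K(x,y)\,\Im m\, g_{\lambda+i0}(x^{-1}y)$ (the sum over $y$ being finite since $K\in\cH_m$). By the Plancherel formula \eqref{e:traceanis} (equivalently, by \eqref{e:partialtrace} applied to rank-one operators), the spectral density of $\cA_p$ at $\lambda$ equals $\Im m\,(\lambda+i0-\cA_p)^{-1}=\Im m\, g_{\lambda+i0}(o)\int_{\partial\mathfrak{X}}P_{\lambda,\omega}\otimes\overline{P_{\lambda,\omega}}\,d\nu_\lambda(\omega)$, and $\Im m\, g_{\lambda+i0}(x^{-1}y)$ is its $(x,y)$-entry. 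Substituting this and using cyclicity of the trace per fundamental domain to bring $\widetilde U_\lambda$ and $(\iota\widetilde U_\lambda)^*$ adjacent to the rank-one factors, I would obtain
\begin{equation*}
\sum_{x\in\cD,\,y\in\mathfrak{X}}\widetilde K(x,y)\,\Im m\, g_{\lambda+i0}(x^{-1}y)=\Im m\, g_{\lambda+i0}(o)\int_{\partial\mathfrak{X}}\sum_{e\in\cD_B}\bigl(\widehat{K}_{B(\mathfrak{X})}\widetilde U_\lambda P_{\lambda,\omega}\bigr)(e)\,\overline{\iota\widetilde U_\lambda P_{\lambda,\omega}(e)}\;d\nu_\lambda(\omega),
\end{equation*}
with $\cD_B$ a fundamental domain for $\Gamma$ acting on $B(\mathfrak{X})$. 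By the previous paragraph the inner sum vanishes identically in $\omega$, so the right-hand side is $0$; this is the first assertion. The ``in other words'' reformulation then follows by decomposing $\widehat{\widetilde K}=\widehat{\tilde K_m+\cL^{\lambda}_{m-1}\tilde K_m+\cL^{\lambda}_{m-2}\tilde K_m}$ as in the discussion preceding the statement and renaming $\tilde K_m$ as $K$, giving \eqref{e:recurKK}, $\la K\ra_{\lambda,p}=-\la\cL^{\lambda}_{m-1}K+\cL^{\lambda}_{m-2}K\ra_{\lambda,p}$.

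The real content is the geometric input: that the nearest-neighbour, tree-adapted propagator $\widehat{K}_{B(\mathfrak{X})}$ cannot carry an oriented edge across the horocycle of $\omega$ --- this is where the tree structure and the nearest-neighbour nature of $\cA_p$ enter, and it is exactly what would fail for non-tree-like or non-local models. The only delicate point is bookkeeping: since $P_{\lambda,\omega}\notin\ell^2(\mathfrak{X})$, the displayed identity of the second paragraph has to be read as an equality between its two absolutely convergent sides (all the relevant sums being over $\cD$, over $\cD_B$, or over the finite $y$-range forced by $\cH_m$) rather than as an $\ell^2$ inner-product identity; this is harmless precisely because the cancellation in the first paragraph is termwise. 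Should one prefer not to quote \eqref{e:traceanis} directly, the reduction to the Poisson kernel in the second paragraph can be carried out exactly as in step~(a) of the proof of Proposition~\ref{p:crucialthing}, which already treats $\Gamma$-periodic kernels.
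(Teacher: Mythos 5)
Your argument is correct and is essentially the paper's own proof: you unfold the vertex cutoff $\bbbone_{\cD}$ to an edge fundamental domain, apply \eqref{e:partialtrace} to reduce the claim to the vanishing, for every $\omega$, of $\la \iota\widetilde U_{\lambda}P_{\lambda,\omega}, \bbbone_{\cD'}\widehat{K}_{B(\mathfrak X)}\widetilde U_{\lambda}P_{\lambda,\omega}\ra$, and obtain this from the disjointness of supports (edges going towards versus away from $\omega$), which is exactly the paper's mechanism. Your only additions are harmless elaborations (the explicit tree argument for why $\widehat{K}_{B(\mathfrak X)}\widetilde U_{\lambda}P_{\lambda,\omega}$ vanishes on edges going away from $\omega$, already contained in the proof of Proposition \ref{p:HS2}, and the bijectivity of $K\mapsto\tilde K_m$ for the reformulation).
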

 \begin{proof}
 Let us call $\cD'\subset B(\mathfrak X)$ a fundamental domain for the action of $\Gamma$ on the directed edges.
 We have
  \begin{multline*}\sum_{x, y\in \mathfrak X}  \bbbone_{\cD}(x)\left[(\iota \widetilde U_{\lambda})^* \widehat{K}_{B(\mathfrak{X})} \widetilde U_{\lambda}\right](x, y)\,\,\Im m g_{\lambda + i0}(x^{-1}y)\\=
\sum_{x, y\in \mathfrak X}\left[(\iota \widetilde U_{\lambda})^* \bbbone_{\cD'}\widehat{K}_{B(\mathfrak{X})} \widetilde U_{\lambda}\right](x, y)\,\,\Im m g_{\lambda + i0}(x^{-1}y) .\end{multline*}
By \eqref{e:partialtrace}, what we have to show is that
  $$\int_{\partial \mathfrak{X}} \la P_{\lambda, \omega}, (\iota \widetilde U_{\lambda})^* \bbbone_{\cD'  }\widehat{K}_{B(\mathfrak{X})} \widetilde U_{\lambda}P_{\lambda, \omega}\ra d\nu_{\lambda}(\omega)=0.$$
 
 In fact, we even have the stronger property that
 $$\la P_{\lambda, \omega}, (\iota \widetilde U_{\lambda})^*  \bbbone_{\cD' } \widehat{K}_{B(\mathfrak{X})} \widetilde U_{\lambda}P_{\lambda, \omega}\ra =0$$
 for all $\omega$. Indeed,  $\widehat{K}_{B(\mathfrak{X})} \widetilde U_{\lambda}P_{\lambda, \omega}$ is supported on the set of edges going towards $\omega$, whereas
 $\iota \widetilde U_{\lambda} P_{\lambda, \omega}$ is supported on the set of edges going away from $\omega$.
 \end{proof}

\begin{lemma} \label{l:m=0}

 If $a\in \cH_0^0$ we have for all $T$
$$
Var(a)\leq   4\frac{C(0,\beta)^2}{T^2}\norm{a}^2_{\cH_0^0}
+\norm{K}^2_{sup}\, (o_{T}(1)_{G\To\mathfrak X } + O(T^{-2})) .
 $$

 \end{lemma}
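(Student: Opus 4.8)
The plan is to run, for the function $a$, the exact anisotropic analogue of the argument that produced the bound on $Var(a)$ at the beginning of \S\ref{s:back1}: relate $Var(a)$ to a non\-/backtracking quantum variance on $\cH_1$, apply \eqref{e:estvarnbanisglob} with $m=1$ (available thanks to Proposition \ref{p:HS2}), and then invert the resulting operator with the truncated\-/series device of Lemma \ref{l:useful}, the spectral gap entering through Lemma \ref{l:cheeger}. Throughout I choose the $\psi_j$ to be real, which is possible since $\cA_p$ is a real symmetric operator.

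\textbf{Step 1 (an analogue of Lemma \ref{l:formulas}(i)).} For $a\in\cH_0$ and $\lambda\in\supp\mathrm m$ set, for $y\sim x$,
$$K'_{a,\lambda}(x,y)=\frac{|\zeta_\lambda({\mathrm c}(x,y))|^2}{p_{{\mathrm c}(x,y)}\bigl(1+|\zeta_\lambda({\mathrm c}(x,y))|^2\bigr)}\,a(x),$$
which defines a $C^1$ family $\lambda\mapsto K'_{a,\lambda}\in\cH_1$ in the sense of Remark \ref{r:extension}. Expanding $\la\iota g_j,\widehat{K'_{a,\lambda_j}}_B g_j\ra$ via $g_j=U_{\lambda_j}\psi_j$, regrouping the four sums that appear, and using the eigenfunction identity $\cA_p\psi_j=\lambda_j\psi_j$ together with the algebraic relations (a), (b) of Proposition \ref{p:green} (the weight in $K'_{a,\lambda}$ is chosen precisely so that the two ``cross'' sums recombine into $\cA_p$ acting on $\psi_j$), I expect the identity
$$\la\iota g_j,\widehat{K'_{a,\lambda_j}}_B g_j\ra_{\ell^2(B)}=\la\psi_j,\widehat{(\cL_{\lambda_j}a)}_G\,\psi_j\ra_{\ell^2(V)},$$
where $\cL_\lambda=(\lambda-T_+(\lambda))I-\cW_\lambda$ acts on $\cH_0$, $T_+(\lambda)=\sum_{c=1}^{q+1}\frac{p_c\zeta_\lambda(c)}{1+|\zeta_\lambda(c)|^2}$, and $\cW_\lambda$ is the (complex) weighted adjacency operator with weight $\frac{p_c\overline{\zeta_\lambda(c)}}{1+|\zeta_\lambda(c)|^2}$ on an edge of label $c$. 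In the isotropic case $K'_{a,\lambda}(x,y)=a(x)$ and $\cL_\lambda$ becomes a nonzero multiple of $I-\cA_p$, recovering Lemma \ref{l:formulas}(i). This step is a (somewhat lengthy) direct computation.

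\textbf{Step 2 (quantitative invertibility of $\cL_\lambda$).} Since $\cW_\lambda\bbbone_0=\overline{T_+(\lambda)}\bbbone_0$ and $a$ has mean zero, $\cL_\lambda$ maps $\cH_0^0$ into itself; what is needed is that $\cL_\lambda|_{\cH_0^0}$ is invertible with $\norm{\cL_\lambda^{-1}}_{\cH_0^0\To\cH_0^0}\le C(0,\beta)$, uniformly for $\lambda\in\supp\mathrm m$ (the complement of $\supp\mathrm m$ contributes negligibly by \eqref{e:consBST}). Here the (EXP) hypothesis enters: applying Lemma \ref{l:cheeger} with $m=0$ to the stochastic, self\-/adjoint operator $\cA_p$ shows that $\cA_p$ has a spectral gap controlled quantitatively by $\beta$ (and $q$, $\min_jp_j$), so $I-\cA_p$ is boundedly invertible on $\cH_0^0$; one then has to transfer this to $\cL_\lambda$, either by exhibiting $\cL_\lambda$ as a controlled perturbation of a nonzero multiple of $I-\cA_p$, or by bounding $\cL_\lambda^*\cL_\lambda$ from below on $\cH_0^0$ using the relations of Proposition \ref{p:green}. \textbf{This is the main obstacle}: $\cL_\lambda$ is not self\-/adjoint (the weights of $\cW_\lambda$ are complex), so one cannot simply invoke a spectral theorem, and the control must be uniform in $\lambda$ down to the edges of $\supp\mathrm m$.

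\textbf{Step 3 (conclusion, as in \S\ref{s:back1}).} By Step 1, $Var\bigl(\lambda\mapsto\cL_\lambda a\bigr)=Var_{nb}(K'_{a,\cdot})$, and since $\norm{K'_{a,\lambda}}_{\cH_1}^2\le C\norm a_{\cH_0}^2$ and $\norm{K'_{a,\lambda}}_{sup}\le C\norm a_{sup}$ uniformly on $\supp\mathrm m$, the $m=1$ case of \eqref{e:estvarnbanisglob} gives
$$Var\bigl(\lambda\mapsto\cL_\lambda a\bigr)\le\frac{\tilde C(\varepsilon,\beta)}{n+1}\,\norm a_{\cH_0}^2+\norm a_{sup}^2\bigl(o_{n,\delta,\varepsilon}(1)_{G\To\mathfrak X}+nO(\delta)+O(\varepsilon)\bigr).$$
By Step 2, $a=\cL_{\lambda_j}^{-1}\bigl(\cL_{\lambda_j}a\bigr)$ for every $j$; approximating $\cL_\lambda^{-1}$ on $\cH_0^0$ by a polynomial $\cL_{\lambda,T}^{-1}$ in $\cL_\lambda$ of degree $O(T)$ with $\cH_0^0$\-/error $O(C(0,\beta)/T)$ and $\ell^1\To\ell^\infty$ norm polynomial in $T$ (the device of Lemma \ref{l:useful}), splitting $Var(a)\le 2Var(\lambda\mapsto\cL_{\lambda,T}^{-1}\cL_\lambda a)+2Var(\lambda\mapsto(\cL_\lambda^{-1}-\cL_{\lambda,T}^{-1})\cL_\lambda a)$, estimating the first term by the displayed inequality (with an extra power of $T$ on the $\norm a_{sup}^2$\-/part) and the second by $Var\le\norm{\cdot}_{HSN}^2$ together with Proposition \ref{p:norms}, and finally taking $\varepsilon=\delta=n^{-1}T^{-4}$ and $n=n(T)$ large so that the $1/(n+1)$\-/term is absorbed into the $1/T^2$\-/term, one arrives at
$$Var(a)\le 4\frac{C(0,\beta)^2}{T^2}\,\norm a_{\cH_0^0}^2+\norm a_{sup}^2\bigl(o_T(1)_{G\To\mathfrak X}+O(T^{-2})\bigr),$$
which is the asserted estimate (the $\norm{K}_{sup}$ in the statement being $\norm a_{sup}$).
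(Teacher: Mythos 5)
Your proposal does not close the argument: the decisive step is left open. In Step 2 you must invert, uniformly in $\lambda$ over $\supp\mathrm m$, the operator $\cL_\lambda=(\lambda-T_+(\lambda))I-\cW_\lambda$ whose off-diagonal weights $\frac{p_c\overline{\zeta_\lambda(c)}}{1+|\zeta_\lambda(c)|^2}$ are genuinely complex, so $\cL_\lambda$ is not self-adjoint; as you yourself note, this is ``the main obstacle'', and neither of the two routes you sketch (perturbation of a multiple of $I-\cA_p$, or a lower bound on $\cL_\lambda^*\cL_\lambda$) is carried out. This matters not only for invertibility: the Lemma \ref{l:useful} device needs decay of the operator norms of the powers of the relevant transfer operator on the mean-zero space, which for a non-normal family is not a consequence of a spectral-radius or gap statement and would require a separate uniform argument down to the edges of $\supp\mathrm m$. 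The identity of Step 1 is also only ``expected'' rather than verified, but even granting it, the proof is incomplete.

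The paper avoids this obstacle by a different reduction. It observes that the same non-backtracking variance bound \eqref{e:estvarnbanisglob} holds for the functions $\widetilde g_j$ built from the opposite boundary value $\zeta_{\lambda_j-i0}$, and it tests the kernel $K_\lambda(x,y)=a(x)|\zeta_\lambda(x,y)|^2/p(x,y)$ against \emph{both} families, taking the difference
$\la\iota g_j,\widehat K_{\lambda_j,B}g_j\ra-\la\iota\widetilde g_j,\widehat K_{\lambda_j,B}\widetilde g_j\ra
=2\sum_{x}|\psi_j(x)|^2[(\cA_{\lambda_j}-W_{\lambda_j})a](x)$,
where $\cA_\lambda$ has the \emph{real} weights $\Im m\,\zeta_\lambda(x,y)$ and $W_\lambda$ is the constant row sum. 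This produces a real symmetric operator whose image is $\cH_0^0$ and which, arguing as in Lemma \ref{l:cheeger}, has a uniform spectral gap under (EXP); the Lemma \ref{l:useful} machinery then applies verbatim and yields the stated bound with $\varepsilon=T^{-4}$, $\delta=n^{-1}T^{-4}$, $n=n(T)$. If you want to salvage your single-variance approach you would have to supply the missing uniform bounds on $\cL_\lambda^{-1}$ and on the powers entering the truncated inverse; the conjugate-branch difference trick is precisely what makes this unnecessary.
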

 This lemma, in particular, proves Theorem \ref{t:mainanis}.
\begin{proof}(of Lemma \ref{l:m=0})

In the definition of $Var^I_{nb}$ we used the functions
$$g_j(e)=\frac{p(e)}{\zeta_{\lambda_j+i0}(e)}\left(\psi_j(t(e))-\zeta_{\lambda_j+i0}(e)\psi_j(o(e))\right).$$
Needless to say, we would have obtained the same bound \eqref{e:nbtanis} for $\widetilde{Var}^I_{nb}$, defined using
$$\widetilde{g}_j(e)=\frac{p(e)}{\zeta_{\lambda_j-i0}(e)}\left(\psi_j(t(e))-\zeta_{\lambda_j-i0}(e)\psi_j(o(e))\right)$$
(if the $\psi_j$ are real-valued then $\widetilde{g}_j$ is the complex conjugate of $g_j$ but we do not necessarily want to impose that).

If $a\in \cH_0$, define $K_\lambda\in \cH_1$ by
$$K_\lambda(x, y)=\frac{a(x) |\zeta_\lambda (x, y)|^2}{p(x, y)}.$$
A direct calculation yields the formula
$$\la \iota g_j, \hat{K}_{\lambda_j, B} g_j\ra -\la \iota \widetilde g_j, \hat{K}_{\lambda_j, B} \widetilde g_j\ra= 2
\sum_{x\in V}|\psi_j(x)|^2 [(\cA_{\lambda_j}- W_{\lambda_j})a](x)$$
where $\cA_{\lambda_j}$ is the operator 
$$\cA_{\lambda_j}f (x)=\sum_{y\sim x}\Im m \zeta_{\lambda_j}(x, y) f(y)$$
and $W_{\lambda_j}$ is the constant
$W_{\lambda_j}= \sum_{y\sim x}\Im m \zeta_{\lambda_j}(x, y)$ (note that this is independent of $x$).
Thus \eqref{e:estvarnbanisglob} implies that
\begin{equation}
Var((\cA_\lambda-W_\lambda)a)\leq \frac{\tilde C(1, {\varepsilon}, \beta)}{ (n+1)}\norm{a}_{\cH}^2 +\norm{a}_{sup}^2 \, (o_{n, \delta, \varepsilon}(1)_{G\To\mathfrak X } + n c(1)^2O(\delta)+ c(1)^2O(\varepsilon)) .
\end{equation}
For every $\lambda$, the image of $\cA_\lambda-W_\lambda$ is $\cH^0_0$. Besides, arguing like in Lemma \ref{l:cheeger}, we see that under the (EXP) condition, all the $\cA_\lambda-W_\lambda$ have a uniform spectral gap. 
 
 Arguing like in Lemma \ref{l:useful}, we obtain that for all $n$ and $T$,
 for $a\in \cH_0^0$, 
 \begin{multline*}
Var(a)\leq  2\frac{C(0,\beta)^2 \tilde C(1, {\varepsilon}, \beta) }{n}\norm{a}^2_{\cH^0_0}\\ +2\frac{C(0,\beta)^2}{T^2}\norm{a}^2_{\cH_0^0}
+2T^2\norm{a}^2_{sup}\, (o_{n, \delta, \varepsilon}(1)_{G\To\mathfrak X } + n c(1)^2O(\delta)+ c(1)^2O(\varepsilon)) .
 \end{multline*}
 
To prove the lemma we take $\varepsilon =T^{-4}$, $\delta=n^{-1 }T^{-4}$, and $n=n(T)$ large enough so that 
$\frac{  \tilde C(1, {\varepsilon}, \beta) }{n}\leq \frac{ 1}{T^2}$. \end{proof}

\begin{proof} (of Proposition \ref{p:recursion})

We use \eqref{e:recursion} and \eqref{e:recurKK} to write
\begin{multline*}Var(K -\la K\ra_{\lambda, p} \bbbone_0)\\=Var\left(K+\cL_{m-1} K  + \cL_{m-2} K -(\cL_{m-1} K  + \cL_{m-2} K) +\la \cL_{m-1} K  + \cL_{m-2} K\ra_{\lambda, p} \bbbone_0\right)\\
\leq 2Var \left(K+\cL_{m-1} K  + \cL_{m-2} K\right) + 2Var\left(\cL_{m-1} K  + \cL_{m-2} K-\la \cL_{m-1} K  + \cL_{m-2} K\ra_p \bbbone_0\right)\\
\leq \frac{\tilde C(m, {\varepsilon}, \beta)}{ (n+1)}\norm{K}_{\cH}^2 +\norm{K}_{sup}^2 \, (o_{n, m, \delta, \varepsilon}(1)_{G\To\mathfrak X } + n c(m)^2O(\delta)+ c(m)^2O(\varepsilon))\\
+ 2Var\left(\cL_{m-1} K  + \cL_{m-2} K-\la \cL_{m-1} K  + \cL_{m-2} K\ra_p \bbbone_0\right).\end{multline*}
This allows to prove Proposition \ref{p:recursion} by induction on $m$.
\end{proof} 
 
\bibliographystyle{plain}
\bibliography{biblio-lemasson} 

\end{document}